\newcommand{\pkg}[1]{{\normalfont\fontseries{b}\selectfont #1}}
\numberwithin{equation}{section}
\theoremstyle{definition}
\newtheorem{Def}{Definition}[section]
\newtheorem{Alg}{Algorithm}
\newtheorem{Not}[Def]{Notation}
\newtheorem{Ass}{Assumption}
\theoremstyle{plain}
\newtheorem{Thm}[Def]{Theorem}
\newtheorem{Prop}[Def]{Proposition}
\newtheorem{Lem}[Def]{Lemma}
\newtheorem{Claim}[Def]{Claim}
\theoremstyle{remark}
\newtheorem{Cor}[Def]{Corollary}
\newtheorem{Rem}[Def]{Remark}
\newtheorem{Ex}[Def]{Example}
\begin{document}

\author{%
\nofnmark{}				
Saskia Becker
\footnote{Weierstrass Institute \\
Mohrenstr. 39\\ 10117 Berlin \\ Germany\\
E-Mail: saskia.becker@wias-berlin.de}
}
\title{The Propagation-Separation Approach: \\ Consequences of model misspecification}

\nopreprint{1877}               	
\selectlanguage{english}		
\date{\today}			
\subjclass[2010]{62G05
}	
\keywords{Structural adaptive smoothing,
	 	  Propagation,
		  Separation,
		  Local likelihood,
		  Exponential families,
		  Model misspecification}

%
\begin{abstract}
The article presents new results on the Propagation-Separation Approach by \citet{PoSp05}. 
This iterative procedure provides a unified approach for nonparametric estimation, supposing a local parametric model.
The adaptivity of the estimator ensures sensitivity to structural changes.
Originally, an additional memory step was included into the algorithm, where most of the theoretical properties were based on.
However, in practice, a simplified version of the algorithm is used, where the memory step is omitted.
Hence, we aim to justify this simplified procedure by means of a theoretical study and numerical simulations.
In our previous study \citep{BecMat2013}, we analyzed the simplified Propagation-Separation Approach, 
supposing piecewise constant parameter functions with sharp discontinuities. 
Here, we consider the case of a misspecified model.
\end{abstract}

\maketitle

\thispagestyle{empty}

\section{Introduction}

In statistics, local modeling is one of the most commonly used approaches for nonparametric estimation, see, for instance, \citet{MR1391963} and \citet{MR1383587}.
Local models can be described by weights which depend on the explanatory variables (design) only.
An alternative approach for local modeling is based on weighting schemes that depend (additionally) on the response variables (observations).
This helps to avoid blurring at discontinuities.
As it turned out, the comparison of noisy observations in single points suffers from a lack of robustness, see \citet{buades2005} and the references therein.
Therefore, the Propagation-Separation Approach by \citet{PoSp05} uses a multiscale approach with iteratively updated weights that benefit from the previously aggregated information about the underlying structure.
This enables the detection of discontinuities. 
Within homogeneous regions the method yields similar results as non-adaptive smoothing.

The Propagation-Separation Approach relates to Lepski's method \citep{MR1091202, MR2240642}.
Furthermore, it extends the Adaptive Weights Smoothing (AWS) procedure \citep{PoSp00}, whose theoretical properties were restricted to additive Gaussian noise. 
In contrast, the Propa\-gation-Sepa\-ration Approach supposes a local likelihood model.
Hence, it is applicable to a large variety of problems.
It has been successfully applied in the context of 
image denoising \citep{poas, MR2853730, Li2012, PoSp08, Tabelow08}, 
time series analysis \citep{PoDi08}, density estimation, and classification \citep{PoSp05}, for example.
Despite the practical use of this method, only few properties are known.
The aim of this article is to provide a better understanding of the Propagation-Separation Approach, the involved parameters, its theoretical properties, and its behavior in practice.

For the verification of theoretical properties, we suppose a local exponential family model. 
This provides an explicit expression of the Kullback-Leibler divergence on which the algorithm is based.
The model includes, for instance, the Gaussian regression and the inhomogeneous Ber\-noulli, exponential, and Poisson models \citep[\S 2]{PoSp05}.
In practice, the procedure only requires a metric on the design space and the existence of an appropriate approximation of the Kullback-Leibler divergence. 

For the sake of computational simplicity the algorithm is formulated with respect to a local constant model. 
However, it can be generalized to local linear and local polynomial models as well. 
In our previous study \citep{BecMat2013}, we concentrated on the case of piecewise constant functions with sharp discontinuities. 
Here, we will analyze consequences of a misspecified model.
As in \citep{BecMat2013}, we omit the additional memory step and avoid Assumption (S0) in \citep{PoSp05}.
Assumption (S0) is problematic since it requires the data-driven weights of the estimator to be statistically independent of the observations. 
The memory step was included into the algorithm in order to ensure a certain stability of estimates.
However, in applications of the Propagation-Separation Approach it has been omitted.
As it turned out, its practical use is questionable, while the algorithm provides the desired behavior even without the memory step.

The outline is as follows. 
First, we will recall the local exponential family model and the original algorithm of the Propagation-Separation Approach.
Then, we will introduce a parameter choice strategy for the adaptation bandwidth that allows the verification of propagation and a certain stability of estimates for functions with bounded variability within well-separated regions.
Moreover, we will define an \emph{associated step function} which approximates the estimation function for sufficiently large location bandwidths.
In Section~\ref{sec:inhomPCPractice}, we will provide further details concerning the practical application of the newly introduced inhomogeneous propagation condition.
Our subsequent numerical simulations illustrate the formation of the associated step function.
All examples suggest the convergence of the Propagation-Separation Approach.
However, this property could not be proven theoretically for reasons that we will discuss in Section~\ref{sec:PS-discussion}. 
In Appendix~\ref{app:auxiliary}, we will recall some auxiliary results by \citet{PoSp05} and \citet{BecMat2013}.
Longer proofs will be given in Appendix~\ref{app:proofs}.

\section{Model and methodology}\label{sec:Model}

We assume a local parametric model, more precisely the local likelihood model.
This general setting enables a unified approach to a broad class of nonparametric estimation problems.

\begin{Not}[Setting]\label{not:setting}
Let $\mathcal{P} := \lbrace \mathbb{P}_{\theta} \rbrace_{\theta \in \Theta}$ denote a parametric family of probability distributions with a convex parameter set~$\Theta \subseteq \mathbb{R}$, where $(\Omega, \mathcal{F}, \mathbb{P}_{\theta})$ forms, for every $\theta \in \Theta$, a probability space with dominating $\sigma$-finite measure~$\mathbb{P}$. 
We consider a metric space~$\mathcal{X}$ with metric~$\delta$, and a measurable observation space~$(\mathcal{Y}, \mathcal{B})$, where $\mathcal{Y} \subseteq \mathbb{R}$ and~$\mathcal{B}$ denotes the Borel algebra.
On the deterministic design $\lbrace X_i \rbrace_{i=1}^n \subseteq \mathcal{X}$ with $n \in \mathbb{N}$, we observe the statistically independent random variables~$\lbrace Y_i \rbrace_{i=1}^n$, where $Y_i \sim \mathbb{P}_{\theta(X_i)} \in \mathcal{P}$ and $Y_i(\omega) \in \mathcal{Y}$, $\omega \in \Omega$, for every $i \in \lbrace 1,...,n \rbrace$.
Then, we aim to estimate, the unknown parameter function $\theta: \mathcal{X} \to \Theta \subseteq \mathbb{R}$ on the design~$\lbrace X_i \rbrace_{i=1}^n$, that is~$\lbrace \theta_i \rbrace_{i=1}^n$ with $\theta_i := \theta(X_i)$.
\end{Not}

For the sake of simplicity, we assume the design to be known and the observation space as well as the parameter set to be one-dimensional, that is $\mathcal{Y}, \Theta \subseteq \mathbb{R}$.
Basically, the Propagation-Separation Approach can be applied on any measurable vector space~$\mathcal{Y} \subseteq M$ with $Y_i \sim \mathbb{P}_{\theta(X_i)}$ for every $i \in \lbrace 1,...,n \rbrace$ and $\theta: \mathcal{X} \to \Theta \subseteq M$, where~$M$ is endowed with a possibly asymmetric distance function. 

The algorithm is based on the Kullback-Leibler divergence. 
This has an explicit expression under the following assumption, which was supposed in \citep{BecMat2013, PoSp05}, as well. 
A list of parametric families that satisfy this assumption is given in \citep[Table 1]{BecMat2013}. 
We use the common notation
\[
	C^2( \Theta, \mathbb{R} ) := 
	\left\{ f: \Theta \to \mathbb{R}: \text{ the first and second derivative of } f \text{ exist and are continuous} \right\}.
\]

\begin{Ass}[Local exponential family model]\label{A1}
The parametric family $\mathcal{P} = \lbrace \mathbb{P}_{\theta} \rbrace_{\theta \in \Theta}$ in Notation~\ref{not:setting} is an exponential family.
More precisely, there are two functions $C,B \in C^2\left( \Theta, \mathbb{R} \right)$, a non-negative function $p: \mathcal{Y} \to [0, \infty)$, and a sufficient statistic $T: \mathcal{Y} \to \mathbb{R}$ such that
\[
	p(y, \theta) := d \mathbb{P}_{\theta} / d \mathbb{P} (y) = p(y) \exp \left[ T(y) C(\theta) - B(\theta) \right], \qquad \theta \in \Theta,
\]
where~$C$ is strictly monotonic increasing. 
The parameter~$\theta$ satisfies $B'(\theta) = \theta \, C'(\theta)$,
\begin{equation}\label{eq:A1}
	\int p(y, \theta) \mathbb{P}(dy) = 1, \quad \text{ and } \quad 
	\mathbb{E}_{\theta} \left[ T(Y) \right] = \int T(y) p(y, \theta) \mathbb{P}(dy) = \theta.
\end{equation}
\end{Ass}

We recall the notions of the Fisher information
\[
	I (\theta) :=  - \mathbb{E} \left[ \frac{ \partial^2 }{ \partial \theta^2 } \log p(y, \theta) \right], \quad \theta \in \Theta,
\] 
and of the Kullback--Leibler divergence
\[
	\mathcal{KL}(\theta,\theta') 
	:=\mathcal{KL}\left( \mathbb{P}_{\theta},\mathbb{P}_{\theta'} \right)  
	:= \int \ln \left( \frac{ d( \mathbb{P}_{\theta} ) }{ d( \mathbb{P}_{\theta'} ) } \right) \mathbb{P}_{\theta} (dy), \quad \theta, \theta' \in \Theta.
\]

The Propagation-Separation Approach estimates iteratively the unknown parameter function~$\theta(.)$. 
Here, we consider its simplest version, supposing the parameter function~$\theta(.)$ to be piecewise constant with sharp discontinuities.
The pointwise estimator equals a weighted mean of the observations. 
In each iteration step~$k$ the adaptive weights are readjusted using the previously aggregated information. 
More precisely, the adaptive weights are defined as a product of two kernels. 
The \emph{location kernel} describes for every point $X_i \in \mathcal{X}$ the increasing neighborhood~$U_i^{(k)} \subseteq \mathcal{X}$ under consideration, leading to an advancing variance reduction. 
The \emph{adaptation kernel} uses the Kullback-Leibler divergence for a comparison of the pointwise parameter estimates from the previous iteration step.
This avoids blurring at structural borders.
An additional \emph{memory step} ensures a certain stability of estimates.
In each iteration step, the \emph{memory penalty} compares, for every design point, the new estimate with the previous one. 
In case of a significant difference, the new estimate is relaxed, replacing it by a value between the two estimates. 
The memory step provides a smooth transition of the pointwise estimates during iteration. 
We emphasize that the Propagation-Separation Approach does not use adaptive parameters. 
It is adaptive in the sense that the returned estimator function is based on structure-adaptive weights that describe the homogeneity regions of the unknown parameter function~$\theta(.)$. 
See Algorithm~\ref{algorithmMS}, stated below, for a formal description and \citet{PoSp05} and \citet{BecMat2013} for more details.

\begin{Not}\label{not:algorithm}\hspace{1pt}
Suppose Assumption~\ref{A1}.
We fix three non-increasing kernel functions
\[
	K_{\mathrm{loc}}, K_{\mathrm{ad}}, K_{\mathrm{me}}: [0, \infty) \to [0,1]
\]
with support~$[0,1)$, satisfying $K_{\cdot}(0) = 1$. 
These kernels will be used for location, for adaptation, and for the memory step, respectively.
Moreover, let $\lambda>0$ denote the bandwidth of the adaptation kernel, and let~$\lbrace h^{(k)} \rbrace_{k=0}^{k^*}$ be an increasing sequence of pre-specified location bandwidths with $h^{(0)} > 0$.
For the memory step, we choose the minimal memory effect $\eta_0 \in [0,1)$ and the memory bandwidth $\tau > 0$.
Then, we call the weighted mean
\begin{equation}\label{eq:non-adEst}
	\overline{\theta}_i^{(k)} := \sum_{j=1}^n \overline{w}_{ij}^{(k)} T(Y_j) / \overline{N}_i^{(k)},
\end{equation}
 the non-adaptive estimator of~$\theta_i$, where $\overline{w}_{ij}^{(k)} := K_{\mathrm{loc}} \left( \delta(X_i, X_j) / h^{(k)} \right)$ and $\overline{N}_i^{(k)} := \sum_j \overline{w}_{ij}^{(k)}$.
\end{Not}

\begin{Alg}[The original Propagation-Separation Algorithm]\label{algorithmMS}\hspace{1 pt}
\begin{compactenum}
\item Input parameters: Sequence of bandwidths~$\lbrace h^{(k)} \rbrace_{k=0}^{k^*}$, adaptation bandwidth~$\lambda$,\\
the memory bandwidth~$\tau$, and the minimal memory effect~$\eta_0$.
\item Initialization: 
$\hat{\theta}_i^{(0)} := \overline{\theta}_i^{(0)}$ and $\hat{N}_i^{(0)} := \overline{N}_i^{(0)}$ for all $i \in \lbrace 1,...,n \rbrace$,~$k:= 1$. 
\item Iteration: Calculate, for every $i,j = 1,...,n$, \\
the non-adaptive weights $\overline{w}_{ij}^{(k)} := K_{\mathrm{loc}} \left( \delta(X_i, X_j) / h^{(k)} \right), \\$
the statistical penalty $s_{ij}^{(k)} := \hat{N}_i^{(k-1)} \mathcal{KL}(\hat{\theta}_i^{(k-1)}, \hat{\theta}_j^{(k-1)})$, \\
the adaptive weights $\tilde{w}_{ij}^{(k)} := \overline{w}_{ij}^{(k)} \cdot K_{\mathrm{ad}} \left( s_{ij}^{(k)} / \lambda \right)$, \\
the sum of the adaptive weights $\tilde{N}_i^{(k)} := \sum_j \tilde{w}_{ij}^{(k)}$, \\
and the adaptive estimator
\[
	\tilde{\theta}_i^{(k)} := \sum_{j=1}^n \tilde{w}_{ij}^{(k)} T(Y_j) / \tilde{N}_i^{(k)}.
\]
\item Memory step: Calculate, for every $i,j = 1,...,n$, \\
the sum of the non-adaptive weights $\overline{N}_i^{(k)} := \sum_j \overline{w}_{ij}^{(k)}$, \\
the memory penalty $m_i^{(k)} := \overline{N}_i^{(k)} \mathcal{KL}(\tilde{\theta}_i^{(k)}, \hat{\theta}_i^{(k-1)})$, \\
the relaxation weight $\eta_i^{(k)} := (1 - \eta_0) K_{\mathrm{me}} \left( m_i^{(k)} / \tau \right)$, \\
the relaxed estimator
\[
	\hat{\theta}_i^{(k)} := \eta_i^{(k)} \tilde{\theta}_i^{(k)} + (1 - \eta_i^{(k)}) \hat{\theta}_i^{(k-1)},
\]
and the relaxed sum of the adaptive weights $\hat{N}_i^{(k)} := \eta_i^{(k)} \tilde{N}_i^{(k)} + (1 - \eta_i^{(k)}) \hat{N}_i^{(k-1)}$.
\item Stopping: Stop if~$k = k^*$, and return~$\hat{\theta}_i^{(k^*)}$ for all $i \in \lbrace 1,...,n \rbrace$, \\
otherwise increase~$k$ by~$1$.
\end{compactenum}
\end{Alg}

We emphasize that the the data-driven statistical penalty~$s_{ij}^{(k)}$ makes the adaptive weights~$\tilde{w}_{ij}^{(k)}$, their sum~$\tilde{N}_i^{(k)}$, and the relaxed sum~$\hat{N}_i^{(k)}$ random.
In contrast, we notice that the input parameters, the non-adaptive weights~$\overline{w}_{ij}^{(k)}$, and their sum~$\overline{N}_i^{(k)}$ are deterministic.
Here, we concentrate on a simplified procedure, where the memory step is omitted.

\begin{Not}[Simplified algorithm]\label{algorithm}
In the rest of this article, we refer to Algorithm~\ref{algorithmMS} as the \emph{original algorithm} with aggregated estimates~$\lbrace \hat{\theta}_i^{(k)} \rbrace_{i,k}$.
The formal choice $\eta_i^{(k)} := 1$ for every $i \in \lbrace 1,...,n \rbrace$ and each $k \in \lbrace 1,..., k^* \rbrace$ omits the memory step, leading to the \emph{simplified algorithm} with adaptive estimates~$\lbrace \tilde{\theta}_i^{(k)} \rbrace_{i,k}$.
\end{Not}

The choice of the input parameters is crucial for the behavior of the algorithm.
Since the initial estimator~$\tilde{\theta}_i^{(0)}$ is non-adaptive the corresponding location bandwidth~$h^{(0)}$ should be small. 
A choice of~$h^{(0)}$ such that $\overline{w}_{ij}^{(0)} = 0$ for all $X_i \neq X_j$ avoids blurring at the boundaries of the homogeneity regions. 
The subsequent bandwidths~$\lbrace h^{(k)} \rbrace_{k=1}^{k^*}$ should be increasing. 
For instance, they may ensure a constant variance reduction of the estimator \citep{poas} or an exponential growth \citep{PoSp05} of the mean number of design points $X_j \in \mathcal{X}$ with non-zero weights $\overline{w}_{ij}^{(k)} \neq 0$, $X_i \in \mathcal{X}$. 
The maximal number of iterations~$k^*$ specifies the maximal location bandwidth~$h^{(k^*)}$. 
This is mainly bounded by the available computation time. 
However, in the case of model misspecification the resulting estimation bias can be reduced by an accurate stopping criterion as we will discuss in \S~\ref{sec:PS-future}.

The adaptation bandwidth~$\lambda$ specifies the amount of adaptation. 
For $\lambda \to \infty$ the algorithm results in non-adaptive estimates as defined in Equation~\eqref{eq:non-adEst} (over-smoothing), while small values lead to adaptation to noise (under-smoothing).
\citet[\S 3.5]{PoSp05} introduced a choice of~$\lambda$ by a strategy, called the \emph{propagation condition}. 
This recommends to use the smallest value for~$\lambda$ that provides under homogeneity a similar behavior as non-adaptive smoothing. 
We use a revised formulation that was introduced in \citep[\S 2.3]{BecMat2013}.
It is based on the function~$\mathfrak{Z}_{\lambda}: \lbrace 0,...,k^* \rbrace \times (0, 1) \times \Theta \times \lbrace 1,...,n \rbrace \to [0, \infty)$ given by
\[
	\mathfrak{Z}_{\lambda}(k, p; \theta, i ) := \inf \left\{ z > 0: \mathbb{P} \left( \overline{N}_i^{(k)} \mathcal{KL}(\tilde{\theta}_i^{(k)}(\lambda), \theta) > z \right) \leq p \right\},
\]
where $\lambda > 0$ is fixed.
Here, $\tilde{\theta}_i^{(k)}(\lambda)$ denotes the adaptive estimator in the position $X_i \in \mathcal{X}$,
resulting from the simplified algorithm in Notation~\ref{algorithm} with the adaptation bandwidth~$\lambda$ and observations $Y_j \overset{\text{iid}}{\sim} \mathbb{P}_{\theta}$ for all $j \in \lbrace 1,...,n \rbrace$ with~$\theta(.) \equiv \theta$.

\begin{Def}[Homogeneous propagation condition]\label{def:propCond}
We say that the adaptation bandwidth $\lambda > 0$ is chosen in accordance with the homogeneous propagation condition at level $\epsilon > 0$ for $\theta \in \Theta$ if the function~$\mathfrak{Z}_{\lambda}(., p; \theta, i)$ is non-increasing for all $p \in (\epsilon, 1)$ and every $i \in \lbrace 1,...,n \rbrace$.
\end{Def}

The study in \citep[\S 4.1]{BecMat2013} points out that the choice of~$\lambda$ by the homogeneous propagation condition is invariant with respect to the underlying parameter~$\theta$ for the Gaussian and the exponential distribution and, as a consequence, for the log-normal, Rayleigh, Weibull, and Pareto distributions. 
Else, some parameter~$\theta^*$ which yields a sufficiently large choice should be identified, such that the homogeneous propagation condition holds for all (unknown) parameters~$\theta_i$ with $i \in \lbrace 1,..., n \rbrace$ as well. 
Hence, the homogeneous propagation condition allows a choice of~$\lambda$ by simulations and hence independent of the data at hand. 
We refer the reader to \citet[\S 3.4 \& 3.5]{PoSp05} and \citet[\S 2.3 \& 4.1]{BecMat2013} for more details.

\section{Theoretical properties under model misspecification}\label{sec:Theory}

We consider the local exponential family model in Assumption~\ref{A1} (page~\pageref{A1}) and the simplified procedure in Notation~\ref{algorithm}.
In \citep[\S 2.4]{BecMat2013}, its general behavior was illustrated on two examples. There, we observed the following. 
For piecewise constant parameter functions~$\theta(.)$, the algorithm detects sufficiently sharp discontinuities providing a consistent estimation function. 
For small discontinuities this separation property fails. 
This leads to a bounded estimation bias since the algorithm treats non-separated homogeneity regions as one yielding similar results as non-adaptive smoothing. 
For piecewise smooth parameter functions~$\theta(.)$ the algorithm results in a step function which is mainly determined by the local smoothness of the parameter function~$\theta(.)$ and the adaptation bandwidth~$\lambda$. 
An appropriate stopping criterion may reduce the corresponding estimation bias. 

Our theoretical study in \citep{BecMat2013} focused on piecewise constant functions with sharp discontinuities. 
In this article, we aim to verify the mentioned heuristic observations for the case of model misspecification. 
First, we will generalize the propagation condition for the choice of the adaptation bandwidth to inhomogeneous settings with bounded variability. 
Then, the propagation and the stability property will follow for parameter functions with (piecewise) bounded variability in a similar manner as under (piecewise) homogeneity, see \citet[\S 3.1 \& 3.3]{BecMat2013}. 
Furthermore, we will introduce a specific step function which approximates the adaptive estimates, resulting from the simplified Propagation-Separation Approach.

\subsection{Inhomogeneous propagation condition}\label{sec:inhomPropCond}

The homogeneous propagation condition in Definition \ref{def:propCond} bounds the probability of adaptation to noise, supposing a constant parameter function. 
In \citep[\S 3]{PoSp05}, this was used to verify propagation and a certain stability of estimates for (piecewise) constant parameter functions. 
In order to extend these properties to (piecewise) bounded parameter functions, we will formulate an inhomogeneous propagation condition.
Like before under homogeneity, we will consider an artificial data set.
Then, we aim to ensure a similar behavior of the algorithm as for non-adaptive estimation for every locally varying function which satisfies a pre-specified variability bound.

Our inhomogeneous propagation condition is motivated by Theorem~\ref{thm:PS22} in Appendix~\ref{app:auxiliary}.
This can be considered as the inhomogeneous analog of Theorem~\ref{thm:PS21}, where the homogeneous propagation condition was based on. 
For the non-adaptive estimator, Theorem~\ref{thm:PS22} establishes the exponential bound 
\[
	\mathbb{P} (\overline{N}_i^{(k)} \mathcal{KL}(\overline{\theta}_i^{(k)}, \mathbb{E} \overline{\theta}_i^{(k)}) > z ) \leq 2 e^{-z/\varkappa^2} + \breve{p}_{\varkappa}
\] 
for all $z > 0$, $i \in \lbrace 1,...,n \rbrace$, and $k \in \lbrace 0,..., k^* \rbrace$, where we refer the reader to the Notations~\ref{not:varkappa} and~\ref{not:pKappa2} for the definitions of $\varkappa \geq 1$ and~$\breve{p}_{\varkappa} \in [0,1]$. 
This result implies that the Kullback-Leibler divergence $\mathcal{KL} (\overline{\theta}_i^{(k)}, \mathbb{E} \overline{\theta}_i^{(k)})$ decreases, in probability, at least with rate~$ \overline{N}_i^{(k)}$. 
We observe that
\[
	\mathbb{E} \overline{\theta}_i^{(k)} 
	= \sum_{j=1}^n \mathbb{E} \left[ \overline{w}_{ij}^{(k)} T(Y_j) / \overline{N}_i^{(k)} \right]
	= \sum_{j=1}^n \overline{w}_{ij}^{(k)} \theta_j / \overline{N}_i^{(k)},
\]
whereas
\[
	\mathbb{E} \tilde{\theta}_i^{(k)}
	= \sum_{j=1}^n \mathbb{E} \left[ \tilde{w}_{ij}^{(k)} T(Y_j) / \tilde{N}_i^{(k)} \right]
	\neq \sum_{j=1}^n \tilde{w}_{ij}^{(k)} \theta_j / \tilde{N}_i^{(k)}.
\]
Both sums can be considered as an adaptive analog of~$\mathbb{E} \overline{\theta}_i^{(k)}$. 
Since the latter is much easier to compute, we concentrate thereon.
Recall that the adaptive weights and their sum are random.

\begin{Not}\label{not:mathcalE}
We set
\[
	\mathcal{E} \tilde{\theta}_i^{(k)}
	:= \sum_{j=1}^n \tilde{w}_{ij}^{(k)} \theta_j / \tilde{N}_i^{(k)}. 
\]
\end{Not}

Next we specify the considered inhomogeneous setting.
Following \citet[\S 5.2]{PoSp05}, we presume that the variability of the parameter function~$\theta(.)$ is smaller in order than the rate of convergence~$\overline{N}_i^{(k)}$ in Theorem~\ref{thm:PS22}. 
Here, we even require the rate~$\max_{j'} \overline{N}_{j'}^{(k)}$ in order to ensure that $\overline{N}_i^{(k)} / \max_{j'} \overline{N}_{j'}^{(k)} \leq 1$ for every $i \in \lbrace 1,...,n \rbrace$.
More precisely, we require the existence of a constant $\varphi \geq 0$ such that 
\begin{equation}\label{eq:inhomBound}
	\mathcal{KL} \left( \theta_i, \theta_j \right) \leq \varphi^2 / \max_{j'} \overline{N}_{j'}^{(k)} \quad \text{ for all } 
	X_j \in U_i^{(k)} := \lbrace X_j \in \mathcal{X}: \overline{w}_{ij}^{(k)} > 0 \rbrace
\end{equation}
for every $i \in \lbrace 1,..., n \rbrace$ and each  $k \in \lbrace 0,..., k^* \rbrace$.
In this subsection, we require Equation~\eqref{eq:inhomBound} with $k := k^*$ for all $i,j \in \lbrace 1,...,n \rbrace$, but, in \S~\ref{sec:propagation}, we will only consider the points in a certain neighborhood, for instance all $X_j \in U_i^{(k)}$ with $k \in \lbrace 0,...,k^* \rbrace$. 
For brevity, we denote $\varphi_0 := \varphi / \max_i  (\overline{N}_i^{(k^*)} )^{1/2}$.

We proceed as under homogeneity, see Definition~\ref{def:propCond} for comparison. 
 
\begin{Not}\label{not:zLambdaInhom}
For every~$\lambda > 0$, we consider the function
\[ 
	\hat{\mathfrak{Z}}_{\lambda}: \lbrace 0,..., k^* \rbrace \times (0,1) \times \Theta^n \times \lbrace 1,...,n \rbrace \to [0, \infty) 
\]
defined by
\[
	\hat{\mathfrak{Z}}_{\lambda}(k, p; \lbrace \theta_j \rbrace_{j=1}^n, i ) := \inf \left\{ z > 0: \mathbb{P} \left( \overline{N}_i^{(k)} \mathcal{KL}(\tilde{\theta}_i^{(k)}(\lambda), \mathcal{E} \tilde{\theta}_i^{(k)}(\lambda)) > z \right) \leq p \right\}, 
\]
where~$\mathcal{E} \tilde{\theta}_i^{(k)}$ is as in Notation~\ref{not:mathcalE}, and~$\tilde{\theta}_i^{(k)}(\lambda)$ denotes the adaptive estimator in the position $X_i \in \mathcal{X}$,
resulting from the simplified algorithm in Notation~\ref{algorithm} with the adaptation bandwidth~$\lambda$ and the statistically independent observations $Y_j \sim \mathbb{P}_{\theta_j} \in \mathcal{P}$, $j \in \lbrace 1,...,n \rbrace$.
\end{Not}

In order to enable the application of Equation~\eqref{eq:varkappa} and Lemma~\ref{lem:PS52}, we restrict the range of the parameter function~$\theta(.)$. 
Thus, we introduce a subset $\Theta^* \subseteq \Theta$ with $\lbrace \theta_j \rbrace_{j=1}^n \in (\Theta^*)^n$.

\begin{Def}[Inhomogeneous propagation condition]\label{def:inhomPropCond}
Let $\epsilon > 0$ and $\varphi_0 \geq 0$ be constants. 
The adaptation bandwidth~$\lambda > 0$ satisfies the inhomogeneous propagation condition at probability level~$\epsilon$ and variability level $\varphi_0$ for the parameter set~$\Theta^* \subseteq \Theta$ if the function~$\hat{\mathfrak{Z}}_{\lambda}(., p; \lbrace \theta_j \rbrace_{j=1}^n, i)$ is non-increasing for all $p \in (\epsilon, 1)$, every $i \in \lbrace 1,...,n \rbrace$, and each parameter function~$\theta(.)$ with $\lbrace \theta_j \rbrace_{j=1}^n \in (\Theta^*)^n$ and $\mathcal{KL} (\theta_j, \theta_{j'}) \leq \varphi_0^2$ for all $j, j' \in \lbrace 1,...,n \rbrace$.
\end{Def}

\begin{Rem}
For $\varphi_0 := 0$ and $\Theta^* := \lbrace \theta \rbrace$, the inhomogeneous propagation condition equals the homogeneous propagation condition in Definition~\ref{def:propCond}. 
\end{Rem}

\subsection{Locally varying parameter functions with sharp discontinuities} \label{sec:propagation}

We deduce from the inhomogeneous propagation condition and Theorem~\ref{thm:PS22} in Appendix~\ref{app:auxiliary} an exponential bound for the probability $\mathbb{P} ( \overline{N}_i^{(k)} \mathcal{KL}(\tilde{\theta}_i^{(k)}, \theta_i) > z )$ 
of the Kullback-Leibler divergence between the adaptive estimator~$ \tilde{\theta}_i^{(k)}$ and its true parameter~$\theta_i$ to exceed the error bound~$z / \overline{N}_i^{(k)}$. 
The following proposition provides the inhomogeneous analog of \citep[Proposition 3.1]{BecMat2013}.
It requires slightly different assumptions and yields a different exponent in the exponential bound of the excess probability.

\begin{Prop}[Propagation and stability under bounded variability]\label{prop:propagation1}
Let Assumption~\ref{A1} (page~\pageref{A1}) be fulfilled, and let the adaptation bandwidth~$\lambda$ be chosen in accordance with the inhomogeneous propagation condition at probability level~$\epsilon>0$ and variability level $\varphi_0 > 0$ for some set $\Theta^* \subseteq \Theta$ satisfying $\lbrace \theta_i \rbrace_{i=1}^n \in (\Theta^*)^n$.
Additionally, we recall Notation~\ref{not:varkappa} in Appendix~\ref{app:auxiliary}, and we choose $\varkappa \geq 1$ sufficiently large such that $\Theta^* \subseteq \Theta_{\varkappa}$.
If $\mathcal{KL} \left( \theta_{i}, \theta_j \right) \leq \varphi^2 / \max_{j'} \overline{N}_{j'}^{(k_0)} = \varphi_0^2$ holds for all $ i,j \in \lbrace 1,...,n \rbrace$ and some fixed $k_0 \in \lbrace 1,..., k^* \rbrace$, then we get 
\begin{equation}\label{eq:propCondInhom1}
	\mathbb{P} \left( \overline{N}_i^{(k)} \mathcal{KL} \left( \tilde{\theta}_i^{(k)}, \theta_i \right) > z \right) 
	\leq \max \left\{ 2 e^{- \left[ \sqrt{z} / \varkappa - \varphi \right]^2 / \varkappa^2}, \epsilon \right\} + \breve{p}_{\varkappa,0}
\end{equation}
for each $i \in \lbrace 1,...,n \rbrace$, $k \in \lbrace 0,...,k_0 \rbrace$, and all~$z > \varkappa^2 \varphi^2$, where~$\breve{p}_{\varkappa,0}$ is as in Notation~\ref{not:pKappa2}. 
In particular, for all~$k_1 \leq k_2 \leq k_0$, it holds
\begin{equation}\label{eq:propCondInhom2}
	\mathbb{P} \left( \overline{N}_i^{(k_2)} \mathcal{KL} \left( \tilde{\theta}_i^{(k_2)}, \mathcal{E} \tilde{\theta}_i^{(k_2)} \right) > z \right) \leq \max \left\{ \mathbb{P} \left( \overline{N}_i^{(k_1)} \mathcal{KL} \left( \tilde{\theta}_i^{(k_1)}, \mathcal{E} \tilde{\theta}_i^{(k_1)} \right) > z \right), \epsilon \right\},
\end{equation}
where~$\mathcal{E} \tilde{\theta}_i^{(k)} = \sum_j \tilde{w}_{ij}^{(k)} \theta_j / \tilde{N}_i^{(k)}$ as in Notation~\ref{not:mathcalE}.
\end{Prop}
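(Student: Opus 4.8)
The plan is to establish the stability estimate \eqref{eq:propCondInhom2} first, purely as a consequence of the inhomogeneous propagation condition, and then to bootstrap it to the propagation estimate \eqref{eq:propCondInhom1} by anchoring at the non-adaptive initialisation $k=0$ and paying a controlled deterministic bias stemming from the variability bound \eqref{eq:inhomBound}. Although \eqref{eq:propCondInhom2} is stated second, it is natural to treat it first, since its $k_1=0$ instance is exactly what transports the concentration from the non-adaptive to the adaptive estimator.

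For \eqref{eq:propCondInhom2} I would argue through the quantile--tail duality built into $\hat{\mathfrak{Z}}_{\lambda}$. Writing $Q_k(z) := \mathbb{P}(\overline{N}_i^{(k)}\mathcal{KL}(\tilde{\theta}_i^{(k)},\mathcal{E}\tilde{\theta}_i^{(k)})>z)$, the definition in Notation~\ref{not:zLambdaInhom} reads $\hat{\mathfrak{Z}}_{\lambda}(k,p;\lbrace\theta_j\rbrace,i)=\inf\lbrace z>0: Q_k(z)\le p\rbrace$, so by right-continuity of $Q_k$ one has $Q_k(z)\le p \iff z\ge \hat{\mathfrak{Z}}_{\lambda}(k,p)$. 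Fixing $k_1\le k_2$ and setting $p:=Q_{k_1}(z)$, for every $p'\in(\max\lbrace p,\epsilon\rbrace,1)$ the relation $Q_{k_1}(z)\le p'$ gives $z\ge\hat{\mathfrak{Z}}_{\lambda}(k_1,p')$, and the non-increasingness of $\hat{\mathfrak{Z}}_{\lambda}(\cdot,p')$ for $p'>\epsilon$ yields $z\ge\hat{\mathfrak{Z}}_{\lambda}(k_2,p')$, whence $Q_{k_2}(z)\le p'$. Letting $p'\downarrow\max\lbrace p,\epsilon\rbrace$ gives $Q_{k_2}(z)\le\max\lbrace Q_{k_1}(z),\epsilon\rbrace$, which is precisely \eqref{eq:propCondInhom2}.

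For \eqref{eq:propCondInhom1} I would proceed in three steps. First, at $k=0$ the adaptive estimator coincides with the non-adaptive one, $\tilde{\theta}_i^{(0)}=\overline{\theta}_i^{(0)}$ and $\mathcal{E}\tilde{\theta}_i^{(0)}=\mathbb{E}\overline{\theta}_i^{(0)}$, so Theorem~\ref{thm:PS22} supplies $Q_0(z')\le 2e^{-z'/\varkappa^2}+\breve{p}_{\varkappa,0}$, and applying \eqref{eq:propCondInhom2} with $k_1=0$, $k_2=k$ propagates this to $Q_k(z')\le\max\lbrace 2e^{-z'/\varkappa^2}+\breve{p}_{\varkappa,0},\epsilon\rbrace$ for all $k\le k_0$. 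Second, I would pass from $\mathcal{E}\tilde{\theta}_i^{(k)}$ to the true $\theta_i$ through a triangle inequality for the pseudo-metric $\sqrt{\mathcal{KL}}$, valid with constant $\varkappa$ on $\Theta_{\varkappa}\supseteq\Theta^*$ (this is the comparability encoded in Equation~\eqref{eq:varkappa} and Lemma~\ref{lem:PS52}), namely $\sqrt{\mathcal{KL}(\tilde{\theta}_i^{(k)},\theta_i)}\le\varkappa\big(\sqrt{\mathcal{KL}(\tilde{\theta}_i^{(k)},\mathcal{E}\tilde{\theta}_i^{(k)})}+\sqrt{\mathcal{KL}(\mathcal{E}\tilde{\theta}_i^{(k)},\theta_i)}\big)$. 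Third, I would control the bias: since $\mathcal{E}\tilde{\theta}_i^{(k)}=\sum_j\tilde{w}_{ij}^{(k)}\theta_j/\tilde{N}_i^{(k)}$ is a convex combination of the $\theta_j$ in the mean parametrisation and $\mathcal{KL}(\cdot,\theta_i)$ is the Bregman divergence generated by the strictly convex $G(\theta):=\theta C(\theta)-B(\theta)$ (so $G'=C$ by Assumption~\ref{A1}), Jensen's inequality gives $\mathcal{KL}(\mathcal{E}\tilde{\theta}_i^{(k)},\theta_i)\le\sum_j\tilde{w}_{ij}^{(k)}\mathcal{KL}(\theta_j,\theta_i)/\tilde{N}_i^{(k)}\le\varphi_0^2$, so that $\overline{N}_i^{(k)}\mathcal{KL}(\mathcal{E}\tilde{\theta}_i^{(k)},\theta_i)\le\overline{N}_i^{(k)}\varphi^2/\max_{j'}\overline{N}_{j'}^{(k_0)}\le\varphi^2$ for $k\le k_0$ (using that $\overline{N}_i^{(k)}$ is non-decreasing in $k$), i.e. the scaled bias is at most $\varphi$. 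Combining the three steps, the event $\lbrace\overline{N}_i^{(k)}\mathcal{KL}(\tilde{\theta}_i^{(k)},\theta_i)>z\rbrace$ forces $\overline{N}_i^{(k)}\mathcal{KL}(\tilde{\theta}_i^{(k)},\mathcal{E}\tilde{\theta}_i^{(k)})>(\sqrt{z}/\varkappa-\varphi)^2$, a positive threshold precisely because $z>\varkappa^2\varphi^2$; inserting $z'=(\sqrt{z}/\varkappa-\varphi)^2$ into the propagated bound and using the elementary inequality $\max\lbrace a+b,\epsilon\rbrace\le\max\lbrace a,\epsilon\rbrace+b$ with $b=\breve{p}_{\varkappa,0}\ge 0$ yields \eqref{eq:propCondInhom1}.

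The main obstacle is the second and third steps taken together. Because $\mathcal{KL}$ is asymmetric and is not a squared metric, the triangle inequality only holds after restricting to $\Theta_{\varkappa}$ and paying the factor $\varkappa$, and the Bregman-convexity estimate for the bias must be matched to the direction in which \eqref{eq:inhomBound} is imposed — here the symmetry of the hypothesis over all $i,j$ is what licenses the appearance of $\mathcal{KL}(\theta_j,\theta_i)$ with $\theta_j$ in the first slot. Keeping the single constant $\varkappa$ consistent across the concentration bound of Theorem~\ref{thm:PS22}, the $\sqrt{\mathcal{KL}}$ triangle inequality, and the bias bound, while tracking the interplay of the truncation level $\epsilon$ with the additive remainder $\breve{p}_{\varkappa,0}$ from Notation~\ref{not:pKappa2}, is where the bookkeeping is delicate.
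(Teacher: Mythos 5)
Your route is essentially the paper's: \eqref{eq:propCondInhom2} follows from the monotonicity of $\hat{\mathfrak{Z}}_{\lambda}(\cdot,p;\lbrace\theta_j\rbrace,i)$ for $p\in(\epsilon,1)$ (your quantile--tail duality argument merely spells out what the paper asserts in one line, correctly), and \eqref{eq:propCondInhom1} is obtained from the $\varkappa$-triangle inequality for $\mathcal{KL}^{1/2}$ (Lemma~\ref{lem:PS52}), the convexity bound $\mathcal{KL}(\mathcal{E}\tilde{\theta}_i^{(k)},\theta_i)\leq\varphi_0^2$ so that the scaled bias is at most $\varphi$ for $k\leq k_0$ (using monotonicity of $\overline{N}_i^{(k)}$ in $k$, exactly as the paper does), the reduction to $k=0$ via \eqref{eq:propCondInhom2} with $k_1=0$ and $\tilde{\theta}_i^{(0)}=\overline{\theta}_i^{(0)}$, and Theorem~\ref{thm:PS22} at the non-adaptive initialization.

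There is, however, one genuine gap in the bookkeeping of the favorable event. Your combining step applies the triangle inequality \emph{unconditionally}: the asserted inclusion of $\lbrace\overline{N}_i^{(k)}\mathcal{KL}(\tilde{\theta}_i^{(k)},\theta_i)>z\rbrace$ in $\lbrace\overline{N}_i^{(k)}\mathcal{KL}(\tilde{\theta}_i^{(k)},\mathcal{E}\tilde{\theta}_i^{(k)})>(\sqrt{z}/\varkappa-\varphi)^2\rbrace$ requires $\tilde{\theta}_i^{(k)}\in\Theta_{\varkappa}$, which fails with positive probability whenever $\varkappa>1$ (the estimator is a weighted mean of the $T(Y_j)$, which need not lie in the compact set $\Theta_{\varkappa}$). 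You acknowledge this in your closing paragraph, but the repair is not cost-free in your architecture: restricting the triangle step to the favorable event costs $\mathbb{P}(\breve{\Omega}_{\varkappa}^c)\leq\breve{p}_{\varkappa,0}$ there, and since your step 1 already invokes the \emph{unconditional} Theorem~\ref{thm:PS22} with its own additive $\breve{p}_{\varkappa,0}$, the naive total is $\max\lbrace 2e^{-[\sqrt{z}/\varkappa-\varphi]^2/\varkappa^2},\epsilon\rbrace + 2\breve{p}_{\varkappa,0}$, i.e.\ double the stated remainder. The paper achieves the single $\breve{p}_{\varkappa,0}$ by intersecting every event with $\breve{\Omega}_{\varkappa}$ from Corollary~\ref{cor:pKappa} at the outset, carrying the intersection through the monotonicity step (this is legitimate in the paper's reading because $\breve{\Omega}_{\varkappa}$ does not depend on the iteration step~$k$), and then using at $k=0$ the \emph{conditional} form $\mathbb{P}(\lbrace\overline{N}_i^{(0)}\mathcal{KL}(\overline{\theta}_i^{(0)},\mathbb{E}\overline{\theta}_i^{(0)})>z'\rbrace\cap\breve{\Omega}_{\varkappa})\leq 2e^{-z'/\varkappa^2}$, which is what the proof of Theorem~\ref{thm:PS22} actually establishes on the favorable realizations (the additive term in its statement arises only from the complement, cf.\ Notation~\ref{not:pKappa2}). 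So the complement is charged exactly once. Your argument is complete as written only in the case $\varkappa=1$ (Gaussian, log-normal), where $\breve{p}_{\varkappa,0}=0$ and the triangle inequality needs no restriction.
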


Next we consider piecewise bounded functions with sharp discontinuities.
We recall some auxiliary notations.

\begin{Not}\label{not:connect}\hspace{1 pt}
For any set~$M$, we define
\[
	\mathfrak{C}(M) := \bigcap \, \left\{ M_c: M_c \text{ is a connected space and } M \subseteq M_c 
	\right\}.
\]
Then, we call the discrete set $M := \lbrace X_{j} \rbrace_{j=1}^m \subseteq \mathcal{X}$ convex if~$\mathfrak{C}(M)$ is convex and
\[
	X_j \in M \text{ if and only if } X_j \in \mathfrak{C}(M),  \qquad X_j \in \mathcal{X}. 
\]
\end{Not}

Then, the setting is described by the following structural assumption.

\begin{Ass}\label{ASinhom}
Suppose the existence of a non-trivial partition $\mathcal{V} := \lbrace \mathcal{V}_i \rbrace_i$ of~$\mathcal{X}$ such that, for every $X_{i} \in \mathcal{X}$, there are constants~$\phi_{i} > \varphi_0 \geq 0$ and a convex 
neighborhood $\mathcal{V}_{i} \subseteq \mathcal{X}$ which satisfy
\[
\begin{cases}
	\mathcal{KL} \left( \theta_{i}, \theta_j \right) \leq \varphi_0^2 &\text{ for all } X_j \in \mathcal{V}_{i},\\ 
	\mathcal{KL} \left( \theta_{i}, \theta_j \right) > \phi_{i}^2 &\text{ for all } X_j \notin \mathcal{V}_{i}.
\end{cases}
\]
\end{Ass}

We recall some notations from \citep{BecMat2013}.
The effective sample size concentrates on the case where the considered neighborhood~$U_i^{(k)} = \lbrace X_j \in \mathcal{X}: \overline{w}_{ij}^{(k)} > 0 \rbrace$ is larger than the corresponding region region~$\mathcal{V}_i$. 

\begin{Not}\label{not:effectiveSample}
We define, for each $i \in \lbrace 1,...,n \rbrace$ and every $ k \in \lbrace 0,..., k^* \rbrace $, the effective sample size and its local minimum
\begin{equation}\label{eq:effectiveSample}
	\overline{n}_i^{(k)} := 
	\sum_{X_{j} \in \mathcal{V}_{i} \cap U_{i}^{(k)}} \overline{w}_{ij}^{(k)}
	\qquad \text{ and } \qquad
	n_i^{(k)} := \underset{X_{j} \in U_i^{(k)} }{\min} \overline{n}_j^{(k)}.
\end{equation}
\end{Not}

As it turns out, the quantities~$n_i^{(k)}$ determine a lower bound for the
stepsizes~$\phi_{i}$ which allows the detection of the associated discontinuity by the algorithm.
In the following theorem, we consider two events.
On the first one, $\mathcal{B}^{(k)} (z)$, the estimation error is bounded from above, and on the second one, $M^{(k)}(z)$, the discontinuities are sufficiently sharp for separation, see Proposition~\ref{prop:separationMS} in Appendix~\ref{app:auxiliary}.

\begin{Not}
Let the constants $\phi_{i} > 0$, $i \in \lbrace 1,..., n \rbrace$, be as in Assumption~\ref{ASinhom}, and fix $\lambda > 0$ and $z>0$.
Additionally, we recall Notation~\ref{not:varkappa} and choose $\varkappa \geq 1$ such that $\lbrace \theta_i \rbrace_{i=1}^n \in (\Theta_{\varkappa})^n$.
Then, we set
\begin{equation}\label{eq:eventBk}
	\mathcal{B}^{(k)} (z) := \bigcap_{i=1}^n \left\{ \overline{n}_i^{(k)} \mathcal{KL} ( \tilde{\theta}_i^{(k)}, \theta_i ) \leq z \right\}, \qquad k \in \lbrace 0,...,k^* \rbrace,
\end{equation}
$M^{(0)}(z) := \Omega$, and
\begin{equation}\label{eq:varphi3}
	M^{(k)}(z) := 
	\bigcap_{k'=0}^{k-1} \bigcap_{i=1}^n \left\{ \phi_{i} > \varkappa \left[ \sqrt{\lambda/ \tilde{N}_{i}^{(k')} } + 2 \sqrt{ z/ n_i^{(k')} } \right] \right\}, \quad k \in \lbrace 1,...,k^* \rbrace.
\end{equation}
\end{Not}

\begin{Thm}[Propagation property under piecewise boundedness]\label{thm:propagation2}
Suppose Assumptions~\ref{A1} (page~\pageref{A1}) and~\ref{ASinhom} to be satisfied. 
Additionally, let the adaptation bandwidth~$\lambda$, 
the constant $\varkappa \geq 1$, and the corresponding set $\Theta_{\varkappa} \subseteq \Theta$ be as in Proposition~\ref{prop:propagation1} and $h^{(0)} > 0$ sufficiently small such that $\overline{w}_{ij}^{(k)} = 0$ for all $X_i \neq X_j$.
Finally, we fix some iteration step $k_0 \in \lbrace 0,..., k^* \rbrace$ and some constant $\varphi \geq 0$ such that $\varphi^2 / \max_i \overline{n}_i^{(k_0)} = \varphi_0^2$.
If $z > \varkappa^2 \varphi^2$ satisfies $\mathbb{P} \left( M^{(k_0)}(z) \right) > 0$, then it holds
\begin{eqnarray*}
	\mathbb{P} \left( \mathcal{B}^{(k_0)}(z) \vert M^{(k_0)}(z) \right) 
	\geq 1 - \frac{ \breve{p}_{\varkappa,0} + (k_0+1) \, \max \left\{ 2 n e^{- \left[ \sqrt{z} / \varkappa - \varphi \right]^2 / \varkappa^2}, n \epsilon \right\} }{ \mathbb{P} \left( M^{(k_0)}(z) \right)},
\end{eqnarray*}
where~$\breve{p}_{\varkappa,0}$ is as in Notation~\ref{not:pKappa2}.
\end{Thm}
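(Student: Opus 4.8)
The plan is to run an induction on the iteration step, coupling the separation property of Proposition~\ref{prop:separationMS} with the propagation and stability bound of Proposition~\ref{prop:propagation1}. The driving idea is that on $M^{(k_0)}(z)$ the stepsizes $\phi_i$ are so large that, as long as the estimation errors have been controlled up to the previous step, the adaptive weights joining distinct homogeneity regions vanish; the estimation at each $X_i$ then collapses to a problem of bounded variability on its own region $\mathcal{V}_i$, where Proposition~\ref{prop:propagation1} applies with the effective sample size $\overline{n}_i^{(k)}$ in the role of $\overline{N}_i^{(k)}$.

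First I would isolate the separation step. Setting $F_{k-1} := \bigcap_{k'=0}^{k-1}\mathcal{B}^{(k')}(z)$ with $F_{-1} := \Omega$, I would use Proposition~\ref{prop:separationMS} to show that on $F_{k-1}\cap M^{(k_0)}(z)$ the statistical penalty obeys $s_{ij}^{(k)}\geq\lambda$ whenever $X_j\notin\mathcal{V}_i$, so that $K_{\mathrm{ad}}(s_{ij}^{(k)}/\lambda)=0$ and hence $\tilde{w}_{ij}^{(k)}=0$ across every region boundary. This is exactly where the two summands of~\eqref{eq:varphi3} enter: $\mathcal{B}^{(k-1)}(z)$ supplies $\overline{n}_j^{(k-1)}\mathcal{KL}(\tilde{\theta}_j^{(k-1)},\theta_j)\leq z$, whose contribution to the penalty is captured by the term $2\sqrt{z/n_i^{(k-1)}}$, while $\sqrt{\lambda/\tilde{N}_i^{(k-1)}}$ encodes the detection threshold and the prefactor $\varkappa$ converts the jump $\phi_i$ measured in Kullback--Leibler divergence into the additive scale on $\Theta_{\varkappa}$.

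Next I would localise and apply the propagation bound. On $F_{k-1}\cap M^{(k_0)}(z)$, separation has held at all steps $1,\dots,k$, so the entire trajectory $\tilde{\theta}_i^{(0)},\dots,\tilde{\theta}_i^{(k)}$ coincides with that of the algorithm run on the sub-design $\mathcal{V}_i\cap U_i^{(k)}$, where Assumption~\ref{ASinhom} guarantees $\mathcal{KL}(\theta_i,\theta_j)\leq\varphi_0^2$. Identifying $\overline{n}_i^{(k)}$ with the non-adaptive mass of this restricted run and using the normalisation $\varphi^2/\max_i\overline{n}_i^{(k_0)}=\varphi_0^2$, I would invoke Proposition~\ref{prop:propagation1} to obtain, for each $i$ and each $k\leq k_0$,
\[
\mathbb{P}\left(\{\overline{n}_i^{(k)}\mathcal{KL}(\tilde{\theta}_i^{(k)},\theta_i)>z\}\cap F_{k-1}\cap M^{(k_0)}(z)\right)\leq\max\left\{2e^{-[\sqrt{z}/\varkappa-\varphi]^2/\varkappa^2},\epsilon\right\}+\breve{p}_{\varkappa,0}.
\]
Here $\breve{p}_{\varkappa,0}$ (Notation~\ref{not:pKappa2}) is the probability of a single exceptional event, common to all points and all steps, outside of which the estimates enjoy sub-exponential concentration; consequently a union bound over $i\in\{1,\dots,n\}$ inflates only the concentration term, giving $\max\{2ne^{-[\sqrt{z}/\varkappa-\varphi]^2/\varkappa^2},n\epsilon\}$ and leaving a single copy of $\breve{p}_{\varkappa,0}$.

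Finally I would assemble the bound through a first-failure decomposition. Since $(\mathcal{B}^{(k_0)}(z))^c\subseteq F_{k_0}^c=\bigcup_{k=0}^{k_0}\left(F_{k-1}\cap(\mathcal{B}^{(k)}(z))^c\right)$, summing the per-step estimates over $k=0,\dots,k_0$ multiplies the concentration term by $(k_0+1)$ while the shared event contributes $\breve{p}_{\varkappa,0}$ only once, yielding a bound on $\mathbb{P}((\mathcal{B}^{(k_0)}(z))^c\cap M^{(k_0)}(z))$; dividing by $\mathbb{P}(M^{(k_0)}(z))>0$ produces the asserted lower bound on the conditional probability. The base case $k=0$ is immediate, because the choice of $h^{(0)}$ forces $\overline{w}_{ij}^{(0)}=0$ for $X_i\neq X_j$, so $\tilde{\theta}_i^{(0)}=T(Y_i)$, $\overline{n}_i^{(0)}=1$ and $M^{(0)}(z)=\Omega$, and Proposition~\ref{prop:propagation1} applies verbatim. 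The main obstacle will be the localisation step: establishing rigorously that $\mathcal{B}^{(k-1)}(z)$ together with~\eqref{eq:varphi3} forces $s_{ij}^{(k)}\geq\lambda$ across boundaries, and that the resulting restriction to $\mathcal{V}_i$ legitimately substitutes $\overline{n}_i^{(k)}$ for $\overline{N}_i^{(k)}$ in Proposition~\ref{prop:propagation1} without disturbing the exponent $[\sqrt{z}/\varkappa-\varphi]^2/\varkappa^2$.
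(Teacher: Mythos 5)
Your proposal follows essentially the same route as the paper's proof: the same first-failure decomposition $[\mathcal{B}^{(k_0)}(z)]^c = \bigcup_{k=0}^{k_0}\bigl([\mathcal{B}^{(k)}(z)]^c \cap \bigcap_{k'<k}\mathcal{B}^{(k')}(z)\bigr)$, the base case via the choice of $h^{(0)}$ (so $\tilde{\theta}_i^{(0)}=\overline{\theta}_i^{(0)}$ and $U_i^{(0)}\subseteq\mathcal{V}_i$), separation across region boundaries via Proposition~\ref{prop:separationMS} on the good event, localisation of the smoothing to $\mathcal{V}_i$ with the effective sample sizes $\overline{n}_i^{(k)}$, the per-step bound from Proposition~\ref{prop:propagation1}, and union bounds over $i$ and $k$ producing the factors $n$ and $(k_0+1)$, with $M^{(k_0)}(z)\subseteq M^{(k)}(z)$ used to condition uniformly. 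One bookkeeping remark: your displayed per-step inequality carries $+\,\breve{p}_{\varkappa,0}$, which summed over $k$ would yield $(k_0+1)$ copies; the paper instead intersects every event with $\breve{\Omega}_{\varkappa}$ from Corollary~\ref{cor:pKappa} and extracts $\mathbb{P}(\breve{\Omega}_{\varkappa}^c)\leq\breve{p}_{\varkappa,0}$ once up front, so the per-step bounds on the intersection come \emph{without} that additive term --- this is exactly the ``single exceptional event'' mechanism you describe verbally, so your intent matches and the fix is purely notational.
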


\begin{Rem}
Theorem~\ref{thm:propagation2} yields a meaningful result for
$z \geq \varkappa^2 [ \varkappa \sqrt{q \log(n)} + \varphi ]^2$ 
and small values of~$\epsilon$ or, at best, $\epsilon := c_{\epsilon} n^{-q}$
with~$q > 1$ and~$c_{\epsilon} > 0$.
\end{Rem}

\subsection{Consequences of a violated structural assumption}\label{sec:stability}

The previous results only hold for parameter functions with sharp discontinuities. 
What happens in the case of a violated structural assumption?
In \citep[\S 2.4]{BecMat2013}, we observed for simulated examples with Gaussian distributed observations that the estimation function resulted in a step function in the case of a piecewise constant parameter function and as well for a piecewise smooth function.
Therefore, we will introduce a specific step function, that we will call the \emph{associated step function} of the Propagation-Separation Approach.
Then, we will establish an upper bound for the pointwise Kullback-Leibler divergence between the adaptive estimator of the simplified procedure in Notation~\ref{algorithm} and the corresponding value of the associated step function.

Applying the Propa\-gation-Sepa\-ration Approach with some fixed adaptation bandwidth $\lambda > 0$ provides, for every $k \in \lbrace 1,..., k^* \rbrace$, a set of adaptive weights $\lbrace \tilde{w}_{ij}^{(k)} \rbrace_{i,j=1}^n$.
In particular, for $k \in \lbrace 0,..., k^* \rbrace$, this yields the subsets
\begin{equation}\label{eq:partitionHik}
	\mathcal{H}_i^{(k)} := \left\{ \mathcal X_j \in \mathcal{X}: \tilde{w}_{il}^{(k+1)} > 0 \text{ if and only if } \tilde{w}_{jl}^{(k+1)} > 0 \text{ for all } X_l \in \mathcal{X} \right\},
\end{equation}
where we set $\tilde{w}_{ij}^{(k^* + 1)} := \overline{w}_{ij}^{(k^*)} \cdot K_{\mathrm{ad}} (s_{ij}^{(k^*)} / \lambda)$.
They are based on an equivalence relation, yielding, for every parameter function~$\theta(.)$, a well-defined partition~$\lbrace H_l^{(k)} \rbrace_{l=1}^m$ of the design space~$\mathcal{X}$ into $m \leq n$ regions.
We introduce a step function whose steps match this partition~$\lbrace H_l^{(k)} \rbrace_{l=1}^m$.

\begin{Def}\label{def:assStepf}
Let~$\boldsymbol{1}$ denote the indicator function, and let~$\theta_l^{(k)}$ be the mean value of the~$n_l$ estimates~$\tilde{\theta}_{l_j}^{(k)}$ corresponding to the design points~$\lbrace X_{l_j} \rbrace_{j=1}^{n_l}$ which form the region~$ H_l^{(k)}$. 
Then, we call the piecewise constant function
\begin{equation}\label{eq:stepF1}
	\breve{\theta}^{(k)} (X_i) := \sum_{l=1}^m \theta_l^{(k)} \boldsymbol{1}_{H_l^{(k)}} (X_i)
	\quad \text{ with } \quad
	\theta_l^{(k)} := \frac{1}{n_l} \sum_{j=1}^{n_l} \tilde{\theta}_{l_j}^{(k)}
\end{equation}
the associated step function of~$\theta(.)$ in step~$k$.
For $i \in \lbrace 1,...,n \rbrace$ and $k \in \lbrace 1,...,k^* \rbrace$, we set $\breve{\theta}^{(k)}_i := \breve{\theta}^{(k)}(X_i)$.
\end{Def}

The associated step function satisfies the following property.

\begin{Lem}\label{lem:Hik}
For all $i \in \lbrace 1,...,n \rbrace$ and $k \in \lbrace 0,..., k^* \rbrace$, it holds
\begin{equation}\label{eq:Hik}
	\mathcal{KL} \left( \breve{\theta}^{(k)}_i, \tilde{\theta}_i^{(k)} \right) \leq \max \lbrace \lambda / \tilde{N}_j^{(k)}: X_j \in \mathcal{H}_i^{(k)} \rbrace.
\end{equation}
\end{Lem}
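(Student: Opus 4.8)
The plan is to reduce \eqref{eq:Hik} to a single application of Jensen's inequality, exploiting both the closed form of the Kullback--Leibler divergence for the exponential family of Assumption~\ref{A1} and the combinatorial meaning of the equivalence classes~$\mathcal{H}_i^{(k)}$. Write $X_i \in H_l^{(k)} = \mathcal{H}_i^{(k)}$. By Definition~\ref{def:assStepf} the step--function value $\breve{\theta}_i^{(k)} = \theta_l^{(k)} = n_l^{-1} \sum_{j=1}^{n_l} \tilde{\theta}_{l_j}^{(k)}$ is an arithmetic, hence convex, combination of the adaptive estimates attached to the design points of~$\mathcal{H}_i^{(k)}$; since $\Theta$ is convex (Notation~\ref{not:setting}) this combination again lies in~$\Theta$, so the left-hand side of \eqref{eq:Hik} is well defined.

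First I would establish that, for each fixed $\theta' \in \Theta$, the map $\theta \mapsto \mathcal{KL}(\theta, \theta')$ is convex on~$\Theta$. Using the explicit form of the divergence under Assumption~\ref{A1}, namely $\mathcal{KL}(\theta,\theta') = \theta\,[C(\theta) - C(\theta')] - [B(\theta) - B(\theta')]$, together with the identity $B'(\theta) = \theta\, C'(\theta)$, the first derivative in~$\theta$ simplifies to $C(\theta) - C(\theta')$ and the second derivative to $C'(\theta)$, which is nonnegative as $C$ is increasing. Convexity then yields, via Jensen's inequality,
\[
	\mathcal{KL}(\breve{\theta}_i^{(k)}, \tilde{\theta}_i^{(k)})
	\leq \frac{1}{n_l} \sum_{j=1}^{n_l} \mathcal{KL}(\tilde{\theta}_{l_j}^{(k)}, \tilde{\theta}_i^{(k)}).
\]

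Next I would bound each summand using membership in the equivalence class. By reflexivity $X_i \in \mathcal{H}_i^{(k)}$, and for any $X_{l_j} \in \mathcal{H}_i^{(k)}$ the defining equivalence (taking the index $l = i$ in \eqref{eq:partitionHik}) forces $\tilde{w}_{l_j i}^{(k+1)} > 0$, because $\tilde{w}_{ii}^{(k+1)} = \overline{w}_{ii}^{(k+1)} K_{\mathrm{ad}}(0) = 1 > 0$. Since $K_{\mathrm{ad}}$ has support $[0,1)$, positivity of $\tilde{w}_{l_j i}^{(k+1)} = \overline{w}_{l_j i}^{(k+1)} K_{\mathrm{ad}}(s_{l_j i}^{(k+1)}/\lambda)$ requires $s_{l_j i}^{(k+1)} < \lambda$; recalling that in the simplified algorithm of Notation~\ref{algorithm} one has $s_{l_j i}^{(k+1)} = \tilde{N}_{l_j}^{(k)} \mathcal{KL}(\tilde{\theta}_{l_j}^{(k)}, \tilde{\theta}_i^{(k)})$, this reads $\mathcal{KL}(\tilde{\theta}_{l_j}^{(k)}, \tilde{\theta}_i^{(k)}) < \lambda / \tilde{N}_{l_j}^{(k)}$. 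Substituting into the Jensen bound and using that an average is dominated by its maximum gives
\[
	\mathcal{KL}(\breve{\theta}_i^{(k)}, \tilde{\theta}_i^{(k)})
	\leq \frac{1}{n_l} \sum_{j=1}^{n_l} \frac{\lambda}{\tilde{N}_{l_j}^{(k)}}
	\leq \max \{ \lambda / \tilde{N}_j^{(k)} : X_j \in \mathcal{H}_i^{(k)} \},
\]
which is \eqref{eq:Hik}; at $k = k^*$ the same reasoning applies through the extension $\tilde{w}^{(k^*+1)}$ fixed in \eqref{eq:partitionHik}.

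I expect the main difficulty to be bookkeeping rather than analysis: the convexity computation is short once the closed form of $\mathcal{KL}$ is at hand, so the delicate points are index matching and the boundary step. Concretely, one must verify that equivalence-class membership delivers the pairwise bound with divisor $\tilde{N}_{l_j}^{(k)}$ (and not $\tilde{N}_i^{(k)}$), so that the right-hand side is exactly the maximum of $\lambda/\tilde{N}_j^{(k)}$ over $X_j \in \mathcal{H}_i^{(k)}$, and one must check that for $k = k^*$ the divisors produced by the special weights $\tilde{w}^{(k^*+1)}$ are again the $\tilde{N}_j^{(k^*)}$ appearing in \eqref{eq:Hik}.
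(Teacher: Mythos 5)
Your proof is correct and takes essentially the same route as the paper's: convexity of the Kullback--Leibler divergence in its first argument (Lemma~\ref{lem:A1}) bounds $\mathcal{KL}(\breve{\theta}_i^{(k)}, \tilde{\theta}_i^{(k)})$ by the divergences $\mathcal{KL}(\tilde{\theta}_j^{(k)}, \tilde{\theta}_i^{(k)})$ over $X_j \in \mathcal{H}_i^{(k)}$, and class membership forces $\tilde{w}_{ji}^{(k+1)} > 0$, hence $s_{ji}^{(k+1)} < \lambda$ and $\mathcal{KL}(\tilde{\theta}_j^{(k)}, \tilde{\theta}_i^{(k)}) \leq \lambda / \tilde{N}_j^{(k)}$. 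The only differences are cosmetic: you pass through Jensen's average before bounding by the maximum where the paper bounds by the maximum directly, and you spell out the reflexivity step $\tilde{w}_{ii}^{(k+1)} = 1$ justifying $\tilde{w}_{ji}^{(k+1)} > 0$, which the paper asserts without comment.
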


\begin{proof}
We know from Lemma~\ref{lem:A1} that the Kullback-Leibler divergence is convex with respect to the first argument. 
Therefore, it holds
\[
	\mathcal{KL} \left( \breve{\theta}^{(k)}_i, \tilde{\theta}_i^{(k)} \right)
	\leq \max \left\{ \mathcal{KL} \left( \tilde{\theta}_j^{(k)}, \tilde{\theta}_i^{(k)} \right): X_j \in \mathcal{H}_i^{(k)} \right\}.
\]
Since $X_j \in \mathcal{H}_i^{(k)}$ implies $\tilde{w}_{ji}^{(k+1)} > 0$, we have $\mathcal{KL} \left( \tilde{\theta}_j^{(k)}, \tilde{\theta}_i^{(k)} \right) \leq \lambda / \tilde{N}_j^{(k)}$, which leads to the assertion. 
\end{proof}

In \S~\ref{sec:simStep}, we will illustrate the formation of the associated step function during iteration. 
The corresponding simulations suggest its immutability for sufficiently large bandwidths. 
Additionally, we will see that, in the presented examples, the sets $\lbrace \mathcal X_j \in \mathcal{X}: \tilde{w}_{ij}^{(k)} > 0 \rbrace$ with $i \in \lbrace 1,...,n \rbrace$ 
form a well-defined partition of the design space~$\mathcal{X}$ if~$k$ is sufficiently large.
However, both heuristic observations could not be theoretically justified for reasons that we will discuss in~\S~\ref{sec:convergence}.

\section{The inhomogeneous propagation condition in practice}\label{sec:inhomPCPractice}

In this section, we will present further details concerning the practical application of the inhomogeneous propagation condition. 
In \cite[\S 4.2]{BecMat2013}, we explained how the homogeneous propagation condition can be applied in practice. 
In contrast, the inhomogeneous propagation condition cannot be applied directly if $\varphi_0 > 0$. 
Here, we need to ensure that the criterion is fulfilled for \emph{every} parameter function satisfying $\lbrace \theta_i \rbrace_{i=1}^n \in (\Theta^*)^n$ and $\mathcal{KL} (\theta_i, \theta_j) \leq \varphi_0^2$ for all $i,j \in \lbrace 1,...,n \rbrace$. 
Therefore, we recommend to choose some $\lambda > 0$ in accordance with the homogeneous propagation condition and to increase it such that the inhomogeneous propagation condition holds as well. 
Apart from the Gaussian and log-normal distribution, the practical use of our precise choice is questionable due to the size of the involved constants. 
Nevertheless, it suggests the \emph{existence} of an appropriate value. 
Hence, the inhomogeneous propagation condition is in the first instance of theoretical interest. 
It allows the desired extension of the propagation and the stability property to (piecewise) bounded functions. 
The justification of our choice will be based on a comparison of the homogeneous and the inhomogeneous propagation condition. 
In order to avoid confusion, we introduce the following notation.

\begin{Not}\label{not:inhomVsHom}
Let the parametric family~$\mathcal{P}$ satisfy Assumption~\ref{A1} (page~\pageref{A1}) with a strictly monotonic sufficient statistic~$T$.  
We fix some constant $\varphi_0 > 0$ and a subset $\Theta^* \subseteq \Theta$. 
Then, we consider two data sets~$ \lbrace (X_i, Y_i) \rbrace_{i=1}^n$ and~$ \lbrace (X_i, \mathcal{Y}_i) \rbrace_{i=1}^n$, where
\begin{compactitem}
\item $Y_i \sim \mathbb{P}_{\theta_i} \in \mathcal{P}$ with $\lbrace \theta_i \rbrace_{i=1}^n \in (\Theta^*)^n$ and $\mathcal{KL} (\theta_i, \theta_j) \leq \varphi_0^2$ for all $i,j \in \lbrace 1,...,n \rbrace$,
\item $\mathcal{Y}_i \sim \mathbb{P}_{\vartheta_i} \in \mathcal{P}$ with $\vartheta_i \equiv \vartheta$ for some $\vartheta \in \Theta^*$ (homogeneity).
\end{compactitem}
\end{Not}

In the rest of this section, we will write~$\mathcal{Y}$ and~$\vartheta$ whenever we restrict to the special case of a homogeneous setting.
Else, we will write~$Y$ and~$\theta$, explicitly allowing locally varying parameter functions which satisfy the variability bound in Notation~\ref{not:inhomVsHom}.
Now we look for a description of the homogeneous propagation condition which enables an extension to the inhomogeneous setting. 
For this purpose, we introduce some auxiliary functions. 

\begin{Not}\label{not:funcP}
Let the functions $p^{(l)}_{\theta}: (0, \infty) \to [0,1]$, $l=1,2,3$ and $\theta \in \Theta$, be given as
\begin{eqnarray*}
	p^{(1)}_{\theta}(z) &:=& \mathbb{P}(\lbrace T(Y) > \theta \rbrace \cap \lbrace \mathcal{KL}(T(Y), \theta) > z \rbrace),\\
	p^{(2)}_{\theta}(z) &:=& \mathbb{P}(\lbrace T(Y) \leq \theta \rbrace \cap \lbrace \mathcal{KL}(T(Y), \theta) > z \rbrace), \quad z> 0\\
	p^{(3)}_{\theta}(z) &:=& \mathbb{P}(\lbrace T(Y) \leq \theta \rbrace \cap \lbrace \mathcal{KL}(T(Y), \theta) \leq z \rbrace),
\end{eqnarray*}
where $Y \sim \mathbb{P}_{\theta}$.
\end{Not}

\begin{Lem}\label{lem:funcP1}
The functions~$p^{(l)}_{\theta}$, $l=1,2,3$, in Notation~\ref{not:funcP} are invariant with respect to the parameter~$\theta \in \Theta$ for the Gaussian, log-normal, Gamma, Erlang, scaled chi-squared, exponential, Rayleigh, Weibull, and Pareto distributions.
\end{Lem}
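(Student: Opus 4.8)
The plan is to reduce all three functions to a single distributional statement. Since $p^{(1)}_{\theta}$, $p^{(2)}_{\theta}$, and $p^{(3)}_{\theta}$ in Notation~\ref{not:funcP} are each determined by the joint law, under $\mathbb{P}_{\theta}$, of the pair $(\boldsymbol{1}\{T(Y) \leq \theta\}, \mathcal{KL}(T(Y), \theta))$, it suffices to show that this joint law does not depend on $\theta$. I would establish this by exhibiting, for each listed family, a one-parameter group of transformations of the observation that maps $\mathbb{P}_{\theta}$ onto the family while acting trivially on both the ordering event and the divergence, and then performing the corresponding change of variables in the integral defining each $p^{(l)}_{\theta}(z)$.

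First I would treat the two base cases. For the Gaussian family one has $T(Y) = Y$ and, by Assumption~\ref{A1}, $\theta = \mathbb{E}_{\theta}[T(Y)]$ is the location parameter, so the shift $y \mapsto y + c$ sends $\mathbb{P}_{\theta}$ to $\mathbb{P}_{\theta + c}$. Writing $Y = \tilde{Y} + c$ with $\tilde{Y} \sim \mathbb{P}_{\theta}$ gives $T(Y) - (\theta + c) = T(\tilde{Y}) - \theta$, so the event $\{T(Y) > \theta + c\}$ coincides with $\{T(\tilde{Y}) > \theta\}$; moreover the explicit divergence available under Assumption~\ref{A1} satisfies $\mathcal{KL}(\theta_1 + c, \theta_2 + c) = \mathcal{KL}(\theta_1, \theta_2)$, whence $\mathcal{KL}(T(Y), \theta + c) = \mathcal{KL}(T(\tilde{Y}), \theta)$. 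Substituting into the integral for $p^{(l)}_{\theta + c}(z)$ yields $p^{(l)}_{\theta + c}(z) = p^{(l)}_{\theta}(z)$ for all $c$, i.e.\ $\theta$-invariance. The exponential case is identical with the multiplication $y \mapsto c y$, $c > 0$, in place of the translation: here $\theta$ is proportional to the scale, the ordering event is preserved because $\{c T(\tilde{Y}) > c \theta\} = \{T(\tilde{Y}) > \theta\}$, and the divergence is scale-invariant, $\mathcal{KL}(c \theta_1, c \theta_2) = \mathcal{KL}(\theta_1, \theta_2)$.

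Finally I would reduce the remaining families to one of the two base cases by a strictly monotone, invertible transformation $g$ of the data under which the Kullback--Leibler divergence is invariant: $\log$ turns the log-normal and the Pareto families into the Gaussian resp.\ the exponential base case, while the power map $y \mapsto y^k$ turns the Weibull family (and, for $k = 2$, the Rayleigh family) into the exponential one; the Gamma, Erlang, and scaled chi-squared families are themselves scale families and are covered directly by the exponential argument. Since such a $g$ maps the sufficient statistic and the mean parameter compatibly and preserves the ordering $\{T(Y) \leq \theta\}$ (as $g$ and the sufficient statistic are monotone), and since $\mathcal{KL}$ is unchanged under the invertible relabelling of the data, the joint law of $(\boldsymbol{1}\{T(Y) \leq \theta\}, \mathcal{KL}(T(Y), \theta))$ equals that of the associated base family and is therefore $\theta$-invariant.

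I expect the main obstacle to be this last step: one must check, family by family, that the chosen transformation genuinely identifies the sufficient statistic, the mean-value parameter $\theta = \mathbb{E}_{\theta}[T(Y)]$, and the divergence of the transformed family with those of the base family, and in particular that the monotone ordering used in the sets $\{T(Y) > \theta\}$ and $\{T(Y) \leq \theta\}$ is respected. The two base computations themselves are routine once the explicit location/scale form of $\mathcal{KL}$ under Assumption~\ref{A1} is in hand.
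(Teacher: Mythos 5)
Your proof is correct, and its skeleton matches the paper's: both arguments first transport every listed family onto one or two base families via the same strictly monotone data transformations ($\ln$ for log-normal and Pareto, $y \mapsto y^k$ for Weibull and Rayleigh, and the identifications $\mathrm{Exp}(1/\theta) = \Gamma(1,\theta)$, $\mathrm{Erlang} = \Gamma$, scaled chi-squared $= \Gamma$), using that the Kullback--Leibler divergence is unchanged under such invertible relabellings. Where you differ is in the treatment of the base case. The paper reduces everything to the Gaussian and Gamma families with $T = \mathrm{Id}$, observes via $\frac{d}{dy}\mathcal{KL}(y,\theta) = C(y) - C(\theta)$ that the events $\lbrace T(Y) > \theta \rbrace$ and $\lbrace T(Y) \leq \theta \rbrace$ are exactly the regions of monotonicity of $y \mapsto \mathcal{KL}(y,\theta)$, and then delegates the actual $\theta$-invariance on each region to \citep[Ex. 4.3]{BecMat2013}. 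You instead prove the base-case invariance self-containedly: you note that all three functions $p^{(l)}_{\theta}$ are functionals of the joint law of $\left( \boldsymbol{1}\lbrace T(Y) \leq \theta \rbrace, \mathcal{KL}(T(Y), \theta) \right)$ and show this joint law is constant in $\theta$ by a location-equivariance argument for the Gaussian (translation invariance of $\mathcal{KL}(\theta_1,\theta_2) = (\theta_1 - \theta_2)^2/(2\sigma^2)$) and a scale-equivariance argument for the exponential and, directly, the Gamma-type families (their divergence depends only on $\theta_1/\theta_2$). This buys a proof that does not lean on the external example and that handles the ordering indicator and the divergence simultaneously rather than through the monotonicity-region decomposition; the paper's route, in exchange, is shorter on the page and makes the role of the two monotonicity regions (which reappear in the proof of Lemma~\ref{lem:funcP2}) explicit. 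Your closing caveat is the right one to flag, but it is harmless here: in the paper's mean-value parametrization ($\mathbb{E}_{\theta}[T(Y)] = \theta$ after the reparametrizations the paper itself performs), each transformation identifies sufficient statistic, mean parameter, and divergence with those of the base family, so the ordering event is genuinely preserved.
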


The study in \cite[\S 4.1]{BecMat2013} suggests the invariance of the homogeneous propagation condition with respect to the parameter~$\theta$ for the Gaussian, log-normal, exponential, Rayleigh, Weibull, and Pareto distribution. 
In the following lemma, we take advantage of this invariance.
There, we completely determine the corresponding function~$\mathfrak{Z}_{\lambda}$ via the distribution of the positions of the observations around the respective parameter~$\vartheta \in \Theta$ given by the functions~$p^{(l)}_{\vartheta}$, $l=1,2,3$.

\begin{Lem}\label{lem:funcP2}
Assume the setting of Notation~\ref{not:inhomVsHom}. 
If the homogeneous propagation condition is invariant with respect to the parameter $\vartheta \in \Theta$, then the corresponding function~$\mathfrak{Z}_{\lambda}$ is uniquely determined by the functions~$p^{(l)}_{\vartheta}$, $l=1,2,3$, for every $\vartheta \in \Theta$.
\end{Lem}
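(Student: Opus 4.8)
The plan is to exhibit an explicit functional that reconstructs $\mathfrak{Z}_\lambda$ from the triple $(p^{(1)}_\vartheta, p^{(2)}_\vartheta, p^{(3)}_\vartheta)$ of Notation~\ref{not:funcP}, and then to use the assumed invariance to discard the residual dependence on $\vartheta$ itself. First I would record the geometry of the map $t \mapsto \mathcal{KL}(t, \vartheta)$. Under Assumption~\ref{A1} one computes $\mathcal{KL}(t, \vartheta) = t\,C(t) - B(t) - t\,C(\vartheta) + B(\vartheta)$, whose derivative in $t$ equals $C(t) - C(\vartheta)$ because $B'(t) = t\,C'(t)$. Since $C$ is strictly increasing, $\mathcal{KL}(\cdot, \vartheta)$ is strictly decreasing on $\lbrace t < \vartheta \rbrace$ and strictly increasing on $\lbrace t > \vartheta \rbrace$, with unique minimum $0$ at $t = \vartheta$. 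Hence, on each side of $\vartheta$, the value $\mathcal{KL}(t, \vartheta)$ together with the sign of $t - \vartheta$ determines $t$ bijectively.

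Second, I would reparametrize each observation. For $Y_j \sim \mathbb{P}_\vartheta$ set $V_j := (\operatorname{sgn}(T(Y_j) - \vartheta),\, \mathcal{KL}(T(Y_j), \vartheta))$. By the previous step this is an almost surely bijective recoding of $T(Y_j)$, and its law is exactly encoded by Notation~\ref{not:funcP}, since $p^{(1)}_\vartheta(z) = \mathbb{P}(T(Y) > \vartheta,\ \mathcal{KL}(T(Y), \vartheta) > z)$, $p^{(2)}_\vartheta(z) = \mathbb{P}(T(Y) \le \vartheta,\ \mathcal{KL}(T(Y), \vartheta) > z)$, and $p^{(3)}_\vartheta(z) = \mathbb{P}(T(Y) \le \vartheta,\ \mathcal{KL}(T(Y), \vartheta) \le z)$ with $Y \sim \mathbb{P}_\vartheta$. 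As the $\lbrace Y_j \rbrace$ are independent and identically distributed, the joint law of $(V_1, \dots, V_n)$ is the $n$-fold product and is therefore completely determined by $(p^{(l)}_\vartheta)_{l=1,2,3}$.

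Third, I would observe that the target quantity is a fixed measurable functional of this recoding. In the simplified procedure of Notation~\ref{algorithm} the statistical penalties $s_{ij}^{(k)}$ compare the current estimates with one another only through $\mathcal{KL}(\tilde{\theta}_i^{(k-1)}, \tilde{\theta}_j^{(k-1)})$, so the map $(T(Y_1), \dots, T(Y_n)) \mapsto \tilde{\theta}_i^{(k)}$ is a deterministic, $\vartheta$-free function fixed by the design, the kernels, and $\lambda$ alone. Inverting the recoding of the second step (which uses $\vartheta$) to recover $(T(Y_1), \dots, T(Y_n))$, running this iteration, and finally evaluating $\overline{N}_i^{(k)} \mathcal{KL}(\tilde{\theta}_i^{(k)}, \vartheta)$ expresses the target as a measurable functional of $(V_1, \dots, V_n)$ and $\vartheta$. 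Combined with the second step, the distribution of $\overline{N}_i^{(k)} \mathcal{KL}(\tilde{\theta}_i^{(k)}, \vartheta)$, and hence its quantile $\mathfrak{Z}_\lambda(k, p; \vartheta, i)$, is a functional of $(p^{(l)}_\vartheta)_l$ together with $\vartheta$.

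Finally, I would remove the explicit $\vartheta$ using the invariance hypothesis, which I take to mean that $\mathfrak{Z}_\lambda(\cdot, p; \vartheta, i)$ does not depend on $\vartheta$. The reconstruction in the third step still refers to $\vartheta$, and for the parameter-invariant families the functions $p^{(l)}_\vartheta$ carry no information about $\vartheta$ at all; the point is that by assumption $\mathfrak{Z}_\lambda$ is independent of $\vartheta$ too. Concretely, if two parameters $\vartheta, \vartheta'$ yield the same triple $(p^{(l)}_\vartheta) = (p^{(l)}_{\vartheta'})$, invariance gives $\mathfrak{Z}_\lambda(\cdot; \vartheta) = \mathfrak{Z}_\lambda(\cdot; \vartheta')$, so the assignment $(p^{(l)}_\vartheta)_l \mapsto \mathfrak{Z}_\lambda(\cdot; \vartheta)$ is single-valued; constructively, it may be evaluated at any convenient reference parameter. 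I expect the main obstacle to be exactly this last point: showing that the distribution of the target is pinned down by the signed Kullback--Leibler positions of the observations while the reconstruction of those positions unavoidably uses $\vartheta$, so that the invariance hypothesis is genuinely needed to certify that the leftover $\vartheta$-dependence is immaterial.
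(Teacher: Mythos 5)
Your proof is correct and takes essentially the same route as the paper's: both arguments rest on the monotonicity of $y \mapsto \mathcal{KL}(T(y),\vartheta)$ on the two regions $\{T(y) \le \vartheta\}$ and $\{T(y) > \vartheta\}$ (derived from $B'(\theta)=\theta C'(\theta)$ and the strict monotonicity of $C$ and $T$), use this to reconstruct the law of the observations from $p^{(1)}_{\vartheta}, p^{(2)}_{\vartheta}, p^{(3)}_{\vartheta}$, observe that $\mathfrak{Z}_{\lambda}$ is a fixed functional of that law (and of $\vartheta$), and invoke the invariance hypothesis to discharge the residual $\vartheta$-dependence. If anything, your write-up is more explicit than the paper's at the final step --- the well-definedness of the assignment $(p^{(l)}_{\vartheta})_{l} \mapsto \mathfrak{Z}_{\lambda}$ and the fact that only the law of $T(Y)$, not of $Y$ itself, needs to be recovered --- where the paper argues tersely that it ``suffices'' to reconstruct $\mathbb{P}_{\vartheta}$ via the inverse of $g_{\vartheta}$.
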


In other words, the homogeneous propagation condition is determined by the probability distributions of~$\mathcal{KL}(T(\mathcal{Y}), \vartheta)$ on $\lbrace T(\mathcal{Y}) > \vartheta \rbrace$ and on $\lbrace T(\mathcal{Y}) \leq \vartheta \rbrace$. 
Under inhomogeneity, we  have to additionally compensate for the local variability of the parameter function. 
We investigate the interplay of the observations via the distribution of~$\mathcal{KL} (Y_i, Y_j)$,
which we compare with its homogeneous counterpart $\mathcal{KL} (\mathcal{Y}_i, \mathcal{Y}_j)$.
For simplicity, we presume the sufficient statistic~$T$ in Assumption~\ref{A1} to equal the identity. 
Instead of that, we could replace in the following all observations~$Y_i$ and~$\mathcal{Y}_i$ by the transformed observations~$T(Y_i)$ and~$T(\mathcal{Y}_i)$, leading, for every strictly monotonic~$T$, to the same results but more tedious terms.
We restrict to the favorable realizations, where the corresponding event~$M_0$ is related to the event~$\Omega_{\varkappa}$ in Corollary~\ref{cor:pKappa}.	

\begin{Prop}\label{prop:inhomPC}
Suppose the setting of Notation~\ref{not:inhomVsHom}, $T=\mathrm{Id}$, 
and let the functions~$p^{(l)}_{\theta}$, $l=1,2,3$, be invariant with respect to the parameter $\theta \in \Theta^*$. 
Additionally, recall Notation~\ref{not:varkappa}, and let $\varkappa \geq 1$ satisfy $\lbrace \vartheta \rbrace \cup \lbrace \theta_i \rbrace_{i=1}^n \in (\Theta_{\varkappa})^{n+1}$, where $\Theta_{\varkappa} \subseteq \Theta$ maximizes the probability of the event
\[
	M_0 := \bigcap_{i=1}^n\lbrace Y_i, \mathcal{Y}_i \in \Theta_{\varkappa} \rbrace. 
\]
Then, for all $z > \varkappa^2 \varphi_0^2$ and every $i,j \in \lbrace 1,...,n \rbrace$, it holds
\[
	\mathbb{P}\left( \lbrace \mathcal{KL} (Y_i, Y_j) > z \rbrace \vert M_0 \right) 
	\leq \mathbb{P}\left( \left\{ \varkappa^2 [ \varkappa \, \mathcal{KL}^{1/2} (\mathcal{Y}_i, \mathcal{Y}_j) + \varphi_0 ]^2 > z \right\} \vert M_0\right).
\]
\end{Prop}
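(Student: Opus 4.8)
The plan is to place both data sets on a common probability space through a coupling and to reduce the asserted comparison of conditional probabilities to the single deterministic inequality
\[
	\mathcal{KL}^{1/2}(Y_i, Y_j) \leq \varkappa \bigl[ \varkappa\, \mathcal{KL}^{1/2}(\mathcal{Y}_i, \mathcal{Y}_j) + \varphi_0 \bigr]
\]
holding pointwise on the event $M_0$. Granting this, squaring gives $\mathcal{KL}(Y_i,Y_j) \leq \varkappa^2[\varkappa \mathcal{KL}^{1/2}(\mathcal{Y}_i,\mathcal{Y}_j)+\varphi_0]^2$ on $M_0$, so that $\{\mathcal{KL}(Y_i,Y_j) > z\} \cap M_0 \subseteq \{\varkappa^2[\varkappa \mathcal{KL}^{1/2}(\mathcal{Y}_i,\mathcal{Y}_j)+\varphi_0]^2 > z\} \cap M_0$; taking probabilities and dividing by $\mathbb{P}(M_0) > 0$ then yields the statement. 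The restriction $z > \varkappa^2\varphi_0^2$ serves only to place us in the regime where the right-hand event is non-degenerate, the complementary range being trivial.

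For the coupling I would exploit the hypothesis that $p^{(1)}_\theta, p^{(2)}_\theta, p^{(3)}_\theta$ of Notation~\ref{not:funcP} are invariant on $\Theta^*$. Since $T = \mathrm{Id}$, these three functions jointly encode the law of the pair $(\boldsymbol{1}\{Y > \theta\}, \mathcal{KL}(Y, \theta))$, and hence of the signed root-divergence $V_\theta(Y) := \operatorname{sign}(Y-\theta)\,\mathcal{KL}^{1/2}(Y,\theta)$, under $Y \sim \mathbb{P}_\theta$; their invariance means this law does not depend on $\theta \in \Theta^*$. A short computation using $B'(\theta)=\theta C'(\theta)$ gives $\tfrac{d}{da}\mathcal{KL}(a,\theta) = C(a)-C(\theta)$, so that strict monotonicity of $C$ (Assumption~\ref{A1}) makes $y \mapsto V_\theta(y)$ strictly increasing. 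A quantile (probability-integral-transform) coupling driven by a common uniform variable therefore produces, for each $i$, versions of $Y_i \sim \mathbb{P}_{\theta_i}$ and $\mathcal{Y}_i \sim \mathbb{P}_\vartheta$ with $V_{\theta_i}(Y_i) = V_\vartheta(\mathcal{Y}_i)$ almost surely. In words, $Y_i$ sits relative to its mean $\theta_i$ exactly as $\mathcal{Y}_i$ sits relative to $\vartheta$: same side and identical root-divergence $r_i := \mathcal{KL}^{1/2}(Y_i,\theta_i) = \mathcal{KL}^{1/2}(\mathcal{Y}_i,\vartheta)$.

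It then remains to transport this matched-deviation structure into the pointwise bound. The point is that $\mathcal{KL}^{1/2}(\mathcal{Y}_i,\mathcal{Y}_j)$ is precisely the homogeneous distance between the two configurations centred at $\vartheta$, while $\mathcal{KL}^{1/2}(Y_i,Y_j)$ is the analogous distance with the two centres displaced to $\theta_i$ and $\theta_j$. On $M_0$ all the relevant points $Y_i,Y_j,\mathcal{Y}_i,\mathcal{Y}_j$ lie in $\Theta_\varkappa$, and $\vartheta,\theta_i,\theta_j \in \Theta_\varkappa$ by hypothesis, so I may invoke the near-symmetry and quasi-triangle properties of $\mathcal{KL}^{1/2}$ recorded in Notation~\ref{not:varkappa}, Equation~\eqref{eq:varkappa} and Lemma~\ref{lem:PS52}, which on $\Theta_\varkappa$ render the root-divergence comparable, up to the factor $\varkappa$, to a genuinely translation-invariant metric. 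Passing the centre from $\vartheta$ to the pair $(\theta_i,\theta_j)$ through an intermediate point of identical deviation contributes one factor $\varkappa$ to $\mathcal{KL}^{1/2}(\mathcal{Y}_i,\mathcal{Y}_j)$ and, after absorbing the displacement $\mathcal{KL}^{1/2}(\theta_i,\theta_j)\leq\varphi_0$ of the centres, the additive term $\varphi_0$, while a final quasi-triangle step supplies the outer factor $\varkappa$, producing the claimed inequality.

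I expect this last geometric step to be the genuine obstacle: keeping the powers of $\varkappa$ at exactly $\varkappa^2$ on the homogeneous distance and $\varkappa$ on $\varphi_0$, rather than the cruder $\varkappa^3$ or $\varkappa^4$ that naive three-term quasi-triangle chaining through $\theta_i$ and $\theta_j$ would yield. The crude chaining is too lossy because it replaces the signed difference $r_i - r_j$, which can be small when $\mathcal{Y}_i$ and $\mathcal{Y}_j$ lie on the same side of $\vartheta$, by the sum $r_i + r_j$; the sign information carried by the coupling must therefore be genuinely used, and this is exactly the role of the comparison with the translation-invariant model furnished by Equation~\eqref{eq:varkappa}.
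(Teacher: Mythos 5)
Your plan is, in substance, the paper's own proof re-expressed as a coupling. The paper never couples explicitly: it stratifies by the sign events $M_1 := \lbrace (Y_i - \theta_i)(Y_j - \theta_j) \geq 0 \rbrace$, $M_2$ and their homogeneous analogues $M_3, M_4$ (Equations~\eqref{eq:M1-M2} and~\eqref{eq:M3-M4}) and uses the invariance of $p^{(l)}_{\theta}$, $l=1,2,3$, to \emph{equate the probabilities} of the corresponding stratified events; your quantile coupling with matched signed root-divergences $r_i = \mathcal{KL}^{1/2}(Y_i,\theta_i) = \mathcal{KL}^{1/2}(\mathcal{Y}_i,\vartheta)$ is exactly the almost-sure realization of that distributional identification, and it forces $M_1 = M_3$, $M_2 = M_4$. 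Your reduction to a pointwise inequality on $M_0$, your reading of the restriction $z > \varkappa^2\varphi_0^2$, and your diagnosis that naive chaining through $r_i + r_j$ is too lossy on the same-sign event are all correct and aligned with the paper. (One caveat your coupling shares with the paper's distributional version: $\lbrace Y_i \in \Theta_{\varkappa}\rbrace$ and $\lbrace \mathcal{Y}_i \in \Theta_{\varkappa}\rbrace$ need not coincide as functions of $(\mathrm{sign}, r_i)$ when the centres $\theta_i$ and $\vartheta$ differ, so the compatibility of $M_0$ with the sign strata is glossed over in both treatments.)

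The genuine gap is that you leave unproved precisely the estimate you flag as ``the genuine obstacle,'' and that estimate is where the paper's proof does its work --- but it closes with tools you already put on the table, so no new idea is needed. By Taylor with Lagrange remainder (the paper's Equation~\eqref{eq:KL-Taylor}), $\mathcal{KL}(\theta_1,\theta_2) = \tfrac{1}{2I(\theta^{\ast})}[C(\theta_1)-C(\theta_2)]^2$ for an intermediate $\theta^{\ast}$; in the canonical coordinate $v = C(\theta)$, the root-divergence is thus a Euclidean distance up to a weight $1/\sqrt{2I}$ varying by at most a factor $\varkappa$ over $\Theta_{\varkappa}$ by Equation~\eqref{eq:varkappa} (convexity of $\Theta_{\varkappa}$ keeps all intermediate points inside, which is why $M_0$ is needed). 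On the same-sign event, Corollary~\ref{cor:basicBounds} applied with $a_1 = C(Y_i)$, $a_2 = C(Y_j)$, $b_m = C(\theta_m)$ --- the sign hypothesis transfers through the strict monotonicity of $C$ --- yields $\mathcal{KL}^{1/2}(Y_i,Y_j) \leq \varkappa\bigl[\,\lvert r_i - r_j\rvert + \varphi_0\bigr]$ at the cost of a single factor $\varkappa$, while the opposite-sign event uses Lemma~\ref{lem:PS52} to give $\varkappa\bigl[r_i + r_j + \varphi_0\bigr]$. Conversely, on the homogeneous side the same coordinates give $\lvert r_i - r_j\rvert \leq \varkappa\,\mathcal{KL}^{1/2}(\mathcal{Y}_i,\mathcal{Y}_j)$ on $M_3$ (same side of $\vartheta$, so the difference of absolute $C$-deviations equals one $C$-difference) and $r_i + r_j \leq \varkappa\,\mathcal{KL}^{1/2}(\mathcal{Y}_i,\mathcal{Y}_j)$ on $M_4$ (opposite sides, so the sum telescopes). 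Composing the two displays under your coupling gives exactly $\mathcal{KL}^{1/2}(Y_i,Y_j) \leq \varkappa\bigl[\varkappa\,\mathcal{KL}^{1/2}(\mathcal{Y}_i,\mathcal{Y}_j) + \varphi_0\bigr]$ on $M_0$: one conversion factor $\varkappa$ per step, hence $\varkappa^2$ on the homogeneous distance and $\varkappa$ on $\varphi_0$, not the $\varkappa^3$ or $\varkappa^4$ you feared. With that step filled in, your proposal coincides with the paper's argument.
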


Now we propose a precise choice of the adaptation bandwidth for the case of a (piecewise) bounded parameter function. 

\begin{Claim}\label{claim}
Let Assumption~\ref{A1} be satisfied with $T: \mathcal{Y} \to \mathbb{R}$ strictly monotonic, and fix a subset $\Theta^* \subseteq \Theta$ and some constant $\varphi_0 := \varphi / \max_i (\overline{N}_i^{(k^*)} )^{1/2}$ with $\varphi > 0$. 
Additionally, let the homogeneous propagation condition and the functions~$p^{(l)}_{\theta}$, $l=1,2,3$, be invariant with respect to the parameter $\theta \in \Theta$. 
Finally, we presume the adaptation bandwidth $\lambda>0$ to be in accordance with the homogeneous propagation condition at level $\epsilon > 0$.
Then, the choice
\[
	\lambda_{\varphi} := \varkappa^4 \left[ \sqrt{\lambda} + \varphi \right]^2
\]
is in accordance with the inhomogeneous propagation condition at probability level $\epsilon (\lambda_{\varphi}) \leq \epsilon + 2 p_{\varkappa}$ and variability level~$\varphi_0$ for the parameter set~$\Theta^*$, where~$p_{\varkappa}$ is as in Notation~\ref{not:pKappa}.
\end{Claim}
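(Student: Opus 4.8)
The plan is to run the simplified algorithm of Notation~\ref{algorithm} simultaneously on the inhomogeneous data $\lbrace (X_i,Y_i)\rbrace$ with adaptation bandwidth $\lambda_{\varphi}$ and on the homogeneous data $\lbrace (X_i,\mathcal{Y}_i)\rbrace$ with adaptation bandwidth $\lambda$, and to show that, on the favorable event $M_0$ of Proposition~\ref{prop:inhomPC}, the inhomogeneous adaptive weights dominate the homogeneous ones, in the sense that $\tilde{w}_{ij}^{(k)}(\lambda_{\varphi}) > 0$ whenever the corresponding homogeneous weight is positive. Since $\lambda$ is in accordance with the homogeneous propagation condition at level $\epsilon$, the function $\mathfrak{Z}_{\lambda}(\cdot,p;\vartheta,i)$ is non-increasing for all $p\in(\epsilon,1)$; the weight domination transfers this monotonicity to $\hat{\mathfrak{Z}}_{\lambda_{\varphi}}(\cdot,p;\lbrace\theta_j\rbrace,i)$, while the residual probability $\mathbb{P}(M_0^c)\leq 2p_{\varkappa}$ (one factor $p_{\varkappa}$ for each data set, via Corollary~\ref{cor:pKappa} and Notation~\ref{not:pKappa}) accounts precisely for the increase of the level from $\epsilon$ to $\epsilon+2p_{\varkappa}$.

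The core computation translates Proposition~\ref{prop:inhomPC} into a comparison of statistical penalties. Writing $s_{ij}^{(k)}$ for the inhomogeneous penalty (bandwidth $\lambda_{\varphi}$) and $s_{ij}^{\mathrm{hom},(k)}$ for its homogeneous counterpart (bandwidth $\lambda$), and using $T=\mathrm{Id}$ together with $h^{(0)}$ so small that $\overline{w}_{ij}^{(0)}=0$ for $X_i\neq X_j$, the base case $k=1$ gives $\tilde{\theta}_i^{(0)}=Y_i$, $\tilde{N}_i^{(0)}=1$, so that $s_{ij}^{(1)}=\mathcal{KL}(Y_i,Y_j)$. Proposition~\ref{prop:inhomPC} bounds the conditional tail of $\mathcal{KL}(Y_i,Y_j)$ on $M_0$ by that of $\varkappa^2[\varkappa\,\mathcal{KL}^{1/2}(\mathcal{Y}_i,\mathcal{Y}_j)+\varphi_0]^2$; taking square roots and multiplying by $(\tilde{N}_i^{(k)})^{1/2}\leq(\max_i\overline{N}_i^{(k^*)})^{1/2}$, which turns $\varphi_0=\varphi/\max_i(\overline{N}_i^{(k^*)})^{1/2}$ into $\varphi$ by the definition of $\varphi_0$ and by $\tilde{N}_i^{(k)}\leq\overline{N}_i^{(k)}\leq\max_i\overline{N}_i^{(k^*)}$, yields $(s_{ij}^{(k)})^{1/2}\leq\varkappa[\varkappa\,(s_{ij}^{\mathrm{hom},(k)})^{1/2}+\varphi]$. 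Consequently, whenever $s_{ij}^{\mathrm{hom},(k)}<\lambda$ (so that the homogeneous weight is positive), one obtains, using $\varkappa\geq1$, the estimate $(s_{ij}^{(k)})^{1/2}<\varkappa[\varkappa\sqrt{\lambda}+\varphi]\leq\varkappa^2[\sqrt{\lambda}+\varphi]=\sqrt{\lambda_{\varphi}}$, i.e. $s_{ij}^{(k)}<\lambda_{\varphi}$. As $K_{\mathrm{ad}}$ is non-increasing with support $[0,1)$, this forces $\tilde{w}_{ij}^{(k)}(\lambda_{\varphi})>0$, which is the desired domination, and it shows transparently that the two factors of $\varkappa$ carried by the bound of Proposition~\ref{prop:inhomPC} are exactly what produces $\lambda_{\varphi}=\varkappa^4[\sqrt{\lambda}+\varphi]^2$ upon squaring.

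To close the argument one must extend the penalty comparison from the first step to all $k\leq k^*$, and this is the main obstacle: for $k\geq 2$ the penalty $s_{ij}^{(k)}=\tilde{N}_i^{(k-1)}\mathcal{KL}(\tilde{\theta}_i^{(k-1)},\tilde{\theta}_j^{(k-1)})$ involves the previous \emph{estimates} rather than single observations, whereas Proposition~\ref{prop:inhomPC} is stated at the observation level. Since the adaptive weights, their sum $\tilde{N}_i^{(k)}$, and the estimates $\tilde{\theta}_i^{(k)}$ form a coupled recursion, the penalty comparison and the weight domination have to be established jointly by induction on $k$, with the favorable event $M_0$ kept fixed throughout so that the probability loss remains the single term $\mathbb{P}(M_0^c)\leq 2p_{\varkappa}$. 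I would carry out the inductive step by lifting the observation-level comparison to the estimate level through the restriction to $\Theta_{\varkappa}$ of Notation~\ref{not:varkappa} and Lemma~\ref{lem:PS52}: on $\Theta_{\varkappa}$ the Kullback--Leibler divergence is comparable, up to factors absorbed into $\varkappa$, to a squared Euclidean distance, which is compatible with the weighted means defining the estimates and hence with the square-root triangle-type inequality above. Once the domination holds on $M_0$ for every $k$, the non-increasing behaviour of $\mathfrak{Z}_{\lambda}(\cdot,p;\vartheta,i)$ transfers to $\hat{\mathfrak{Z}}_{\lambda_{\varphi}}(\cdot,p;\lbrace\theta_j\rbrace,i)$ for all $p\in(\epsilon+2p_{\varkappa},1)$ and every admissible parameter function, which is precisely the inhomogeneous propagation condition at probability level $\epsilon(\lambda_{\varphi})\leq\epsilon+2p_{\varkappa}$ and variability level $\varphi_0$ for $\Theta^*$.
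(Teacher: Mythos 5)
Your proposal retraces, in outline, the paper's own justification of this Claim (base case $k=1$ via Proposition~\ref{prop:inhomPC}, the two $\varkappa$-factors squaring to $\varkappa^4$, an induction over $k$ on the favorable event $M_0$, and the loss $\mathbb{P}(M_0^c)\leq 2p_{\varkappa}$ producing the level $\epsilon+2p_{\varkappa}$), but it asserts as provable two steps that in fact fail. First, Proposition~\ref{prop:inhomPC} is a comparison of \emph{tail probabilities} of $\mathcal{KL}(Y_i,Y_j)$ and of $\varkappa^2[\varkappa\,\mathcal{KL}^{1/2}(\mathcal{Y}_i,\mathcal{Y}_j)+\varphi_0]^2$ conditionally on $M_0$; the two data sets of Notation~\ref{not:inhomVsHom} carry no joint distribution, so your pathwise inequality $(s_{ij}^{(k)})^{1/2}\leq\varkappa\bigl[\varkappa\,(s_{ij}^{\mathrm{hom},(k)})^{1/2}+\varphi\bigr]$ — and with it the realization-by-realization weight domination on which your whole plan rests — simply does not follow from it. You would need to construct an explicit coupling of the two runs (say via quantile transforms, exploiting the invariance of the $p^{(l)}_{\theta}$), which you never do. The multiplication by $(\tilde{N}_i^{(k-1)})^{1/2}$ compounds the problem: the sums of adaptive weights differ between the two runs, and your own domination claim would make the inhomogeneous $\tilde{N}_i^{(k-1)}(\lambda_{\varphi})$ \emph{larger}, intensifying the penalty and working against the comparison you want. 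Moreover, positivity domination over a parallel homogeneous run does not by itself transfer monotonicity of $\mathfrak{Z}_{\lambda}$ to $\hat{\mathfrak{Z}}_{\lambda_{\varphi}}$: the latter is built from $\overline{N}_i^{(k)}\mathcal{KL}(\tilde{\theta}_i^{(k)},\mathcal{E}\tilde{\theta}_i^{(k)})$, which is controlled by closeness of the adaptive weights to the \emph{non-adaptive} ones (the induction hypothesis the paper actually uses), not by having more positive weights than another run. (Your base case also silently assumes $\overline{w}_{ij}^{(0)}=0$ for $X_i\neq X_j$, which is not part of the Claim; the paper notes this restriction explicitly and must argue past it.)

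Second, and more fundamentally, your inductive step lands exactly on the point where the paper concedes that no proof exists: for $k\geq 2$ the penalties involve the previous estimates, and the distributions of $\tilde{\vartheta}_l^{(k-1)}$ and $\tilde{\theta}_l^{(k-1)}$ may differ considerably, because the adaptive weights are random and are themselves influenced by the variability of $\theta(.)$. Your suggested repair — that on $\Theta_{\varkappa}$ the Kullback--Leibler divergence is comparable, up to $\varkappa$-factors via the Taylor identity~\eqref{eq:KL-Taylor}, to a squared distance compatible with weighted means — does not resolve this: the metric comparison is fine, but the obstruction is the data-dependence of the weights, so no triangle-type inequality relates the two runs' estimates either pathwise or in distribution. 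This is precisely why the statement is presented as a Claim with a ``justification'' in Appendix~\ref{app:proofs} rather than a theorem with a proof: the paper decomposes $s_{ij}^{(k)}$ via Lemma~\ref{lem:PS52} into fluctuation terms $\mathcal{KL}^{1/2}(\tilde{\theta}^{(k-1)},\mathcal{E}\tilde{\theta}^{(k-1)})$ plus a bias term $\mathcal{KL}^{1/2}(\mathcal{E}\tilde{\theta}_i^{(k-1)},\mathcal{E}\tilde{\theta}_j^{(k-1)})\leq\varkappa\varphi_0$, controls the homogeneous analog by $\varkappa\sqrt{\lambda}$, and then only \emph{motivates}, via Theorems~\ref{thm:PS21} and~\ref{thm:PS22} and an assumed similarity of adaptive to non-adaptive weights, that the same control persists under bounded variability, stating outright that this step cannot be proven. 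So your proposal identifies the right ingredients and the right constants, but an honest version of it would have to either construct the missing coupling and control $\tilde{N}_i^{(k-1)}$ across runs — neither of which the paper's toolkit supplies — or settle, as the paper does, for a heuristic inductive argument rather than a proof.
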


Admittedly, the iterative approach of the algorithm impedes a definite proof. 
Instead, we will present a justification of Claim~\ref{claim} in Appendix~\ref{app:proofs}, where we will follow an inductive argumentation in order to overcome the remaining gap at least to a certain extent.

\begin{Rem}
Recall that the parameter set $\Theta^* \subseteq \Theta$ and the variability level~$\varphi_0$ influence the sizes of the constant $\varkappa \geq 1$ and the corresponding probability~$p_{\varkappa}$.
Moreover, we point out that the probability level $\epsilon + 2 p_{\varkappa}$ is an upper bound for the actual probability level~$\epsilon (\lambda_{\varphi})$, but this bound does not have to be sharp.
Similarly, the proposed choice of~$\lambda_{\varphi}$ is based on a rough estimation, where the effectively required values of~$\varkappa$ and~$\varphi$ may be much smaller than supposed.
Hence, in practice, one can always use a smaller bandwidth $\lambda^* < \lambda_{\varphi}$ with an unknown probability level if this seems to be advantageous.
This may increase the risk of adaptation to noise as usually  $\epsilon(\lambda^*) \geq \epsilon (\lambda_{\varphi})$, but the main property, propagation with probability $1-\epsilon(\lambda^*)$, remains valid. 
In any case, one should use the homogeneous adaptation bandwidth~$\lambda$ as a lower bound, $\lambda \leq \lambda^*$.
Recalling Example~\ref{ex:pKappa} concerning the trade-off between~$\varkappa$ and~$p_{\varkappa}$, we conclude the following.
\begin{compactitem}
\item Claim~\ref{claim} provides a reasonable choice of the adaptation bandwidth 
if $\varkappa = 1$ and $p_{\varkappa} = 0$, such as for Gaussian and log-normal distributed observations.
\item For Gamma, Erlang, scaled chi-squared, exponential, Rayleigh, Weibull, and Pareto distribution, $\varkappa$~and $p_{\varkappa}$ are large.
For these distributions, Claim~\ref{claim} justifies the \emph{existence} of an adaptation bandwidth~$\lambda_{\varphi}$ which is in accordance with the inhomogeneous propagation condition at level~$ \epsilon(\lambda_{\varphi}) \leq \epsilon + 2 p_{\varkappa}$, but its practical use is questionable due to the sizes of~$\varkappa$ and~$p_{\varkappa}$. 
\end{compactitem}
\end{Rem}

\section{Simulations}\label{sec:simulations}

In \citep{BecMat2013} and Section~\ref{sec:Theory}, we established several theoretical properties of the simplified Propagation-Separation Approach in Notation~\ref{algorithm}.
Here, we will illustrate these properties by simulated examples with Gaussian and exponentially distributed observations.
In particular, we will compare the results of the simplified algorithm with the original procedure in Algorithm~\ref{algorithmMS}.
We will present several example plots for the realization $\texttt{seed=1}$, the corresponding weighting schemes, and boxplots of the mean absolute error (MAE) over $1000$ realizations, where $\texttt{seed=l}$ and $l \in \lbrace 1,..., 1000 \rbrace$.

\subsection{Test functions and methods}\label{sec:numerics-testf}

Here, we present all test functions that we will consider in the following numerical study.
Usually, we simulated data with $n = 1000$ observations.
In some examples, we changed the sample size $n \in \mathbb{N}$.
Then, we increased the cardinality of each region of the introduced parameter functions by the same factor such that the design portions remained unchanged. 

We used the implementation of the Propagation-Separation Approach in the R-package \pkg{aws} by \citet{aws}.
Here, the memory step is omitted by default.
If desired, it can be included in the procedure setting \texttt{memory=TRUE}.
The memory step is implemented for two different memory kernels, which can be specified by \texttt{aggkern=\textquotedbl{}Triangle\textquotedbl{}} or \texttt{aggkern=\textquotedbl{}Uniform\textquotedbl{}}.
If not mentioned differently, we applied the default parameters using the command
\begin{eqnarray*}
\texttt{that} & \texttt{<-} & \texttt{aws(tnoise, u=t, hmax=10000, } \\ 
&& \texttt{homogen=FALSE, maxni=TRUE)},
\end{eqnarray*}
where \texttt{tnoise} denotes the simulated observations and \texttt{t} is the corresponding expectation.
For the sake of simplicity, we show only univariate examples where $\mathcal{X} \subseteq \mathbb{R}$.

By means of an additionally included function \texttt{awsweights}, we visualized the weighting schemes of the resulting non-adaptive weights~$\lbrace \overline{w}_{ij}^{(k)} \rbrace_{i,j}$, the adaptive weights~$\lbrace \tilde{w}_{ij}^{(k)} \rbrace_{i,j}$, 
and the adaptation kernel $\lbrace K_{\mathrm{ad}} (s_{ij}^{(k)} / \lambda) \rbrace_{i,j}$, which equals the quotient $\tilde{w}_{ij}^{(k)} / \overline{w}_{ij}^{(k)} = K_{\mathrm{ad}} (s_{ij}^{(k)} / \lambda)$ if $\overline{w}_{ij}^{(k)} > 0$.
The values of these quantities are shown in grey scales, where zero corresponds to black and one to white, respectively.
Moreover, we included a scaling factor \texttt{tadjust}, which allows a manipulation of the memory bandwidth~$\tau$.
In the package \pkg{aws}, this is given as 
\[
	\tau^{(k)} := (2*\tau_1+\tau_1 * \max \lbrace k_{\text{star}} - \log (h^{(k)}), 0 \rbrace), 
\]
where the constant~$k_{\text{star}}$ depends on the family of probability distributions~$\mathcal{P}$, and
\[
	\tau_1 := \begin{cases}
	\texttt{ladjust*tadjust*20}, \quad & \text{ if } \texttt{aggkern=\textquotedbl{}Triangle\textquotedbl{}} \\
	\texttt{ladjust*tadjust*8}, \quad & \text{ if } \texttt{aggkern=\textquotedbl{}Uniform\textquotedbl{}}
	\end{cases}
\]
with $\texttt{ladjust=1}$ and $\texttt{tadjust=1}$ by default.

We applied three test functions where the structural Assumption~\ref{ASinhom} is violated.
First, we used a piecewise smooth function, given as
\begin{equation}\label{eq:polynTestf}
	\theta (x) := \begin{cases} 
		7 + x/250 \quad &\text{if } x \in \lbrace 1,..., 250 \rbrace, \\
		11 + ((x-450)/100)^2/2 \quad &\text{if } x \in \lbrace 251,..., 750 \rbrace, \\
		6 - (x-750)/200 \quad &\text{if } x \in \lbrace 751,..., 1000 \rbrace. 
	        \end{cases}
\end{equation}
Second, we used a piecewise constant function with small discontinuities and three different regions of monotonicity,
\begin{equation}\label{eq:updownTestf}
\begin{array}{lllll}
	\theta(x) := 0, & x \in \lbrace 1,..., 50 \rbrace, & \qquad &
	\theta(x) := 2.2, & x \in \lbrace 451,..., 500 \rbrace, \\
	\theta(x) := 0.5, & x \in \lbrace 51,..., 100 \rbrace, & \qquad &
	\theta(x) := 1.7, & x \in \lbrace 501,..., 550 \rbrace, \\
	\theta(x) := 1, & x \in \lbrace 101,..., 150 \rbrace, & \qquad &
	\theta(x) := 1.2, & x \in \lbrace 551,..., 600 \rbrace, \\
	\theta(x) := 1.5, & x \in \lbrace 151,..., 200 \rbrace, & \qquad &
	\theta(x) := 0.7, & x \in \lbrace 601,..., 650 \rbrace, \\
	\theta(x) := 2, & x \in \lbrace 201,..., 250 \rbrace, & \qquad &
	\theta(x) := 0.9, & x \in \lbrace 651,..., 750 \rbrace, \\
	\theta(x) := 2.5, & x \in \lbrace 251,..., 300 \rbrace, & \qquad &
	\theta(x) := 1.6, & x \in \lbrace 751,..., 800 \rbrace, \\
	\theta(x) := 3, & x \in \lbrace 301,..., 350 \rbrace, & \qquad &
	\theta(x) := 2.6, & x \in \lbrace 801,..., 900 \rbrace, \\
	\theta(x) := 3.2, & x \in \lbrace 351,..., 400 \rbrace, & \qquad &
	\theta(x) := 2.9, & x \in \lbrace 901,..., 1000 \rbrace. \\
	\theta(x) := 2.7, & x \in \lbrace 401,..., 450 \rbrace, & & &
\end{array}
\end{equation}
This function is constructed especially to illustrate the consequences of close steps in distant locations.
Third, we will study the behavior of the simplified Propagation-Separation Approach for the logarithmic function
\begin{equation}\label{eq:logTestf}
	\theta(x) := \log(x), \qquad x \in \mathcal{X}.
\end{equation}
Here, the parameter values change slowly.

Additionally, we consider a shifted and scaled indicator function.
This piecewise constant setting coincides with the setting that the original Propagation-Separation Approach in \citep{PoSp05} and its simplified version in Notation~\ref{algorithm} assume. 
Let the sample size be even, $n \in 2 \mathbb{N}$.
Then, we split the design into two parts with coinciding cardinality, $\mathcal{X}_1 := \lbrace X_i \rbrace_{i=1}^{n/2}$ and $\mathcal{X}_2 := \lbrace X_i \rbrace_{i=n/2 +1}^{n}$.
We consider the test function 
\begin{equation}\label{eq:indiTestf}
	\theta(x) := 1 + 4 \cdot \boldsymbol{1}_{\mathcal{X}_2}(x), \qquad x \in \mathcal{X},
\end{equation}
where~$\boldsymbol{1}$ denotes the indicator function.


\subsection{Formation of the associated step function}\label{sec:simStep}

We study the formation of the associated step function, which we introduced in \S~\ref{sec:stability}.
For this purpose, we visualize the resulting weighting schemes.

In Figure~\ref{fig:weightsPolynG}, we consider the piecewise smooth function~\eqref{eq:polynTestf} with Gaussian observations.
In the first row, we provide the weighting schemes of the iteration step where the MAE is minimized.
The product of the adaptive term $\lbrace K_{\mathrm{ad}} (s_{ij}^{(k)} / \lambda) \rbrace_{i,j}$~(a) and the non-adaptive weights~$\lbrace \overline{w}_{ij}^{(k)} \rbrace_{i,j}$~(b) results in the adaptive weights $\lbrace \tilde{w}_{ij}^{(k)} \rbrace_{i,j}$~(c).
This illustrates the interaction of adaptation and location.
For $\texttt{hmax=2000}$, the algorithm results in the associated step function~(d).
Here, the adaptive weights~(f) and the weighting scheme of the corresponding adaptive term $\lbrace K_{\mathrm{ad}} (s_{ij}^{(k)} / \lambda) \rbrace_{i,j}$ (not shown) were visually indistinguishable due to the large size of the considered local neighborhood, which is determined by the non-adaptive weighting scheme~(e).

We present in Figure~\ref{fig:weightsUpdownG} the example plots and corresponding weighting schemes~$\lbrace \tilde{w}_{ij}^{(k)} \rbrace_{i,j}$ of the step function~\eqref{eq:updownTestf}.
The discontinuities are too small for separation.
Therefore, the algorithm forms a step function which differs from the original parameter function.
The minimal MAE is provided for $\texttt{hmax=30}$, where the considered local neighborhood is small and separation does not yet occur~(a+e).
We observe, in the example plot~(b) as well as in the weighting scheme~(f), that the estimation function starts to form a step function for $\texttt{hmax=120}$.
In~(c), the estimation function of $\texttt{hmax=600}$ resembles a step function, but the weighting scheme~(g) already indicates that the formed steps may change with increasing location bandwidths.
Indeed, in~(d), several steps in different locations have been assimilated as the weighting scheme~(h) points out.
For the plots in the last column, we set $\texttt{hmax=20000}$.

\begin{figure}
\begin{center}
\includegraphics[width = 0.72 \textwidth]{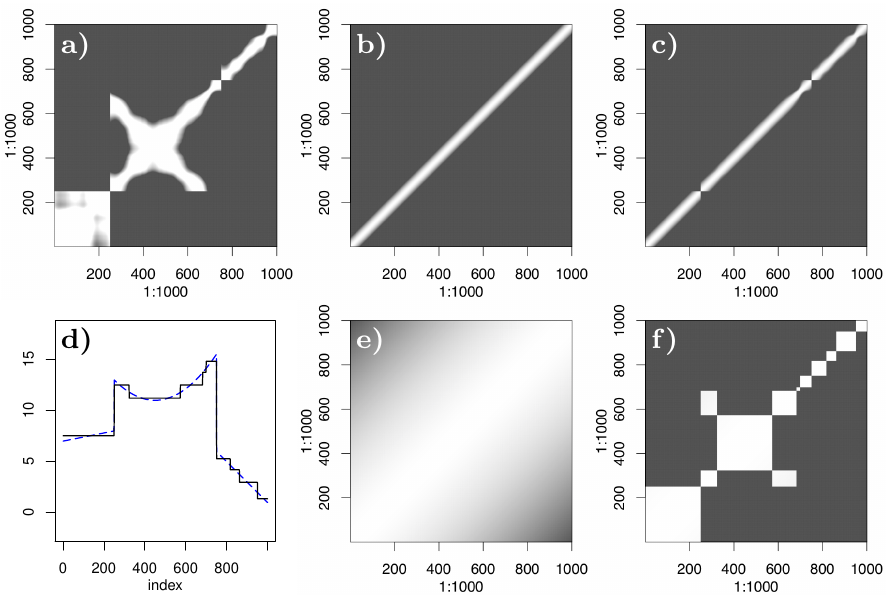} 
\caption[Formation of the associated step function: Gaussian observations, example~1]{Formation of the associated step function for the piecewise smooth function~\eqref{eq:polynTestf} with Gaussian observations.}
\label{fig:weightsPolynG}
\end{center}
\end{figure}

\begin{figure}
\begin{center}
\includegraphics[width = 0.95 \textwidth]{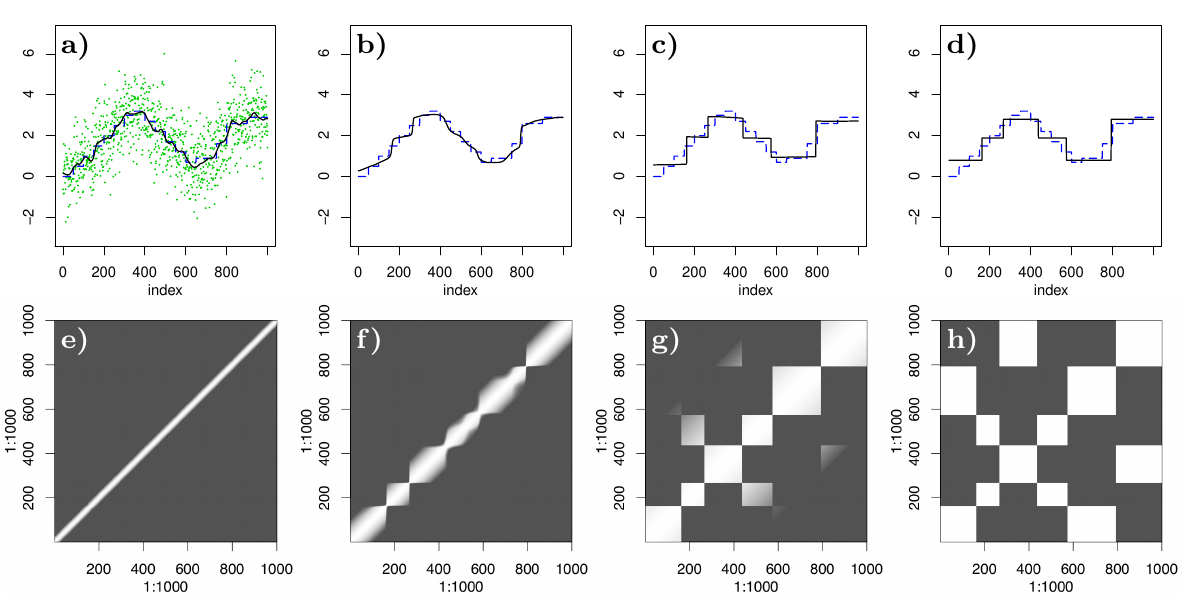} 
\caption[Formation of the associated step function: Gaussian observations, example~2]{Formation of the associated step function for the step function~\eqref{eq:updownTestf} with Gaussian observations.}
\label{fig:weightsUpdownG}
\end{center}
\end{figure}

\begin{figure}
\begin{center}
\includegraphics[width = 0.95 \textwidth]{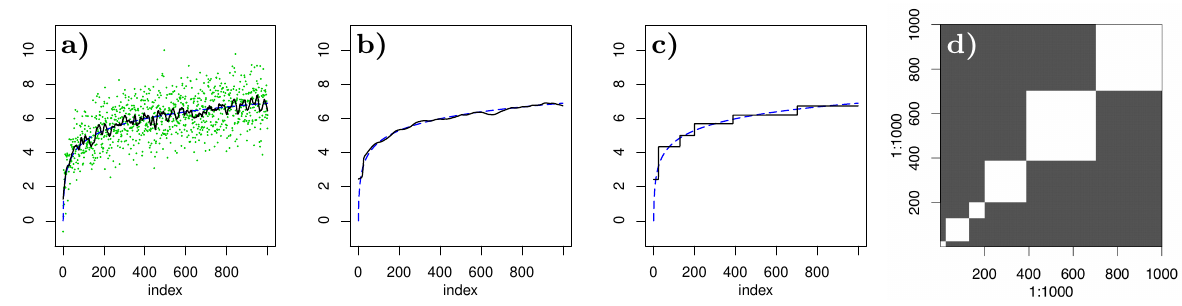} 
\caption[Formation of the associated step function: Gaussian observations, example~3]{Formation of the associated step function for the logarithmic function~\eqref{eq:logTestf} with Gaussian observations.}
\label{fig:weightsLogG}
\end{center}
\end{figure}

Even for the logarithmic function~\eqref{eq:logTestf}, the simplified algorithm results in a step function with disjoint regions~(d).
In Figure~\ref{fig:weightsLogG}, we show the example plots for a small location bandwidth $\texttt{hmax=10}$~(a), at $\texttt{hmax=60}$~(b), where the MAE is minimal, and at $\texttt{hmax=2000}$~(c).

\begin{figure}
\begin{center}
\includegraphics[width = 0.95 \textwidth]{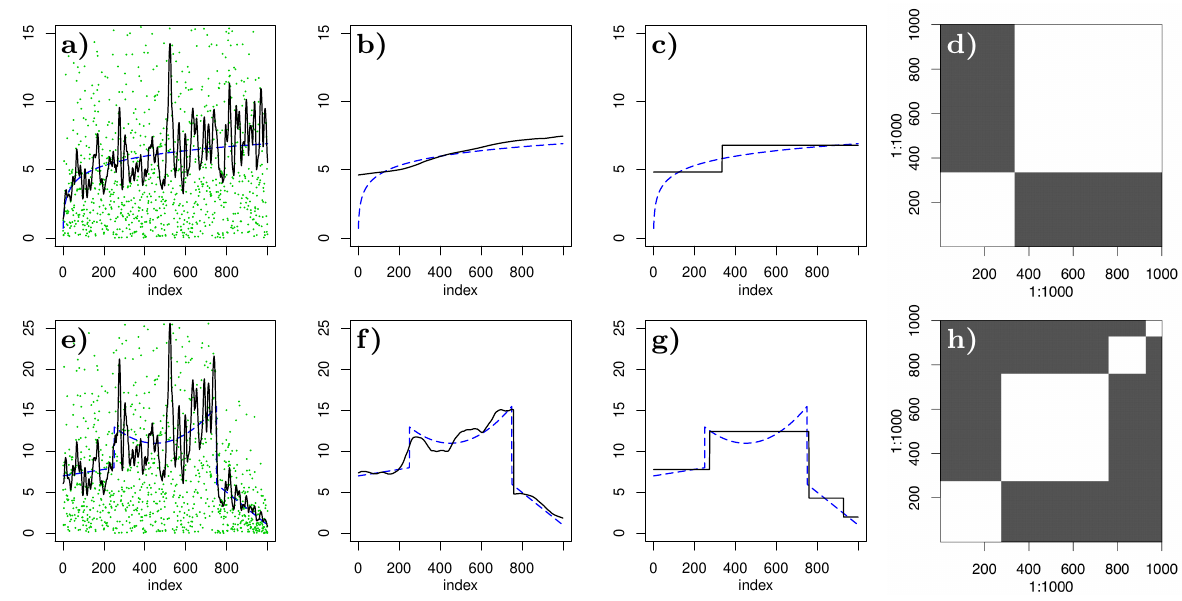} 
\caption[Formation of the associated step function: Exponentially distributed observations]{Formation of the associated step function for the piecewise smooth function~\eqref{eq:polynTestf} and the logarithmic function~\eqref{eq:logTestf} with exponentially distributed observations.}
\label{fig:weightsExp}
\end{center}
\end{figure}

Additionally, we studied the formation of the associated step function for exponentially distributed observations on several test functions. 
In Figure~\ref{fig:weightsExp}, we provide the results for the parameter functions in Equations~\eqref{eq:polynTestf} (first row) and~\eqref{eq:logTestf} (second row).
Here again, for sufficiently large location bandwidths, the algorithm results in the associated step function with disjoint regions of non-zero adaptive weights~(d+h).
We show the example plots for a small location bandwidth (a+e), where $\texttt{hmax=10}$, an intermediate iteration step with minimized MAE (b+f), and a large location bandwidth $\texttt{hmax=20000}$ (c+g).

\subsection{Impact of the memory step}\label{sec:numerics-memory}

In \citep[Thm. 5.7]{PoSp05}, the memory step provided a general result on the stability of estimates, up to some constant.
However, its practical use is questionable.
No situation has been reported to date where the memory step considerably improved the results of the Propagation-Separation Approach.
Therefore, we aim for a better understanding of its impact on the resulting estimates.
For this purpose, we compared the results of the original and the simplified algorithm on the test functions in \S~\ref{sec:numerics-testf} for Gaussian and exponentially distributed observations.

In Figure~\ref{fig:GaussMemHBox}, we show the results for the piecewise smooth function~\eqref{eq:polynTestf} with Gaussian distributed observations.
Here, we applied three location bandwidths, $\texttt{hmax=50,500,5000}$, each of them without memory step (\texttt{memory=FALSE}) and with memory step (\texttt{memory=TRUE}), using a triangular kernel (\texttt{aggkern=\textquotedbl{}Triangle\textquotedbl{}}) and a uniform kernel (\texttt{aggkern=\textquotedbl{}Uniform\textquotedbl{}}).
As for all other test functions with Gaussian or exponentially distributed observations, there is (almost) no difference between the resulting boxplots with and without memory step and for the two memory kernels.
This raises the question whether the memory step itself does not have any effect, or whether the default parameter choices in the \textbf{R}-package \pkg{aws} are unfavorable.
\begin{figure}
\begin{center}
\includegraphics[width = 0.7 \textwidth]{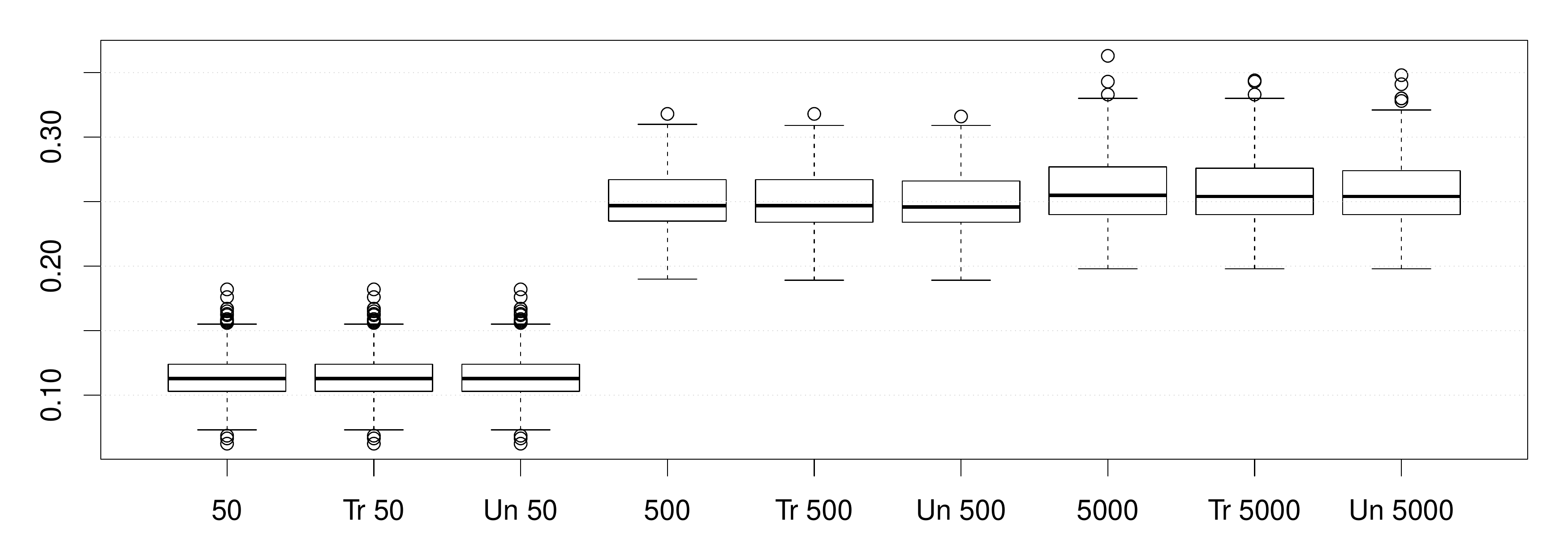}
\caption[Boxplots concerning the impact of the memory step: Default parameter choices]{MAE-boxplots for $\texttt{hmax=50,500,5000}$ with and without memory step, setting \texttt{aggkern=} \texttt{\textquotedbl{}Triangle\textquotedbl{}} (Tr) and \texttt{aggkern=\textquotedbl{}Uniform\textquotedbl{}} (Un).}
\label{fig:GaussMemHBox}
\end{center}
\end{figure}

\begin{figure}
\begin{center}
\includegraphics[width = 0.49 \textwidth]{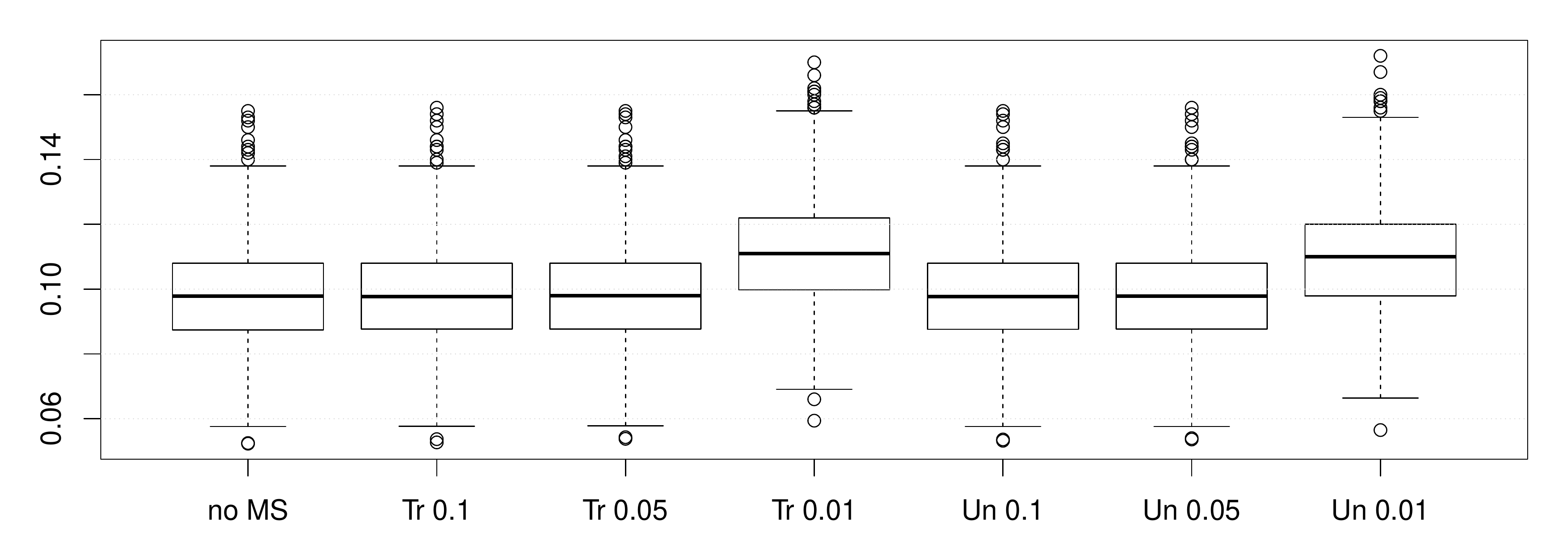}
\includegraphics[width = 0.49 \textwidth]{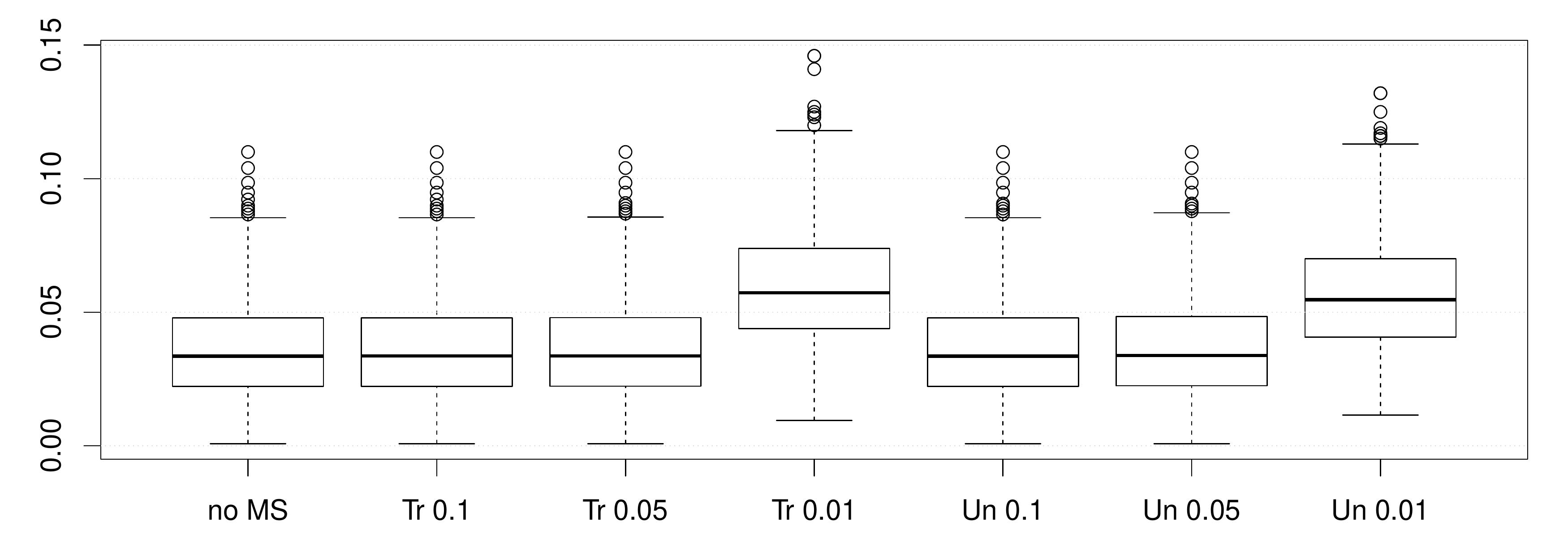} \\ 
\includegraphics[width = 0.49 \textwidth]{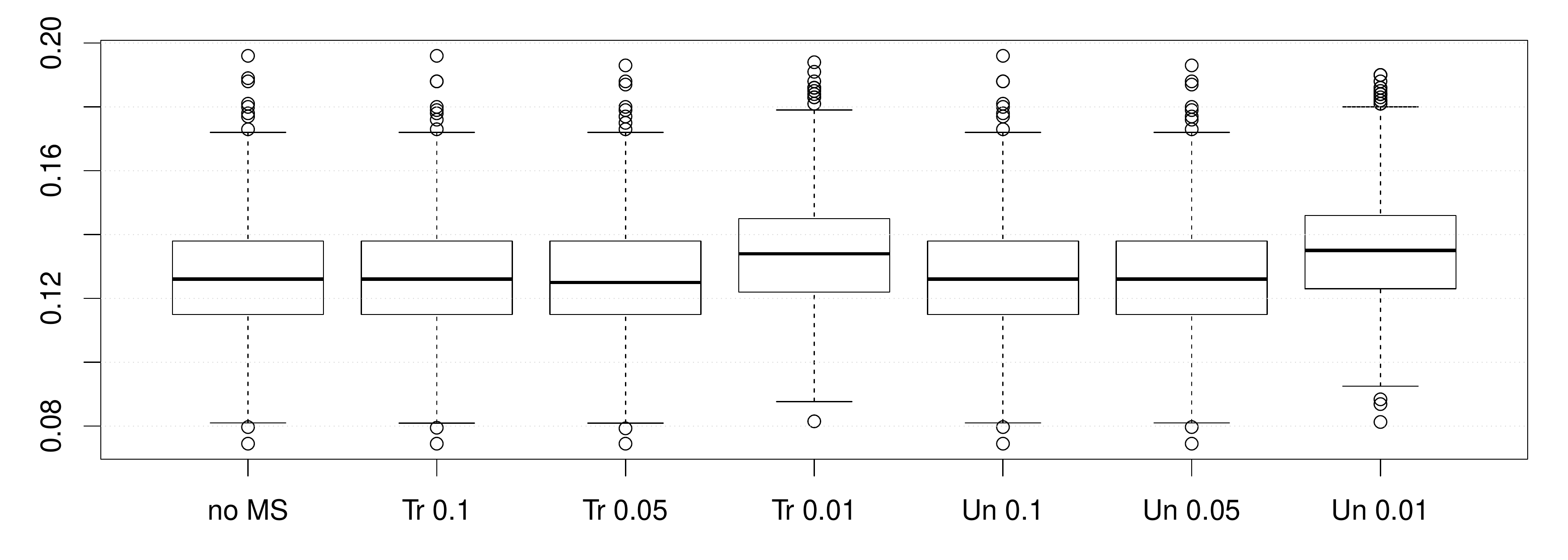}
\includegraphics[width = 0.49 \textwidth]{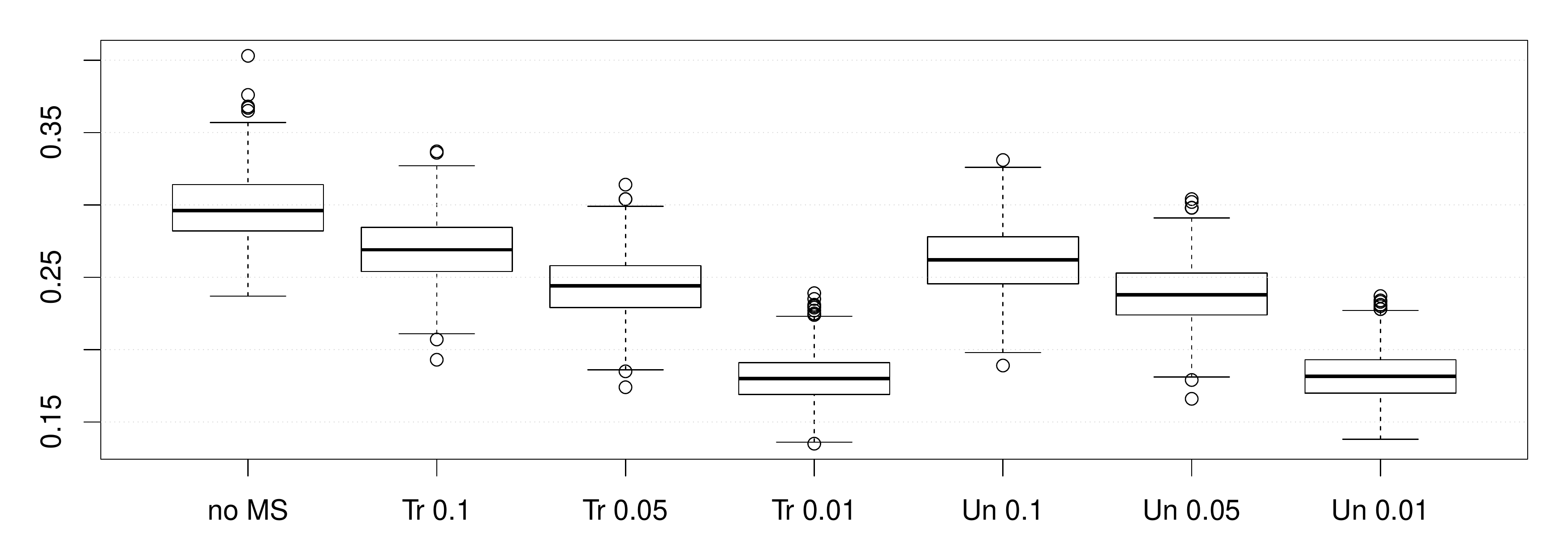}
\caption[Boxplots concerning the impact of the memory step: Increasing amount of aggregation]{MAE-boxplots for $\texttt{hmax=50}$ (left) and $\texttt{hmax=10000}$ (right) for the indicator function (top) and on the piecewise smooth function (bottom) with Gaussian observations.
We applied the algorithm without (no MS) and with memory step, setting \texttt{aggkern=\textquotedbl{}Triangle\textquotedbl{}} (Tr) and \texttt{aggkern=\textquotedbl{}Uniform\textquotedbl{}} (Un), where $\texttt{tadjust=0.1,0.05,0.01}$ increases the amount of aggregation.}
\label{fig:GaussMemBox}
\end{center}
\end{figure}

In order to provide a deeper insight into the mode of action of the memory step, we increased the amount of aggregation by means of the additionally implemented scaling factor \texttt{tadjust} of the memory bandwidth $\tau > 0$.
In Figure~\ref{fig:GaussMemBox}, we present, for the indicator function~\eqref{eq:indiTestf} (top) 
and for the piecewise smooth function~\eqref{eq:polynTestf} (bottom), the MAE-boxplots at some early iteration step with $\texttt{hmax=50}$ (left) and for $\texttt{hmax=10000}$ (right), assuming Gaussian observations. 
The amount of aggregation increases due to the choices $\texttt{tadjust=0.1,0.05,0.01}$ for \texttt{memory=TRUE} with \texttt{aggkern=\textquotedbl{}Triangle\textquotedbl{}} and \texttt{aggkern=\textquotedbl{}Uniform\textquotedbl{}}.
For comparison, we show the result of the simplified procedure as well, where \texttt{memory=FALSE}.
For $\texttt{hmax=50}$, we observe an increase of the MAE for both test functions at $\texttt{tadjust=0.01}$, while the MAE without memory step coincides with the results for $\texttt{tadjust=0.1}$ and $\texttt{tadjust=0.05}$.
For $\texttt{hmax=10000}$, this observation remains valid for the indicator function (top right), where the locally constant model of the Propagation-Separation Approach is satisfied.

In contrast, for the piecewise smooth function, we know from \S~\ref{sec:simStep} that the estimation function approaches the associated step function, which leads to an increase of the MAE. 
As demonstrated in the bottom right of Figure~\ref{fig:GaussMemBox}, the MAE decreases with increasing amount of aggregation, that is with decreasing $\texttt{tadjust}$.
Unfortunately, this increases the risk of adaptation to noise as we illustrate on some example plots in Figure~\ref{fig:MSGauss} for the piecewise smooth function (top) and for the step function~\eqref{eq:updownTestf} (bottom), both with Gaussian observations.
Without memory step as well as with memory step and $\texttt{tadjust=1}$, the algorithm results in the associated step function (a+e).
To some extent, this effect can be attenuated by increasing the amount of aggregation, setting $\texttt{tadjust=0.05}$ (b+f) or even $\texttt{tadjust=0.02}$ (c+g).
For $\texttt{tadjust=0.01}$ (d+h), we observe adaptation to noise, which indicates the increased risk of adaptation to outliers due to the decreased memory bandwidth.
Naturally, for other realizations, larger sample sizes or different test functions, this could happen for larger values of \texttt{tadjust} as well.
We got similar results for the other test functions in \S~\ref{sec:numerics-testf} with Gaussian and as well with exponentially distributed observations (not shown).
\begin{figure}
\begin{center}
\includegraphics[width = 0.95 \textwidth]{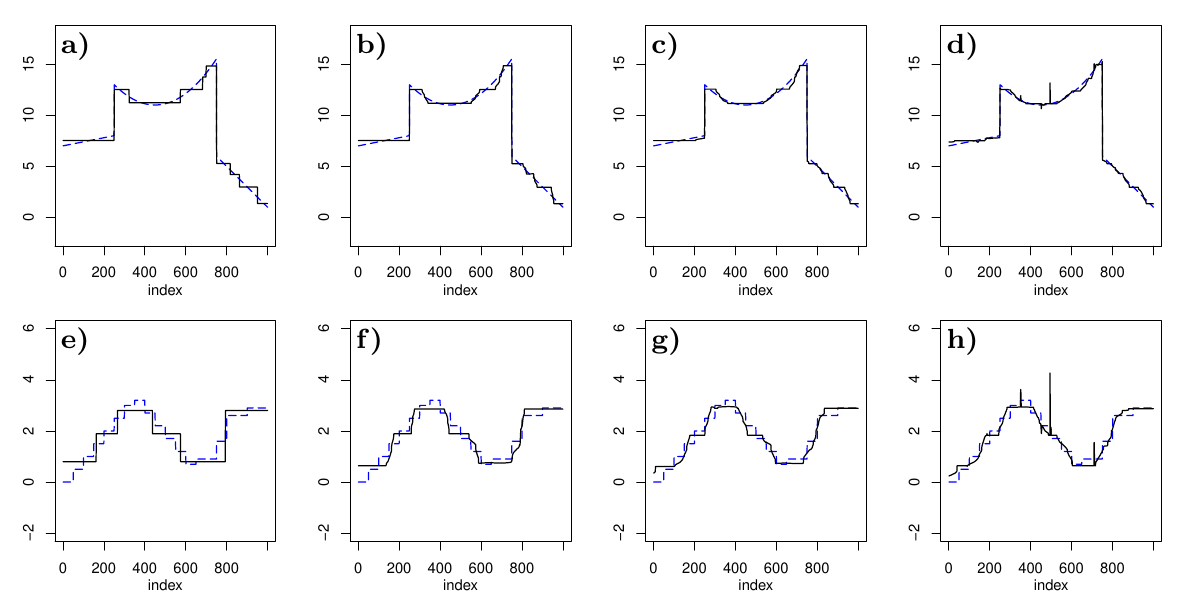} 
\caption[Example plots concerning the impact of the memory step]{Example plots for the piecewise smooth function (top) and the step function (bottom) at $\texttt{hmax=100000}$. 
We applied the algorithm without (a+e) and with memory step, setting \texttt{aggkern=\textquotedbl{}Triangle\textquotedbl{}} and (from left to right) \texttt{tadjust= 0.05,0.02,0.01}.}
\label{fig:MSGauss}
\end{center}
\end{figure}

\subsection{Stability of estimates}\label{sec:numerics-stability}

The numerical results in \S~\ref{sec:simStep} suggest that the simplified Propagation Separation Approach provides a certain stability of estimates, where the associated step function acts as an intrinsic stopping criterion.
In \S~\ref{sec:convergence}, we will discuss the reasons which impede a theoretical proof of this heuristic property.
Here, we present some boxplots which indicate the immutability of the MAE for sufficiently large location bandwidths.
 
We show results for Gaussian and exponentially distributed observations (Figure~\ref{fig:stabilityGaussExp}). 
We set $\texttt{n=1000}$ and $\texttt{hmax=20,50,200,500,1000,10000,20000}$ for the former, and $\texttt{n=4000}$ and \texttt{hmax= 50,200,500,2000,15000,20000} for the latter.
Here again, we consider the indicator function~\eqref{eq:indiTestf}, where the structural assumption of the Propagation-Separation Approach is satisfied. 
This leads to a decreasing MAE during iteration.
As an example for the case of a misspecified model, we again apply the piecewise smooth function~\eqref{eq:polynTestf}.
Here, the MAE increases for larger location bandwidth as the estimator is forced into a step function.
Nevertheless, for both test functions and both probability distributions, the MAE stabilizes for sufficiently large location bandwidths.
For comparison, we show the MAE which results from the choices \texttt{memory=TRUE,} \texttt{aggkern=\textquotedbl{}Triangle\textquotedbl{},} \texttt{tadjust=1}.

\begin{figure}
\begin{center}
\includegraphics[width = 0.49 \textwidth]{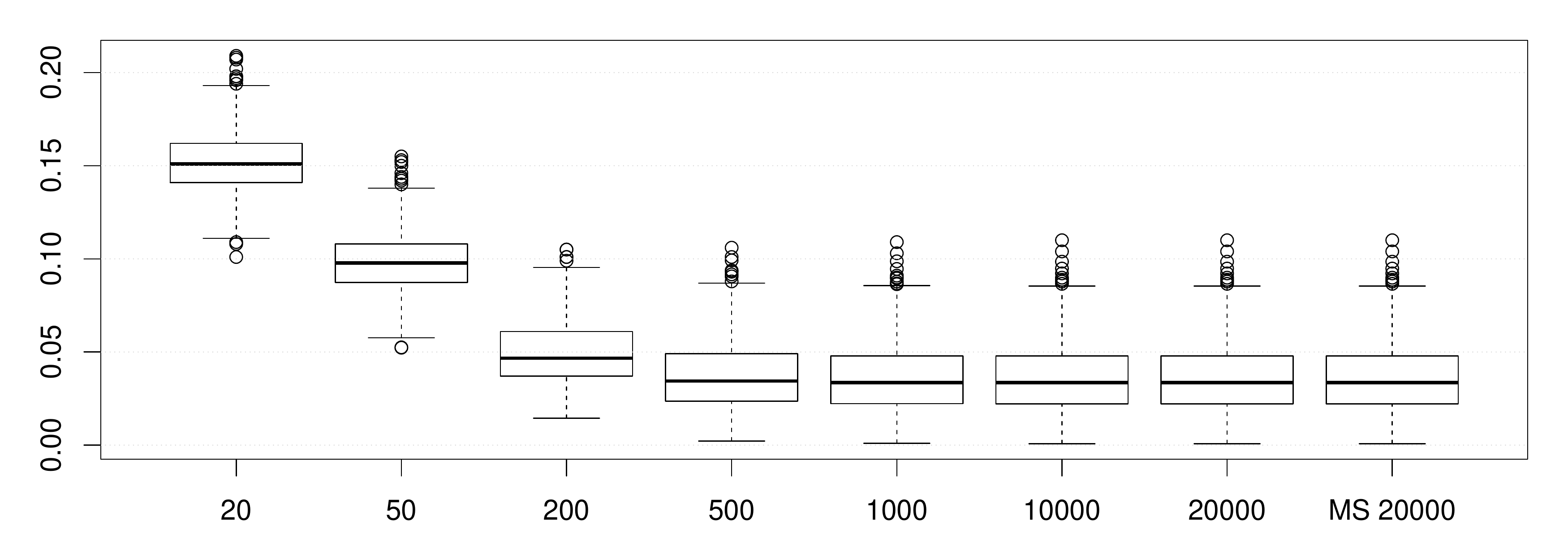}
\includegraphics[width = 0.49 \textwidth]{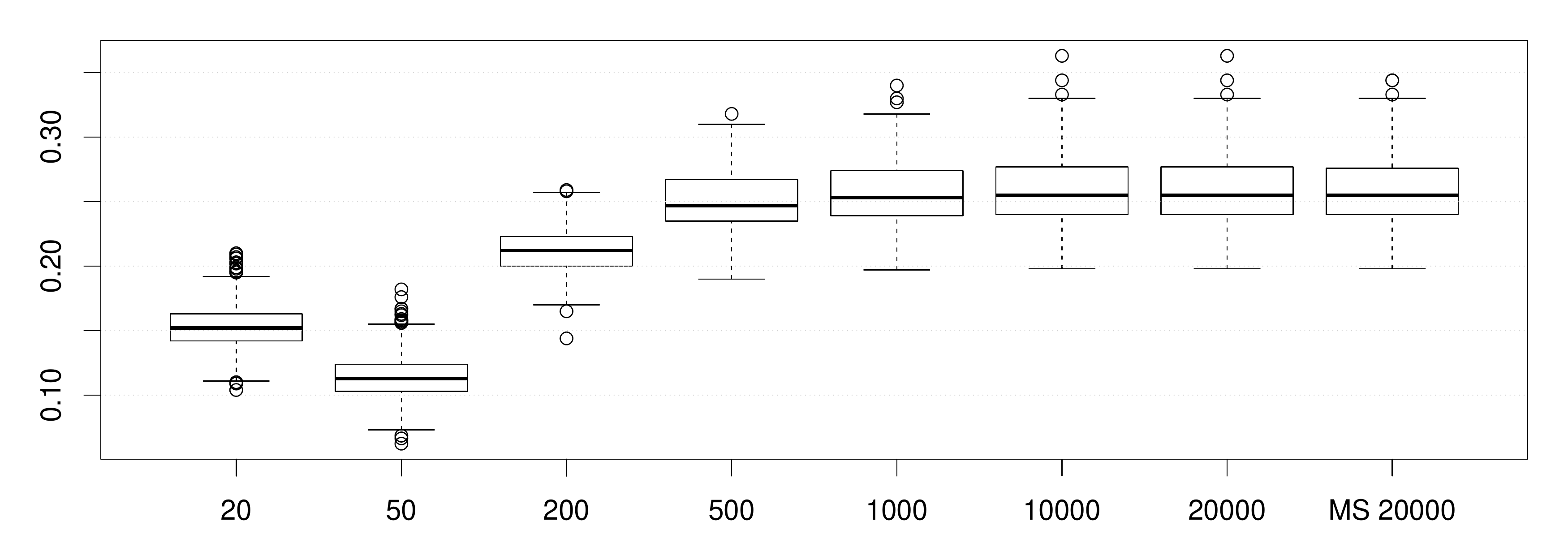} \\
\includegraphics[width = 0.49 \textwidth]{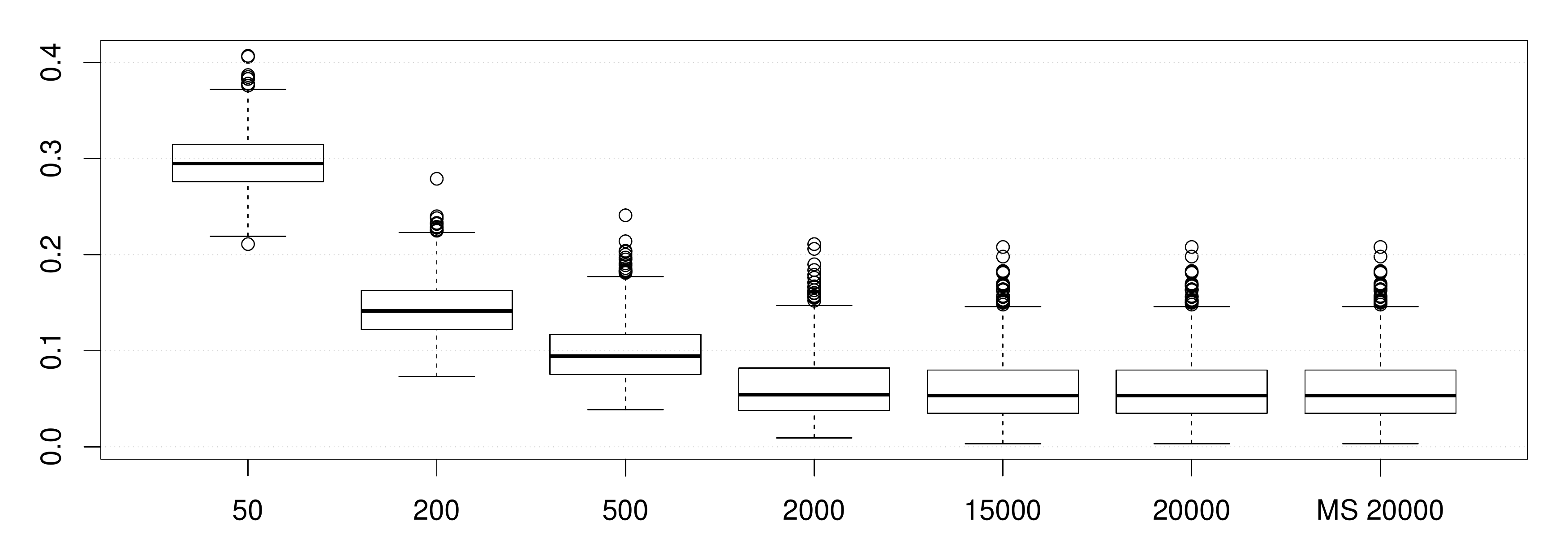}
\includegraphics[width = 0.49 \textwidth]{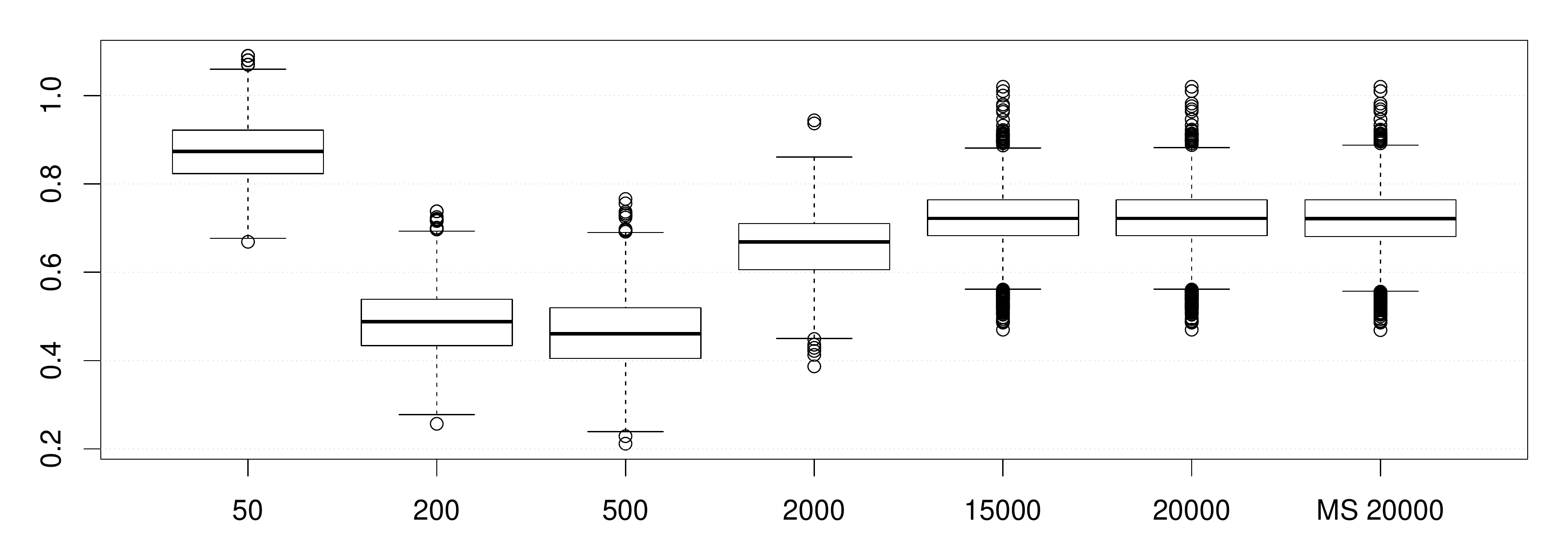}
\caption[Boxplots concerning the stability of estimates]{Stability of estimates for Gaussian (top) and exponentially (bottom) distributed observations for the indicator function (left) and for the piecewise smooth function (right).}
\label{fig:stabilityGaussExp}
\end{center}
\end{figure}

\section{Discussion of the simplified Propagation-Separation Approach}\label{sec:PS-discussion}

Our study provides theoretical and numerical results for the simplified Pro\-pa\-ga\-tion-Sepa\-ra\-tion Approach in Notation~\ref{algorithm}, where the memory step is omitted. 
This helps for a better understanding of the procedure as the impact and interaction of the involved components is clarified.
Furthermore, the presented results substantiate the reasons for omitting the memory step and provide, for the first time, a detailed study of its impact.

Next we will discuss the following two questions.
\begin{compactitem}
\item Does the simplified Propagation-Separation Approach converge?
\item (Where) do we need the memory step?
\end{compactitem}
Finally, we will give a brief overview on possible topics for future research.

\subsection{Does the Propagation-Separation Approach converge?}\label{sec:convergence}

In \S~\ref{sec:stability}, we introduced a specific step function, which approximates the estimation function from the Propagation-Separation Approach. 
The formation of this \emph{associated step function} can be explained as follows.

Due to the support $\mathrm{supp} (K_{ad}) = [0,1)$, the statistical penalty~$s_{ij}^{(k)}$ defined in Algorithm~\ref{algorithmMS} (page~\pageref{algorithmMS}) ensures zero weights~$\tilde{w}_{ij}^{(k)} = 0$ 
if the Kullback-Leibler divergence of the estimators from the last iteration step $\mathcal{KL} ( \tilde{\theta}_i^{(k-1)}, \tilde{\theta}_j^{(k-1)} )$ exceeds some lower bound~$\lambda / \tilde{N}_i^{(k-1)}$.
Let us consider the case where $\overline{w}_{ij}^{(k^*)} > 0$ implies $\tilde{w}_{ij}^{(k^*)} > 0$ for all $i,j \in \lbrace 1,...,n \rbrace$. 
This means that separation did not occur. 
The monotonicity of the sequence of location bandwidths~$\lbrace h^{(k)} \rbrace_{k=0}^{k^*}$ ensures that the non-adaptive weights increase during iteration, and, without separation, all estimators approach each other. 
With~$h^{(k^*)}$ sufficiently large, this results in an almost constant estimation function.
However, in many cases, there are $X_i, X_j \in \mathcal{X}$ such that $\overline{w}_{ij}^{(k^*)} > 0$, but $\tilde{w}_{ij}^{(k^*)} = 0$. 

We know from Proposition~\ref{prop:separationMS} that separation occurs if $\max \lbrace \mathcal{KL} \left( \theta_i, \theta_j \right): X_i, X_j \in \mathcal{X} \rbrace$ is sufficiently large, or if the algorithm adapts to outliers. 
The latter leads to separation of single observations, probably together with some local neighborhood. 
If separation happens due to the variability of the true parameter function~$\theta(.)$, then it starts in regions with a large absolute value of the first derivative~$\vert \theta'(.) \vert$.
Beginning either at the boundaries of the design space~$\mathcal{X}$ or close to discontinuities and local extrema of~$\theta(.)$ and~$\vert \theta'(.) \vert$, this leads, by subsequent attraction and repulsion of the estimators, to the formation of a step function which approximates the associated step function in Definition~\ref{def:assStepf}.

On the test functions in \S~\ref{sec:numerics-testf}, we observed for Gaussian and exponential distributed observations that after separation has started, the algorithm behaves within each separated region similar as under homogeneity as long as the increasing local neighborhood does not reach a distant region with similar values.
Additionally, for sufficiently large location bandwidths, the algorithm resulted for every test function in an adaptive weighting scheme, whose disjoint regions define a partition of the design space.
This indicates the immutability of the associated step function for sufficiently large location bandwidths.

Hence, the presented numerical results suggest the convergence of the algorithm, 
but we lack for a theoretical justification. 
There are three main reasons for this.
\begin{compactitem}
\item Each realization may yield another associated step function with slightly shifted steps. 
\item The improvement of the estimation quality during iteration is not ensured to be monotonic, neither for the non-adaptive nor for the adaptive estimates. 
Several other iterative methods, such as the expectation-maximization algorithm or the conjugate gradient method, rely on the minimization or maximization of a certain criterion.
This provides a monotonic improvement of some quality criterion, which ensures the convergence of the algorithm.
In contrast, the Propagation-Separation Approach considers an increasing local neighborhood, 
where unfavorable, newly included or stronger weighted observations may worsen the estimation quality in comparison to a previous iteration step.
\item The immutability of the associated step function for sufficiently large iteration steps requires the existence of some iteration step $k_0 < \infty$ such that the considered neighborhood equals the complete design, that is 
$\overline{w}_{ij}^{(k_0)} > 0$ for all $X_i, X_j \in \mathcal{X}$, and
$\lbrace X_j \in \mathcal{X}: s_{ij}^{(k_0)} \leq \lambda \rbrace = \lbrace X_j \in \mathcal{X}: s_{ij}^{(k)} \leq \lambda \rbrace $
for all $k > k_0$ and every $X_i \in \mathcal{X}$.
\end{compactitem}

Let us consider the last reason in more detail.
We know from the definition of the statistical penalty, see Algorithm~\ref{algorithmMS} (page~\pageref{algorithmMS}), that a violation of the above condition can arise from
\begin{compactenum}
\item a reunion of previously separated regions due to a decrease of the factor~$\tilde{N}_i^{(k'-1)}$;
\item a reunion of previously separated regions due to a decrease of $\mathcal{KL}(\tilde{\theta}_i^{(k-1)}, \tilde{\theta}_j^{(k-1)})$;
\item a subsequent segmentation of a before created step due to an increase of the corresponding Kullback-Leibler divergence $\mathcal{KL}(\tilde{\theta}_i^{(k-1)}, \tilde{\theta}_j^{(k-1)})$;
\item a too strong intensification of the statistical penalty by the factor~$\tilde{N}_i^{(k-1)}$.
\end{compactenum}
We discuss these events case by case.

Recall that the non-adaptive sequence~$\lbrace \overline{N}_i^{(k)} \rbrace_{k=0}^{k^*}$ is monotonically increasing, whereas its adaptive counterpart~$\lbrace \tilde{N}_i^{(k)} \rbrace_{k=0}^{k^*}$ does not need to be monotonic.
Therefore, we propose a slight modification of the statistical penalty in Algorithm~\ref{algorithmMS}, setting
\[
	s_{ij}^{(k)} := \max_{k' \leq k} \tilde{N}_i^{(k'-1)} \mathcal{KL}(\tilde{\theta}_i^{(k-1)}, \tilde{\theta}_j^{(k-1)}).
\]
This modification preserves an already achieved adaptation quality.
As a consequence, it avoids that a design point switches all the time between two steps due to oscillation of the value of~$\tilde{N}_i^{(k-1)}$ during iteration.

A late segmentation and a reunion as described in~(2) and~(3) could be imposed by an appropriate upper bound of
\[
	\max \left\{ \mathcal{KL} \left( \tilde{\theta}_{i_1}^{(k)}, \tilde{\theta}_{i_2}^{(k)} \right): X_{i_1}, X_{i_2} \in \mathcal{H}_i^{(k)}  \right\}
\]
and a lower bound of
\[
	\min \left\{ \mathcal{KL} \left( \tilde{\theta}_{i_1}^{(k)}, \tilde{\theta}_{j_1}^{(k)} \right): X_{i_1} \in \mathcal{H}_i^{(k)}, X_{j_1} \in \mathcal{H}_j^{(k)} \neq \mathcal{H}_i^{(k)} \right\},
\]
where~$\mathcal{H}_i^{(k)}$ is as in Equation~\eqref{eq:partitionHik} (page~\pageref{eq:partitionHik}).
Due to the factor~$\varkappa$ in Lemma~\ref{lem:A1}, the corresponding discussion in Appendix~\ref{app:auxiliary}, and the missing monotonicity of the Kullback-Leibler divergences $\mathcal{KL}(\tilde{\theta}_i^{(k-1)}, \tilde{\theta}_j^{(k-1)})$ in $k>k_0$, this may lead to a criterion which is too restrictive to be satisfiable with $k_0 < \infty$.

However, the main impediment of a theoretical proof results from~(4).
The statistical penalty becomes more restrictive during iteration by the factor~$\tilde{N}_i^{(k)}$, but this factor is not guaranteed to be always appropriate.
For statistically independent observations~$\lbrace Y_j \rbrace_j$ with expected values~$\lbrace \theta_j \rbrace_j$ and variance~$ \sigma^2$, it may be explained as an upper bound of the by the non-adaptive estimator achieved variance reduction.
A generalization to the adaptive estimator may be prohibitive due to the randomness of the adaptive weights. 
Additionally, for other classes of probability distributions than the Gaussian one, the relation between the variance and the Kullback-Leibler divergence is complicated, and the variance may be heteroscedastic.
For instance, for exponentially distributed observations, the variance depends on the locally varying parameter~$\theta$.

Hence, we prefer to consider~$\tilde{N}_i^{(k)}$ as the achieved improvement of the estimation quality in terms of the Kullback-Leibler divergence. 
This is motivated by the Theorems~\ref{thm:PS21} and~\ref{thm:PS22} and the propagation condition, which yield with high probability and in case of sharp discontinuities for $\mathcal{KL} ( \tilde{\theta}_i^{(k)}, \mathcal{E} \tilde{\theta}_i^{(k)} )$ the rate of convergence~$\tilde{N}_i^{(k)}$, see Proposition~\ref{prop:propagation1}. 
Nevertheless, there remains a certain probability that the intensification of the statistical penalty is not justified. 
Furthermore, the mentioned propagation results do not generalize to the case of model misspecification.
All of these results are based on the propagation condition and this requires well separated regions. 
If the corresponding structural assumptions are violated, the impact of the adaptivity may change such that propagation cannot be ensured any more. 
In fact, model misspecification leads to a decrease of the probability for propagation.
Therefore, we may still observe propagation in practice, but its probability cannot be quantified as the established exponential bounds do not hold under model misspecification.
As a consequence, we cannot ensure neither the immutability of the associated step function nor the convergence of the simplified Propagation Separation Approach.

\subsection{(Where) do we need the memory step?}\label{sec:comparison}

As discussed in \citep[\S 5]{BecMat2013}, our approach is not constructed to provide asymptotic results, and the theoretical analysis is mainly restricted to piecewise bounded parameter functions.
Hence, we lose the general stability of estimates by \citet[Thm 5.7]{PoSp05}.
Nevertheless, we justified the essential properties for the simplified procedure, that is propagation and separation.
This emphasizes that both properties result from the adaptivity of the estimator, but not from the memory step.

From a practical point of view, the benefits of the memory step are still questionable.
In \S~\ref{sec:numerics-memory}, we illustrated the impact of the memory step for several test functions.
Using the default parameter choices of the \textbf{R}-package \pkg{aws} by \citet{aws}, we could not observe any effect of the memory step.
However, these choices are not arbitrary. 
The memory bandwidth was chosen in accordance with a former version of the propagation condition, and, indeed, we observed an increased risk of adaptation to noise for considerably smaller bandwidths.
On the one hand, this emphasizes the importance of a sufficiently large memory bandwidth to avoid adaptation to outliers. 
On the other hand, we got a smaller MAE by increasing the amount of aggregation, which slightly attenuated the formation of a step function during iteration.
In any case, we found the best results by restricting the maximal location bandwidths appropriately.
The omittance of the memory step provides a better interpretability of the procedure and, as a consequence, of the results since the memory step introduces additional interactions between the involved components, which are not fully understood yet.

\subsection{Future research}\label{sec:PS-future}

There are several topics for future research that arise from this article.
For instance, one could study the impact of the Kullback-Leibler divergence.
Especially for the justification of the homogeneous propagation condition in \citep[\S 4.1]{BecMat2013} and of the inhomogeneous propagation condition in Section~\ref{sec:inhomPCPractice}, we took advantage of its properties.
Are there other (possibly asymmetric) distance functions or f-divergences which provide similar results?

Here, we would like to concentrate on another question concerning the consequences of a violated structural assumption.
As indicated by our numerical results in Section~\ref{sec:simulations}, an appropriate stopping of the iterative procedure may reduce the resulting estimation bias considerably by avoiding the formation of a step function.
For the presented univariate examples, a choice by visual inspection seems to be promising.
In all observed cases, the iteration step where the formation of the step function started to dominate the smoothing result could be easily identified.
Additionally, we always observed a certain range of iteration steps where the estimation quality is very similar.

However, on more complicated test functions or for higher dimensional design spaces, an automatic choice of the maximal number of iterations is desired or even required.
In the context of local polynomial regression and locally weighted maximum likelihood estimation, there is a large amount of literature concerning the choice of the location bandwidth.
For instance, the maximal location bandwidth~$h^{(k^*)}$ could be chosen such that the non-adaptive estimator in Notation~\ref{not:algorithm} behaves well within regions without discontinuities. 
Then, assuming an appropriate choice of the adaptation bandwidth~$\lambda$, the simplified algorithm in Notation~\ref{algorithmMS} would yield similar results as non-adaptive smoothing within these regions, while smoothing among distinct regions would be avoided as sharp discontinuities could be detected by the adaptive weights. 
To evaluate the appropriateness of the different approaches for the Propagation-Separation Algorithm would form a promising research project for the future.
Alternatively, one could search for a criterion which takes advantage of the involved components of the method.
The evaluation of the behavior of the statistical penalty or of the sum of the adaptive weights could provide useful information about the iteration step where the formation of a step function negatively affects the smoothing results.

\appendix

\section{Some auxiliary results}\label{app:auxiliary}

We recall some previous results by \citet{PoSp05} and \citet{BecMat2013}.
The following lemma has been stated in \citet[Pages 339 \& 352]{PoSp05}, but see also \citet[Lem. 2.3]{BecMat2013}.

\begin{Lem}\label{lem:A1}
Under Assumption~\ref{A1} it holds the following.
\begin{compactenum}
\item The Fisher information satisfies $I(\theta) = C'(\theta)$ for all $\theta \in \Theta$.
\item For every compact  and convex subset $\Theta' \subseteq \Theta$,  there is a constant $\varkappa \geq 1$ such that 
\begin{equation}\label{eq:varkappa}
	\frac{I(\theta_1)}{I(\theta_2)} \leq \varkappa^2 \qquad
        \text{ for all } \theta_1, \theta_2 \in \Theta'.
\end{equation}
\item The Kullback-Leibler divergence is convex with respect to the first argument.
Additionally, it has an explicit representation,
\begin{eqnarray}\label{eq:KL}
	\mathcal{KL} \left( \theta, \theta' \right) 
	&=& \theta \left[ C(\theta) - C(\theta') \right] - \left[ B(\theta) - B(\theta') \right], \quad \theta, \theta' \in \Theta.
\end{eqnarray}
\end{compactenum}
\end{Lem}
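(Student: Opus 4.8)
The three assertions are computations with the exponential-family density in Assumption~\ref{A1}, and I would organize them so that Part~(1) feeds the others; throughout I would invoke the standard regularity of exponential families to differentiate under the integral sign rather than reprove it. I would prove them in the order (1), (3), (2).

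For Part~(1), I would start from $\log p(y,\theta) = \log p(y) + T(y) C(\theta) - B(\theta)$ and differentiate twice in $\theta$, giving $\partial_\theta^2 \log p(y,\theta) = T(y) C''(\theta) - B''(\theta)$. Taking expectations under $\mathbb{P}_\theta$ and using $\mathbb{E}_\theta[T(Y)] = \theta$ from Equation~\eqref{eq:A1} yields $I(\theta) = B''(\theta) - \theta C''(\theta)$. Differentiating the identity $B'(\theta) = \theta C'(\theta)$ once gives $B''(\theta) = C'(\theta) + \theta C''(\theta)$, and substituting cancels the $\theta C''(\theta)$ terms, leaving $I(\theta) = C'(\theta)$.

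For the explicit representation in Part~(3), I would insert the density ratio $\log\bigl(p(y,\theta)/p(y,\theta')\bigr) = T(y)[C(\theta) - C(\theta')] - [B(\theta) - B(\theta')]$ into the definition of $\mathcal{KL}(\theta,\theta')$ and integrate against $\mathbb{P}_\theta$; once more $\mathbb{E}_\theta[T(Y)] = \theta$ replaces the $T(y)$ factor by $\theta$, producing Equation~\eqref{eq:KL}. Convexity in the first argument then follows by differentiating $\theta \mapsto \mathcal{KL}(\theta,\theta')$ twice: after using $B'(\theta) = \theta C'(\theta)$ the first derivative collapses to $C(\theta) - C(\theta')$, and the second derivative is exactly $C'(\theta) = I(\theta)$ by Part~(1), so nonnegativity of the Fisher information gives convexity. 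Finally, for Part~(2) I would use that $I = C'$ is continuous on $\Theta$ (since $C \in C^2(\Theta,\mathbb{R})$) and strictly positive; on the compact set $\Theta'$ it attains a finite maximum $M$ and a strictly positive minimum $m>0$, whence $I(\theta_1)/I(\theta_2) \leq M/m =: \varkappa^2$ for all $\theta_1,\theta_2 \in \Theta'$, with $\varkappa \geq 1$ because $M \geq m$.

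The one genuine subtlety, and the step I would treat most carefully, is the strict positivity of $C'(\theta)$, which I need both for the positive minimum in Part~(2) and for (strict) convexity in Part~(3): strict monotonicity of $C$ alone only delivers $C' \geq 0$. To upgrade this to $C'>0$ I would bring in the alternative variance representation of the Fisher information, $I(\theta) = \mathrm{Var}_\theta\bigl(\partial_\theta \log p(Y,\theta)\bigr) = (C'(\theta))^2\, \mathrm{Var}_\theta(T(Y))$, which together with $I(\theta) = C'(\theta)$ forces $C'(\theta) > 0$ for any non-degenerate family. The remaining point, the legitimacy of differentiating under the integral, is standard for regular exponential families and I would simply cite it.
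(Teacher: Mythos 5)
Your parts (1) and (3) are correct and are exactly the standard computations: note that the paper itself does not prove Lemma~\ref{lem:A1} but recalls it from \citet[pp.~339, 352]{PoSp05} and \citet[Lem.~2.3]{BecMat2013}, where the identities arise the same way you derive them --- differentiating $\log p(y,\theta)=\log p(y)+T(y)C(\theta)-B(\theta)$ twice and using $\mathbb{E}_\theta[T(Y)]=\theta$ together with $B'(\theta)=\theta C'(\theta)$ for Part~(1); integrating the log-ratio against $\mathbb{P}_\theta$ for Equation~\eqref{eq:KL}; and observing that $\partial_\theta^2\,\mathcal{KL}(\theta,\theta')=C'(\theta)=I(\theta)\geq 0$ for convexity. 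You also correctly isolate the only delicate point: Part~(2) needs $I=C'$ bounded away from zero on the compact set $\Theta'$, and strict monotonicity of $C$ alone only yields $C'\geq 0$.

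However, the step you use to close that gap does not work as stated. From the two representations $I(\theta)=C'(\theta)$ and $I(\theta)=(C'(\theta))^2\,\mathrm{Var}_\theta(T(Y))$ you conclude that $C'(\theta)>0$ is ``forced'' for any non-degenerate family; but at a hypothetical zero $C'(\theta_0)=0$ both identities read $0=0$, so nothing is forced --- non-degeneracy of $T$, i.e.\ $\mathrm{Var}_{\theta_0}(T(Y))>0$, is perfectly compatible with an isolated zero of the derivative of a strictly increasing $C^2$ function (compare $C(\theta)=\theta^3$ at $\theta=0$). What actually rules this out is the mean-value parametrization in Equation~\eqref{eq:A1}: since $\partial_\theta p(y,\theta)=C'(\theta)\,(T(y)-\theta)\,p(y,\theta)$ and $\mathbb{E}_\theta[T(Y)-\theta]=0$, differentiating the identity $\mathbb{E}_\theta[T(Y)]=\theta$ under the integral gives
\[
	1 \;=\; \frac{d}{d\theta}\,\mathbb{E}_\theta[T(Y)]
	\;=\; \int T(y)\,\partial_\theta p(y,\theta)\,\mathbb{P}(dy)
	\;=\; C'(\theta)\,\mathrm{Var}_\theta\left( T(Y) \right),
\]
hence $C'(\theta)=1/\mathrm{Var}_\theta(T(Y))>0$ everywhere on $\Theta$. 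With this one-line replacement your compactness argument for Part~(2) goes through verbatim (continuity of $C'$, positive minimum $m$, finite maximum $M$, $\varkappa^2:=M/m\geq 1$), and the rest of your proof is complete and coincides with the standard derivation behind the cited sources.
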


Equation~\eqref{eq:varkappa} allows the following notations.

\begin{Not}\label{not:varkappa}
For every compact  and convex subset $\Theta' \subseteq \Theta$, we set
\[
	\varkappa := \max \lbrace I(\theta_1) / I(\theta_2): \theta_1, \theta_2 \in \Theta' \rbrace \geq 1
	\quad \text{ and } \quad \Theta' := \Theta_{\varkappa}.
\]
Vice versa, for every constant $\varkappa \geq 1$, we use the notation $\Theta' := \Theta_{\varkappa}$ for any compact and convex set $\Theta' \subseteq \Theta$ which satisfies Equation~\eqref{eq:varkappa}.
\end{Not}

In this study, we usually require the set~$\Theta_{\varkappa}$ to be sufficiently large such that $\theta(X_i) \in \Theta_{\varkappa}$ for all $i \in \lbrace 1,...,n \rbrace$. 
Its precise choice will be specified where necessary.

\begin{Lem}\label{lem:PS52}
Suppose Assumption~\ref{A1} and let~$\Theta_{\varkappa} \subseteq \Theta$ and~$\varkappa \geq 1$ be as in Notation~\ref{not:varkappa}.
For any sequence~$\theta_0, \theta_1,..., \theta_m \in \Theta_{\varkappa}$, it holds
\[
	\mathcal{KL}^{1/2} \left( \theta_0, \theta_m \right)
	\leq \varkappa \, \sum_{l=1}^m \mathcal{KL}^{1/2} \left( \theta_{l-1}, \theta_l \right).
\]
\end{Lem}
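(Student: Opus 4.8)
The plan is to reduce the claim to the ordinary triangle inequality for the Euclidean distance on $\mathbb{R}$, by sandwiching $\mathcal{KL}^{1/2}(\theta, \theta')$ between two constant multiples of $|\theta - \theta'|$ over the compact convex set $\Theta_{\varkappa}$. First I would turn the explicit representation of the Kullback--Leibler divergence in Lemma~\ref{lem:A1} into an integral expression in terms of the Fisher information. Differentiating \eqref{eq:KL} in its second argument and using $B'(\theta') = \theta' C'(\theta')$ together with $I(\theta') = C'(\theta')$ gives $\partial_{\theta'} \mathcal{KL}(\theta, \theta') = (\theta' - \theta) I(\theta')$; since $\mathcal{KL}(\theta, \theta) = 0$, integration in the second argument yields the exact formula
\[
	\mathcal{KL}(\theta, \theta') = \int_{\theta}^{\theta'} (s - \theta)\, I(s)\, ds .
\]

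Next I would bound the integrand by the extrema of the Fisher information on $\Theta_{\varkappa}$. Writing $I_{\min} := \min_{s \in \Theta_{\varkappa}} I(s)$ and $I_{\max} := \max_{s \in \Theta_{\varkappa}} I(s)$, which are attained and strictly positive because $\Theta_{\varkappa}$ is compact and the ratio bound \eqref{eq:varkappa} is finite, and using that the segment joining $\theta$ and $\theta'$ lies in $\Theta_{\varkappa}$ by convexity, I obtain
\[
	\tfrac{1}{2}\, I_{\min}\, (\theta - \theta')^2 \leq \mathcal{KL}(\theta, \theta') \leq \tfrac{1}{2}\, I_{\max}\, (\theta - \theta')^2
\]
for all $\theta, \theta' \in \Theta_{\varkappa}$. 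Taking square roots gives the two-sided comparison $\sqrt{I_{\min}/2}\,\, |\theta - \theta'| \leq \mathcal{KL}^{1/2}(\theta, \theta') \leq \sqrt{I_{\max}/2}\,\, |\theta - \theta'|$, which linearizes the divergence into the Euclidean metric.

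Finally I would chain these estimates: applying the upper bound to the pair $(\theta_0, \theta_m)$, then the Euclidean triangle inequality $|\theta_0 - \theta_m| \leq \sum_{l=1}^m |\theta_{l-1} - \theta_l|$, and then the lower bound to each summand, to get
\[
	\mathcal{KL}^{1/2}(\theta_0, \theta_m) \leq \sqrt{I_{\max}/I_{\min}}\, \sum_{l=1}^m \mathcal{KL}^{1/2}(\theta_{l-1}, \theta_l) .
\]
The proof concludes by noting that $I_{\max}/I_{\min} \leq \varkappa^2$ is exactly \eqref{eq:varkappa}, choosing $\theta_1, \theta_2$ as the maximizer and minimizer of $I$ over $\Theta_{\varkappa}$, so the constant $\sqrt{I_{\max}/I_{\min}}$ is at most $\varkappa$. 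The only genuinely substantive step is the passage from the divergence to the Euclidean distance, namely the integral representation and the resulting quadratic two-sided bounds; once the problem is linearized, the triangle inequality and the ratio bound do the rest automatically. The point requiring care is that \emph{convexity} of $\Theta_{\varkappa}$ is what guarantees that every intermediate value $s$ entering the integral --- and hence every value $I(s)$ being bounded --- stays inside the set on which \eqref{eq:varkappa} is available.
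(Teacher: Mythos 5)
Your proof is correct and rests on essentially the same mechanism as the cited one: the paper does not prove Lemma~\ref{lem:PS52} itself but refers to \citet[Lem.~5.2]{PoSp05}, whose argument (echoed in the paper's own identity~\eqref{eq:KL-Taylor}) likewise linearizes the Kullback--Leibler divergence into a Euclidean distance via a mean-value form $\mathcal{KL}(\theta_1,\theta_2) = \tfrac{1}{2I(\theta^\ast)}\left[ C(\theta_1)-C(\theta_2)\right]^2$ and then applies the triangle inequality together with the Fisher-information ratio bound~\eqref{eq:varkappa}. Your only deviation is cosmetic: you sandwich in the mean parametrization through the integral representation with the global bounds $I_{\min}, I_{\max}$, rather than in the canonical parametrization $v=C(\theta)$ with an intermediate point, and the constants agree since $I_{\max}/I_{\min}\leq \varkappa^2$ is exactly~\eqref{eq:varkappa}.
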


For the proof, we refer the reader to \citet[Lem. 5.2]{PoSp05}.
Here, we recall some details concerning the applicability of the technical Lemma~\ref{lem:PS52} and the related Equation~\eqref{eq:varkappa}, see \citep[App. A]{BecMat2013}.
For several results, we apply Equation~\eqref{eq:varkappa} and Lemma~\ref{lem:PS52} not only with respect to the true parameters~$\lbrace \theta_i \rbrace_i$, but as well with respect to the transformed observations~$\lbrace T(Y_i) \rbrace_i$ or the associated estimates~$\lbrace \tilde{\theta}_i^{(k)} \rbrace_i$, $k \in \lbrace 0,..., k^* \rbrace$. 
Therefore, we restrict our analysis to the favorable realizations $\lbrace T(Y_i) \in \Theta_{\varkappa} \text{ for all } i \rbrace$, and we quantify the probability of its complementary set. 
For every~$\varkappa$, we use the most convenient choice of the set~$\Theta_{\varkappa}$. 
Furthermore, we restrict the range of~$\theta(.)$ by a subset $\Theta^* \subseteq \Theta$. 

\begin{Not}\label{not:pKappa}
We fix a subset $\Theta^* \subseteq \Theta$ and a constant $\varphi_0 \geq 0$. 
Then, we recall Notation~\ref{not:varkappa}, and we choose $\varkappa \geq 1$ sufficiently large such that $\Theta^* \subseteq \Theta_{\varkappa}$.
We define the function $\mathfrak{p}_{\varkappa}: \left( \Theta^* \right)^n \to [0,1]$ by
\[
	\mathfrak{p}_{\varkappa} \left( \lbrace \theta_i \rbrace_{i=1}^n \right) 
	:= \inf \lbrace \mathbb{P} \left( \exists \, i \in \lbrace 1,...,n \rbrace: \, T(Y_i) \notin \Theta_{\varkappa} \right): 
	Y_i \sim \mathbb{P}_{\theta_i}, 
	\lbrace \theta_i \rbrace_{i=1}^n \in (\Theta_{\varkappa})^n \rbrace.
\]
The worst choice of $\lbrace \theta_i \rbrace_{i=1}^n \in (\Theta^*)^n$ with bounded Kullback-Leibler divergence yields the probability
\begin{eqnarray*}
	p_{\varkappa} &:=& \sup \lbrace \, \mathfrak{p}_{\varkappa} \left( \lbrace \theta_i \rbrace_{i=1}^n \right): \, 
	\lbrace \theta_i \rbrace_{i=1}^n \in (\Theta^*)^n \text{ and } \max_{i,j} \mathcal{KL} (\theta_i, \theta_j) \leq \varphi_0^2 \rbrace.
\end{eqnarray*}
\end{Not}

The probability~$p_{\varkappa}$ decreases with increasing $\varkappa \geq 1$. 

\begin{Ex}\label{ex:pKappa}\hspace{1 pt}
\begin{compactitem}
\item For Gaussian and log-normal distributed observations with $\mathcal{P} = \lbrace \mathcal{N}(\theta, \sigma^2) \rbrace_{\theta \in \Theta}$ and $\mathcal{P} = \lbrace \text{log} \, \mathcal{N}(\theta, \sigma^2) \rbrace_{\theta \in \Theta}$, respectively, it holds $I(\theta) = 1/ \sigma^2$, leading to $\varkappa = 1$.
In this case, Equation~\eqref{eq:varkappa} and Lemma~\ref{lem:PS52} hold for every subset~$\Theta' \subseteq \Theta$ without the restriction to compact sets, and we get $p_{\varkappa} = 0$. 
This is the optimal scenario.
\item For the Gamma, Erlang, scaled chi-squared, exponential, Rayleigh, Weibull, and Pareto distributions, it holds after reparametrization $I(\theta) = 1/ \theta^2$. 
In this case~$\varkappa$ and~$p_{\varkappa}$ become large. 
However, the effective values of~$\varkappa$ and~$p_{\varkappa}$ may be much smaller than the global ones,
which attenuates the consequences in practice.
\end{compactitem}
\end{Ex}

Next we recall an exponential bound by \citet[Thm. 6.1]{PoSp05}, from which the Theorems~\ref{thm:PS21} and~\ref{thm:PS22} follow as special cases.

\begin{Thm}\label{thm:PS61}
Suppose Assumption~\ref{A1}, and reparametrize~$v := C(\theta)$ and~$ D(v) := B(\theta)$.
Furthermore, let $\overline{W}_i := \lbrace \overline{w}_{ij} \rbrace_{j=1}^n \in [0,1]^n$ be a weighting scheme, 
and consider the corresponding non-adaptive estimator~$\overline{\theta}_i$ in Notation~\ref{not:algorithm} and its expectation $\breve{\theta}_i := \mathbb{E} \overline{\theta}_i = \sum_{j} \overline{w}_{ij} \theta_j / \overline{N}_i$. 
We set $q(u|v) := \mathcal{KL} (v, v+u)$ and define, for a given constant~$z \geq 0$ and~$\breve{v}_i = C(\breve{\theta}_i)$, the set
\[
	\mathcal{U} \left( \overline{W}_i, z \right) := \left\{ u \in \mathbb{R}: \int_0^u x D''(\breve{v}_i+ x) dx = z / \overline{N}_i \right\},
\]
where it holds $\int_0^u x D''(\breve{v}_i+ x) dx = \mathcal{KL} (\breve{v}_i + u, \breve{v}_i)$
Finally, we assume the existence of some constant~$\alpha \geq 0$ such that
\begin{equation}\label{eq:expBound}
	q(\mu u \overline{w}_{ij} | v_j) \leq (1 + \alpha) \mu^2 \overline{w}_{ij} q(u|\breve{v}_i), \qquad j=1,...,n,
\end{equation}
for $\mu := (1 + \alpha)^{-1} \in (0,1]$ and all $u \in \mathcal{U} \left( \overline{W}_i, z \right)$. 
Then, we get
\[
	\mathbb{P} \left( \overline{N}_i \, \mathcal{KL} \left( \overline{\theta}_i, \mathbb{E} \overline{\theta}_i \right) > z \right)
	\leq 2 e^{-z/(1+ \alpha)}.
\]
\end{Thm}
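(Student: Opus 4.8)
The plan is to run the classical Chernoff (exponential Markov) argument in the natural parametrization $v := C(\theta)$, $D(v) := B(\theta)$, in which $D'(v) = \theta$ is the mean of $T(Y)$ and $D''(v)$ is its variance. First I would establish that, for a single observation $Y_j \sim \mathbb{P}_{\theta_j}$ with $v_j := C(\theta_j)$, the cumulant generating function of the centred statistic $T(Y_j) - \theta_j$ is exactly the function $q$ from the statement: using Equation~\eqref{eq:A1} one computes $\log \mathbb{E}[ e^{s(T(Y_j)-\theta_j)} ] = D(v_j+s) - D(v_j) - s\,D'(v_j) = q(s\mid v_j)$. Since $\overline{\theta}_i - \mathbb{E}\overline{\theta}_i = \overline{N}_i^{-1}\sum_j \overline{w}_{ij}(T(Y_j)-\theta_j)$ and the $Y_j$ are independent, the joint moment generating function factorises as $\exp[\sum_j q(s\,\overline{w}_{ij}\mid v_j)]$, which is precisely the quantity controlled by assumption~\eqref{eq:expBound}.

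Next I would split the symmetric event $\{\overline{N}_i\,\mathcal{KL}(\overline{\theta}_i,\mathbb{E}\overline{\theta}_i) > z\}$ according to the sign of $\overline{\theta}_i - \breve{\theta}_i$ and treat the upper tail, the lower tail being analogous. Because $D'$ is strictly increasing and $\overline{v}_i \mapsto \mathcal{KL}(\overline{v}_i, \breve{v}_i)$ increases for $\overline{v}_i > \breve{v}_i$, this tail coincides with the level event $\{\overline{v}_i > \breve{v}_i + u\}$, where $u>0$ is the positive root in $\mathcal{U}(\overline{W}_i,z)$, that is $\mathcal{KL}(\breve{v}_i+u,\breve{v}_i) = z/\overline{N}_i$. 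Expressed through the sufficient statistics, and using $\sum_j \overline{w}_{ij}\theta_j = \overline{N}_i D'(\breve{v}_i)$, this is the event $\{\sum_j \overline{w}_{ij}(T(Y_j)-\theta_j) > \overline{N}_i(D'(\breve{v}_i+u) - D'(\breve{v}_i))\}$.

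The core step is to apply the exponential Markov inequality with the tilting parameter $t := \mu u$, $\mu = (1+\alpha)^{-1}$, and to substitute the factorised generating function, bounding the upper tail by $\exp[-\mu u\,\overline{N}_i(D'(\breve{v}_i+u)-D'(\breve{v}_i)) + \sum_j q(\mu u\,\overline{w}_{ij}\mid v_j)]$. Assumption~\eqref{eq:expBound}, which is posited for exactly this $u \in \mathcal{U}(\overline{W}_i,z)$, then gives $\sum_j q(\mu u\,\overline{w}_{ij}\mid v_j) \le (1+\alpha)\mu^2 \overline{N}_i\,q(u\mid\breve{v}_i) = \mu\,\overline{N}_i\,q(u\mid\breve{v}_i)$, using $(1+\alpha)\mu^2 = \mu$. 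Collecting terms, the exponent equals $\mu\,\overline{N}_i[q(u\mid\breve{v}_i) - u(D'(\breve{v}_i+u)-D'(\breve{v}_i))]$, and the integration-by-parts identity noted in the statement, $\int_0^u x\,D''(\breve{v}_i+x)\,dx = u D'(\breve{v}_i+u) - D(\breve{v}_i+u) + D(\breve{v}_i)$, collapses the bracket to $-\mathcal{KL}(\breve{v}_i+u,\breve{v}_i) = -z/\overline{N}_i$. Hence the upper tail is at most $e^{-z/(1+\alpha)}$; repeating the argument with the negative root of $\mathcal{U}(\overline{W}_i,z)$ and $t<0$ bounds the lower tail by the same quantity, and adding the two one-sided probabilities yields the factor $2$.

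I expect the main obstacle to be the exact constant bookkeeping rather than any single estimate: one must check that the tilting choice $t=\mu u$ together with the scaling $(1+\alpha)\mu^2=\mu$ built into~\eqref{eq:expBound} is exactly what turns the combined exponent into $-z/(1+\alpha)$, and that the two divergence directions appearing in the proof --- the ``reverse'' divergence $\mathcal{KL}(\breve{v}_i+u,\breve{v}_i)$ defining $\mathcal{U}(\overline{W}_i,z)$ and the ``forward'' divergence $q(u\mid\breve{v}_i)$ coming from the generating function --- recombine only through the integration-by-parts identity, without ever needing to be equal. A secondary point to verify is that the one-sided reductions to a level event, which rely on the monotonicity of $D'$ and the convexity of $\mathcal{KL}$ in the first argument (Lemma~\ref{lem:A1}), indeed cover the full event up to the two boundary roots.
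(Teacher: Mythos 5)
Your proof is correct: identifying $q(\cdot\mid v_j)$ as the cumulant generating function of the centred statistic $T(Y_j)-\theta_j$ in the natural parametrization, reducing the two-sided event $\{\overline{N}_i\,\mathcal{KL}(\overline{\theta}_i,\mathbb{E}\overline{\theta}_i)>z\}$ to the level events at the two roots of $\mathcal{U}(\overline{W}_i,z)$ via monotonicity of $D'$, tilting with $t=\mu u$ so that $(1+\alpha)\mu^2=\mu$ converts Assumption~\eqref{eq:expBound} into the bound $\mu\overline{N}_i\,q(u\mid\breve{v}_i)$, and collapsing the exponent to $-z/(1+\alpha)$ through the integration-by-parts identity linking the forward divergence $q(u\mid\breve{v}_i)$ with the reverse divergence $\mathcal{KL}(\breve{v}_i+u,\breve{v}_i)=z/\overline{N}_i$ is precisely the standard Cram\'er--Chernoff argument. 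Note that this paper itself gives no proof of Theorem~\ref{thm:PS61} --- it is recalled from \citet[Thm.~6.1]{PoSp05} --- and your argument faithfully reconstructs the proof of that original result, the only tacit (and harmless) convention being that if $\mathcal{U}(\overline{W}_i,z)$ has no root on one side, the corresponding tail event is empty and the bound holds trivially.
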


\begin{Rem}
\citet{PoSp05} assumed the sufficient statistic~$T$ in Assumption~\ref{A1} to be the identity map.
Fortunately, Theorem~\ref{thm:PS61} depends on the probability distribution and consequently on~$T$ via the Kullback-Leibler divergence only. 
This ensures with Lemma~\ref{lem:A1}~(3) that the choice of~$T$ does not have any effect, and the original result remains valid.
\end{Rem}

The next result can be found in \citep[Thm. 2.1]{PoSp05}.

\begin{Thm}\label{thm:PS21}
Let Assumption~\ref{A1} be satisfied, and presume a parametric model, $\theta(.) \equiv \theta$. 
Then, for each $i \in \lbrace 1,...,n \rbrace$ and every weighting scheme~$\overline{W}_i := \lbrace \overline{w}_{ij} \rbrace_{j=1}^n \in [0,1]^n$,
we get
\[
	\mathbb{P} \left( \overline{N}_i \mathcal{KL}(\overline{\theta}_i, \theta) > z \right) \leq 2 e^{-z} \quad \text{ for all } z > 0
\]
with~$\overline{N}_i$ and~$\overline{\theta}_i$ as in Notation~\ref{not:algorithm}.
\end{Thm}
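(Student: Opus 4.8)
The plan is to obtain Theorem~\ref{thm:PS21} as the homogeneous specialization of Theorem~\ref{thm:PS61}, which is exactly how the excerpt advertises it (``from which the Theorems~\ref{thm:PS21} and~\ref{thm:PS22} follow as special cases''). First I would fix $i$ and the deterministic weighting scheme $\overline{W}_i$, and record that under $\theta(.) \equiv \theta$ the observations $Y_j$ are independent with $\mathbb{E}_{\theta}[T(Y_j)] = \theta$ by Equation~\eqref{eq:A1}, so that $\mathbb{E}\overline{\theta}_i = \sum_j \overline{w}_{ij}\theta/\overline{N}_i = \theta$. Hence $\breve{\theta}_i = \theta$, and after the reparametrization $v := C(\theta)$, $D(v) := B(\theta)$ of Theorem~\ref{thm:PS61} all the relevant natural parameters coincide: $v_j = \breve{v}_i = v$ for every $j$. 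Consequently the claim $\mathbb{P}(\overline{N}_i\,\mathcal{KL}(\overline{\theta}_i,\theta) > z) \le 2e^{-z}$ follows the moment the hypotheses of Theorem~\ref{thm:PS61} are verified with $\alpha = 0$, since then $\mu = 1$ and $\mathbb{E}\overline{\theta}_i = \theta$ turns the conclusion $\mathbb{P}(\overline{N}_i\,\mathcal{KL}(\overline{\theta}_i, \mathbb{E}\overline{\theta}_i) > z) \le 2e^{-z/(1+\alpha)}$ into the assertion.

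With all parameters equal to $v$, condition~\eqref{eq:expBound} reduces (for $\alpha = 0$, $\mu = 1$) to the single inequality
\[
	q(u\,\overline{w}_{ij}\mid v) \le \overline{w}_{ij}\, q(u \mid v), \qquad j = 1,\dots,n,
\]
required to hold for all admissible $u$. So the entire task collapses to establishing this one convexity-type bound; everything else is bookkeeping inherited from the homogeneous substitution.

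To prove it I would use that $q(u\mid v) = \mathcal{KL}(v, v+u) = D(v+u) - D(v) - u\,D'(v)$, which is consistent with the integral identity quoted just after the definition of $\mathcal{U}(\overline{W}_i,z)$ in Theorem~\ref{thm:PS61}. Differentiating twice in $u$ gives $\partial_u^2\, q(u\mid v) = D''(v+u)$; since Lemma~\ref{lem:A1}~(1) yields $I(\theta) = C'(\theta)$ and the reparametrization gives $D''(v) = 1/I(\theta) > 0$ (strict positivity coming from the strict monotonicity of $C$ in Assumption~\ref{A1}), the map $u \mapsto q(u\mid v)$ is convex with $q(0\mid v) = 0$. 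For any weight $\overline{w}_{ij} \in [0,1]$, writing $u\,\overline{w}_{ij} = \overline{w}_{ij}\,u + (1-\overline{w}_{ij})\cdot 0$ and invoking convexity gives $q(u\,\overline{w}_{ij}\mid v) \le \overline{w}_{ij}\,q(u\mid v) + (1-\overline{w}_{ij})\,q(0\mid v) = \overline{w}_{ij}\,q(u\mid v)$, which is precisely the desired inequality. I would note that this argument is valid for every $u \in \mathbb{R}$ regardless of sign, hence in particular on $\mathcal{U}(\overline{W}_i, z)$, so no case distinction on the admissible $u$ is needed.

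The main obstacle, such as it is, lies entirely in verifying~\eqref{eq:expBound} with $\alpha = 0$: the delicate point is identifying $D''(v) = 1/I(\theta)$ and its strict positivity, which is what upgrades the generic factor $e^{-z/(1+\alpha)}$ to the clean bound $e^{-z}$. Once the convexity of $q(\cdot\mid v)$ together with $q(0\mid v)=0$ is in hand, the homogeneous reduction and the application of Theorem~\ref{thm:PS61} are routine, and the factor $2$ in the bound is carried over unchanged.
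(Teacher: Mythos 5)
Your proof is correct, and it supplies a derivation the paper itself leaves implicit: for Theorem~\ref{thm:PS21} the paper gives no proof at all, citing \citet[Thm.~2.1]{PoSp05} and remarking only that the result follows from Theorem~\ref{thm:PS61} as a special case --- which is exactly the reduction you carry out. It is worth stressing that your convexity argument is genuinely the right tool and not mere bookkeeping: the only worked proof in the paper's appendix along these lines, namely that of the inhomogeneous Theorem~\ref{thm:PS22}, verifies condition~\eqref{eq:expBound} via the Lagrange form $q(u \mid v) = u^2 D''(v + cu)/2$ together with the Fisher-information ratio bound~\eqref{eq:varkappa}; that Taylor comparison only delivers $\alpha = \varkappa^2 - 1$ (the two Lagrange points differ, so even with $v_j = \breve{v}_i$ it does not give $\alpha = 0$) and forces a restriction to favorable realizations, producing the additive $\breve{p}_{\varkappa}$ term of Notation~\ref{not:pKappa2}. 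Your Bregman-type inequality $q(w u \mid v) \leq w \, q(u \mid v)$, obtained from convexity of $q(\cdot \mid v)$ and $q(0 \mid v) = 0$, achieves $\alpha = 0$ exactly and requires no restriction of realizations, which is precisely why the homogeneous bound is the clean $2e^{-z}$ with no additive probability. Two cosmetic points: first, the inequality holds for all $u$ such that $v + u$ lies in $C(\Theta)$, not literally for all $u \in \mathbb{R}$, since $q(\cdot \mid v)$ is undefined outside the natural-parameter interval; this is harmless because $C(\Theta)$ is an interval (convexity of $\Theta$, monotonicity of $C$) and $\mathcal{U}(\overline{W}_i, z)$ only contains admissible $u$. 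Second, you only need $D'' \geq 0$, i.e.\ convexity of $D$; the strict positivity of $C'$, which strict monotonicity alone does not guarantee pointwise, is immaterial to the argument.
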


Next we extend Theorem~\ref{thm:PS21} to parameter functions with bounded variability.
In the corresponding proof, \citet{PoSp05} used Equation~\eqref{eq:varkappa} (page~\pageref{eq:varkappa}).
Although not stated in \citep[Thm. 2.2]{PoSp05}, this requires a restriction to the favorable realizations, which may worsen the result.
In order to quantify the probability of the complementary set, we proceed in an analogous manner as in Notation~\ref{not:pKappa}.
We consider a different set of realizations, whose definition will be motivated in the proof of Theorem~\ref{thm:PS22} (page~\pageref{proof:PS22}).

\begin{Not}\label{not:pKappa2}
Recall Notation~\ref{not:varkappa}.
We fix a subset $\Theta^* \subseteq \Theta$ and a constant $\varphi_0 \geq 0$. 
Let $\varkappa \geq 1$ be sufficiently large such that $ \Theta^* \subseteq \Theta_{\varkappa}$.
Then, for every $i \in \lbrace 1,..., n \rbrace$, we consider the non-adaptive estimator~$\overline{\theta}_i$ in Notation~\ref{not:algorithm}
with weighting scheme $\overline{W}_i := \lbrace \overline{w}_{ij} \rbrace_{j=1}^n \in [0,1]^n$.
The function $\breve{\mathfrak{p}}_{\varkappa}: \left( \Theta^* \right)^n \to [0,1]$ is given by
\begin{eqnarray*}
	\breve{\mathfrak{p}}_{\varkappa} \left( \lbrace \theta_i \rbrace_{i=1}^n \right) 
	& := \inf \lbrace & \mathbb{P} \left( \exists \, i,j \in \lbrace 1,...,n \rbrace: \, C^{-1} [ C(\theta_j) + C(\overline{\theta}_i) - C(\mathbb{E} \overline{\theta}_i) ] \notin \Theta_{\varkappa} \right): \\
	&& Y_i \sim \mathbb{P}_{\theta_i}, 
	\lbrace \theta_i \rbrace_{i=1}^n \in (\Theta_{\varkappa})^n \: \rbrace.
\end{eqnarray*}
Furthermore, we consider the worst choice of $\lbrace \theta_i \rbrace_{i=1}^n \in (\Theta^*)^n$ with bounded Kullback-Leibler divergence via
\begin{eqnarray*}
	\breve{p}_{\varkappa} &:=& \sup \left\{ \, \breve{\mathfrak{p}}_{\varkappa} \left( \lbrace \theta_i \rbrace_{i=1}^n \right): \, 
	\lbrace \theta_i \rbrace_{i=1}^n \in (\Theta^*)^n \text{ and } \max_{i,j} \mathcal{KL} (\theta_i, \theta_j) \leq \varphi_0^2 \right\}.
\end{eqnarray*}
For every $i \in \lbrace 1,...,n \rbrace$, let the weighting scheme $\overline{W}_i := \lbrace \overline{w}_{ij} \rbrace_{j=1}^n \in [0,1]^n$ be given as $\overline{w}_{ii} = 1$ and $\overline{w}_{ij} = 0$ for all $j \neq i$.
Then, it holds $\overline{\theta}_i = T(Y_i)$ for every~$i$, and we set $\breve{p}_{\varkappa,0} := \breve{p}_{\varkappa}$ in order to distinguish the specific weighting scheme.
\end{Not}

\begin{Ex}
For Gaussian and log-normal distributed observations, it holds $\breve{p}_{\varkappa}=0$ since $\varkappa = 1$ for every set $\Theta_{\varkappa} \subseteq \Theta$. 
For the Gamma and its related distributions, the probability~$\breve{p}_{\varkappa}$ may be large, and it increases with decreasing values of~$\varkappa$ as well as with increasing sample sizes.
\end{Ex}

The probabilities~$\breve{p}_{\varkappa,0}$ in Notation~\ref{not:pKappa2} and~$p_{\varkappa}$ in Notation~\ref{not:pKappa} are closely related.

\begin{Cor}\label{cor:pKappa}
Suppose Assumption~\ref{A1} and the setting of Notation~\ref{not:pKappa2}.
Then, it holds 
\begin{eqnarray*}
	\breve{\Omega}_{\varkappa}
	:= \bigcap_{i, j=1}^n \left\{ C^{-1} \left[ C(\theta_j) + C(T(Y_i)) - C(\theta_i) \right] \in \Theta_{\varkappa} \right\}
	\subseteq \bigcap_{i=1}^n \left\{ T(Y_i) \in \Theta_{\varkappa} \right\}
	=: \Omega_{\varkappa},
\end{eqnarray*}
and, as a consequence, we get $\breve{p}_{\varkappa,0} \geq p_{\varkappa}$, where~$p_{\varkappa}$ is as in Notation~\ref{not:pKappa}.
\end{Cor}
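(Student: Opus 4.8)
The plan is to prove the two assertions in sequence: first the set inclusion $\breve{\Omega}_{\varkappa} \subseteq \Omega_{\varkappa}$, and then to read off $\breve{p}_{\varkappa,0} \geq p_{\varkappa}$ as an immediate consequence, since both $p_{\varkappa}$ and $\breve{p}_{\varkappa,0}$ are assembled from the probabilities of the complementary (unfavorable) events. The inclusion is the substantial part, and it hinges entirely on the diagonal terms of the intersection defining $\breve{\Omega}_{\varkappa}$. Concretely, $\breve{\Omega}_{\varkappa}$ is an intersection over \emph{all} pairs $(i,j)$, whereas $\Omega_{\varkappa}$ only involves the single index $i$. Since $C$ is strictly monotonic increasing by Assumption~\ref{A1}, it is invertible, and for the diagonal choice $j=i$ the argument collapses via $C^{-1}[ C(\theta_i) + C(T(Y_i)) - C(\theta_i) ] = C^{-1}[ C(T(Y_i)) ] = T(Y_i)$. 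Thus the event indexed by $(i,i)$ in $\breve{\Omega}_{\varkappa}$ is exactly $\lbrace T(Y_i) \in \Theta_{\varkappa} \rbrace$. Because an intersection over all pairs is contained in the intersection over its diagonal alone, I would conclude $\breve{\Omega}_{\varkappa} \subseteq \bigcap_{i=1}^n \lbrace T(Y_i) \in \Theta_{\varkappa} \rbrace = \Omega_{\varkappa}$.

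For the probability statement I would specialize to the weighting scheme $\overline{w}_{ii}=1$, $\overline{w}_{ij}=0$ for $j \neq i$, which is precisely the scheme defining $\breve{p}_{\varkappa,0}$ in Notation~\ref{not:pKappa2}. For this scheme one has $\overline{N}_i = 1$ and $\overline{\theta}_i = T(Y_i)$, while $\mathbb{E}\overline{\theta}_i = \mathbb{E}[T(Y_i)] = \theta_i$ by the identity $\mathbb{E}_{\theta}[T(Y)] = \theta$ in~\eqref{eq:A1}. Hence $C(\overline{\theta}_i) - C(\mathbb{E}\overline{\theta}_i) = C(T(Y_i)) - C(\theta_i)$, so the event appearing inside $\breve{\mathfrak{p}}_{\varkappa}$ is exactly the complement $\breve{\Omega}_{\varkappa}^{c}$, whereas the event inside $\mathfrak{p}_{\varkappa}$ in Notation~\ref{not:pKappa} is $\Omega_{\varkappa}^{c}$. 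Taking complements in the inclusion just proved gives $\Omega_{\varkappa}^{c} \subseteq \breve{\Omega}_{\varkappa}^{c}$, so $\mathbb{P}(\breve{\Omega}_{\varkappa}^{c}) \geq \mathbb{P}(\Omega_{\varkappa}^{c})$ for every admissible choice of the set $\Theta_{\varkappa}$ and every fixed parameter vector $\lbrace \theta_i \rbrace_{i=1}^n$. Since this domination is pointwise over the common index set of the infimum (the admissible compact convex sets $\Theta_{\varkappa}$) and the subsequent supremum runs over the same parameter configurations with $\max_{i,j} \mathcal{KL}(\theta_i,\theta_j) \leq \varphi_0^2$ in both definitions, and since both $\inf$ and $\sup$ preserve the order, I would obtain $\breve{\mathfrak{p}}_{\varkappa} \geq \mathfrak{p}_{\varkappa}$ and therefore $\breve{p}_{\varkappa,0} \geq p_{\varkappa}$.

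No step requires any genuine estimate; the whole argument is bookkeeping built on the single invertibility identity for $C$. Accordingly, the only real obstacle—and the likeliest place to slip—is the faithful translation of the two unfavorable events in Notations~\ref{not:pKappa} and~\ref{not:pKappa2} under the diagonal weighting scheme, in particular verifying that $\mathbb{E}\overline{\theta}_i = \theta_i$ so that the argument of $C^{-1}$ reduces correctly and the two defining events are seen to be precisely $\breve{\Omega}_{\varkappa}^{c}$ and $\Omega_{\varkappa}^{c}$, after which the monotonicity of $\inf$ and $\sup$ over the shared index sets delivers the claimed inequality.
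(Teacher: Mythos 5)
Your proof is correct and is exactly the argument the paper leaves implicit (the corollary is stated without proof, as immediate from the notations): the diagonal specialization $j=i$ with $C^{-1}[C(\theta_i)+C(T(Y_i))-C(\theta_i)]=T(Y_i)$ gives the inclusion, and under the point-mass weighting scheme of Notation~\ref{not:pKappa2} one has $\overline{\theta}_i = T(Y_i)$ and $\mathbb{E}\overline{\theta}_i=\theta_i$ by Equation~\eqref{eq:A1}, so the unfavorable events are precisely $\breve{\Omega}_{\varkappa}^{c}$ and $\Omega_{\varkappa}^{c}$, and the inequality passes through the infimum over admissible sets $\Theta_{\varkappa}$ and the supremum over the common class of parameter configurations. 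Your care in checking that the two definitions share the same index sets for the $\inf$ and $\sup$ is exactly the bookkeeping the paper's statement presupposes.
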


\begin{Thm}\label{thm:PS22}
Suppose Assumption~\ref{A1}, and fix a subset $\Theta^* \subseteq \Theta$ and a constant $\varphi_0 \geq 0$ such that
$\lbrace \theta_i \rbrace_{i=1}^n \in (\Theta^*)^n$ and $\max_{i,j} \mathcal{KL} (\theta_i, \theta_j) \leq \varphi_0^2$.
Moreover, recall Notation~\ref{not:varkappa} and let $\varkappa \geq 1$ be sufficiently large such that $\Theta^* \subseteq \Theta_{\kappa}$.
Finally, let $\overline{W}_i := \lbrace \overline{w}_{ij} \rbrace_{j=1}^n \in [0,1]^n$ denote a weighting scheme, and recall the corresponding quantities~$\overline{\theta}_i$ and~$\overline{N}_i$ in Notation~\ref{not:algorithm}.
Then, for each $i \in \lbrace 1,...,n \rbrace$ and every $z>0$,
it holds
\[
	\mathbb{P} \left( \overline{N}_i \mathcal{KL} (\overline{\theta}_i, \mathbb{E} \overline{\theta}_i) > z \right) 
	\leq 2 e^{-z/\varkappa^2} + \breve{p}_{\varkappa},
\]
where~$\breve{p}_{\varkappa}$ is as in Notation~\ref{not:pKappa2}.
\end{Thm}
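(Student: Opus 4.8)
The plan is to derive Theorem~\ref{thm:PS22} from the master exponential bound in Theorem~\ref{thm:PS61}, in exactly the way Theorem~\ref{thm:PS21} is obtained there, but now paying for the inhomogeneity of \(\theta(\cdot)\) through the Fisher-information ratio \(\varkappa^2\) and the exceptional-realization probability \(\breve{p}_{\varkappa}\). First I would reparametrize by the natural parameter \(v := C(\theta)\) with \(D(v) := B(\theta)\), so that, by Lemma~\ref{lem:A1}, the Kullback--Leibler divergence becomes the Bregman divergence of the convex function \(D\) and \(q(u\mid v) = \mathcal{KL}(v, v+u) = u^2\int_0^1 (1-t)\, D''(v+ut)\,dt\). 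Since \(D''(v) = 1/I(\theta)\) for \(\theta = D'(v)\), Equation~\eqref{eq:varkappa} translates into the two-sided ratio bound \(D''(a)/D''(b) \le \varkappa^2\) for all points \(a,b\) whose images under \(C^{-1}\) lie in the convex set \(\Theta_{\varkappa}\).

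With this in hand I would apply Theorem~\ref{thm:PS61} with \(\alpha := \varkappa^2 - 1\), so that \(1+\alpha = \varkappa^2\) and \(\mu = (1+\alpha)^{-1} = \varkappa^{-2} \in (0,1]\); the resulting bound \(2e^{-z/(1+\alpha)}\) is precisely the desired exponential term \(2e^{-z/\varkappa^2}\). The whole task then reduces to verifying the key inequality~\eqref{eq:expBound}, i.e. \(q(\mu u \overline{w}_{ij}\mid v_j) \le \varkappa^2 \mu^2 \overline{w}_{ij}\, q(u\mid \breve{v}_i)\) for every \(j\) and every \(u \in \mathcal{U}(\overline{W}_i, z)\), where \(\breve{v}_i = C(\mathbb{E}\overline{\theta}_i)\). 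Using the integral representation above, the substitution \(s = \mu u \overline{w}_{ij}\, t\) factors out \((\mu u \overline{w}_{ij})^2\) on the left and \(u^2\) on the right; dividing by \(\mu^2 u^2\) and using \(\overline{w}_{ij}^2 \le \overline{w}_{ij}\) (as \(\overline{w}_{ij}\in[0,1]\)), it suffices to establish the pointwise comparison \(D''(v_j + \mu u \overline{w}_{ij}\, t) \le \varkappa^2 D''(\breve{v}_i + u t)\) for all \(t\in[0,1]\) and to integrate it against the nonnegative weight \((1-t)\). This pointwise bound is immediate from the \(\varkappa^2\)-ratio bound, \emph{provided both evaluation points lie in} \(\Theta_{\varkappa}\).

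Guaranteeing that all evaluation points stay in \(\Theta_{\varkappa}\) is the step I expect to be the main obstacle, and it is exactly what forces the additive term \(\breve{p}_{\varkappa}\). I would restrict to the favorable event \(\breve{\Omega}_{\varkappa}\) of Corollary~\ref{cor:pKappa}, on which the shifted points \(C^{-1}[C(\theta_j) + C(\overline{\theta}_i) - C(\mathbb{E}\overline{\theta}_i)]\) — that is, \(v_j + (\overline{v}_i - \breve{v}_i)\) for the realized shift \(\overline{v}_i - \breve{v}_i\) — belong to \(\Theta_{\varkappa}\); this is the reason for the particular definition of \(\breve{\mathfrak{p}}_{\varkappa}\) in Notation~\ref{not:pKappa2}. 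Since \(\theta_j \in \Theta^* \subseteq \Theta_{\varkappa}\) and \(\mathbb{E}\overline{\theta}_i\) is a convex combination of the \(\theta_j\in\Theta_{\varkappa}\), and since \(\mu\,\overline{w}_{ij}\, t \in [0,1]\), the convexity of \(\Theta_{\varkappa}\) writes \(v_j + \mu u\overline{w}_{ij}t = (1-\mu\overline{w}_{ij}t)\,v_j + \mu\overline{w}_{ij}t\,(v_j+u)\) and \(\breve{v}_i + ut = (1-t)\,\breve{v}_i + t\,(\breve{v}_i+u)\) as convex combinations of points of \(\Theta_{\varkappa}\), placing both entire segments inside \(\Theta_{\varkappa}\) on \(\breve{\Omega}_{\varkappa}\).

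Finally I would assemble the estimate as
\[
\mathbb{P}\big(\overline{N}_i\,\mathcal{KL}(\overline{\theta}_i,\mathbb{E}\overline{\theta}_i) > z\big) \le \mathbb{P}\big(\{\overline{N}_i\,\mathcal{KL}(\overline{\theta}_i,\mathbb{E}\overline{\theta}_i) > z\}\cap\breve{\Omega}_{\varkappa}\big) + \mathbb{P}(\breve{\Omega}_{\varkappa}^{c}),
\]
bounding the first summand by \(2e^{-z/\varkappa^2}\) by rerunning the exponential-bounding argument of Theorem~\ref{thm:PS61} on the event \(\breve{\Omega}_{\varkappa}\), and the second by \(\breve{p}_{\varkappa}\) through the very definition in Notation~\ref{not:pKappa2}, which yields the claim. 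The delicate bookkeeping — reconciling the value \(u \in \mathcal{U}(\overline{W}_i,z)\) entering Theorem~\ref{thm:PS61} with the realized shift \(\overline{v}_i-\breve{v}_i\) appearing in \(\breve{\Omega}_{\varkappa}\), and confirming that the exceptional set is captured by the shifted probability \(\breve{p}_{\varkappa}\) rather than by the coarser \(p_{\varkappa}\) of Notation~\ref{not:pKappa} (which is what Corollary~\ref{cor:pKappa} serves to relate) — is where the argument must be carried out with care.
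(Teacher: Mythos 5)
Your proposal follows essentially the same route as the paper's proof: apply Theorem~\ref{thm:PS61} with $\alpha = \varkappa^2 - 1$, verify condition~\eqref{eq:expBound} via a Taylor representation of $q(u \mid v)$ together with the Fisher-information ratio bound~\eqref{eq:varkappa}, and restrict to the favorable realizations where $C^{-1}(v_j + \overline{v}_i - \breve{v}_i) \in \Theta_{\varkappa}$, charging the complement to $\breve{p}_{\varkappa}$ — exactly the paper's observation that~\eqref{eq:expBound} need only hold at the realized shift $u = \overline{v}_i - \breve{v}_i$. Your only deviations are cosmetic: you use the integral form of the Taylor remainder (with an explicit convex-combination argument keeping all evaluation points in $\Theta_{\varkappa}$) where the paper uses the Lagrange form with intermediate constants $c_1, c_2$, and you cite the event $\breve{\Omega}_{\varkappa}$ of Corollary~\ref{cor:pKappa} (the identity-weighting special case yielding $\breve{p}_{\varkappa,0}$) where the general event underlying Notation~\ref{not:pKappa2}, formulated for the weighting scheme $\overline{W}_i$, is the one actually needed — though your verbal description of the event is the correct general one.
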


We give the proof in Appendix~\ref{app:proofs} in order to clarify the appearance of the probability~$\breve{p}_{\varkappa}$.
Finally, we recall the separation property, which can be found in \citep[Thm. 5.9]{PoSp05} and \citep[Prop. 3.2]{BecMat2013}.

\begin{Prop}\label{prop:separationMS}
Suppose Assumption~\ref{A1}, and consider two design points $X_{i_1}, X_{i_2} \in \mathcal{X}$.
Assume that the realization at hand satisfies at these points in iteration step~$k$ the estimation accuracy $\mathcal{KL}(\tilde{\theta}_{i_m}^{(k)}, \theta_{i_m}) \leq z_m^{(k)} := z / \overline{N}_{i_m}^{(k)}$ with some constant~$z > 0$ 
and $\theta_{i_m}, \tilde{\theta}_{i_m}^{(k)} \in \Theta_{\varkappa}$, $m=1,2$, for $\varkappa \geq 1$ fixed and~$\Theta_{\varkappa}$ as in Notation~\ref{not:varkappa}. 
If additionally
\begin{equation}\label{eq:varphi1}
	\mathcal{KL}^{1/2} \left( \theta_{i_1}, \theta_{i_2} \right) > \varkappa \left( \sqrt{\lambda /\tilde{N}_{i_1}^{(k)}} + \sqrt{z_1^{(k)}} + \sqrt{z_2^{(k)}} \right),
\end{equation}
then we get~$\tilde{w}_{i_1 i_2}^{(k+1)} = 0$.
\end{Prop}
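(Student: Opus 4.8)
The plan is to show that the separation hypothesis pushes the statistical penalty $s_{i_1 i_2}^{(k+1)}$ out of the support of the adaptation kernel. Since $K_{\mathrm{ad}}$ has support $[0,1)$, we get $\tilde{w}_{i_1 i_2}^{(k+1)} = \overline{w}_{i_1 i_2}^{(k+1)} K_{\mathrm{ad}}(s_{i_1 i_2}^{(k+1)}/\lambda) = 0$ as soon as $s_{i_1 i_2}^{(k+1)} \geq \lambda$. In the simplified procedure of Notation~\ref{algorithm} the aggregated estimates equal the adaptive ones, $\hat{\theta}_i^{(k)} = \tilde{\theta}_i^{(k)}$ and $\hat{N}_i^{(k)} = \tilde{N}_i^{(k)}$, so the statistical penalty of Algorithm~\ref{algorithmMS} reads $s_{i_1 i_2}^{(k+1)} = \tilde{N}_{i_1}^{(k)} \mathcal{KL}(\tilde{\theta}_{i_1}^{(k)}, \tilde{\theta}_{i_2}^{(k)})$. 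Thus the whole assertion reduces to the single inequality $\mathcal{KL}^{1/2}(\tilde{\theta}_{i_1}^{(k)}, \tilde{\theta}_{i_2}^{(k)}) \geq \sqrt{\lambda / \tilde{N}_{i_1}^{(k)}}$, that is, to a lower bound on the divergence between the two estimates.

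To obtain this lower bound I would feed the separation of the true parameters into the generalized triangle inequality of Lemma~\ref{lem:PS52}. Applied to the chain $\theta_{i_1}, \tilde{\theta}_{i_1}^{(k)}, \tilde{\theta}_{i_2}^{(k)}, \theta_{i_2}$, all lying in $\Theta_{\varkappa}$ by hypothesis, it yields $\mathcal{KL}^{1/2}(\theta_{i_1}, \theta_{i_2}) \leq \varkappa [ \mathcal{KL}^{1/2}(\theta_{i_1}, \tilde{\theta}_{i_1}^{(k)}) + \mathcal{KL}^{1/2}(\tilde{\theta}_{i_1}^{(k)}, \tilde{\theta}_{i_2}^{(k)}) + \mathcal{KL}^{1/2}(\tilde{\theta}_{i_2}^{(k)}, \theta_{i_2}) ]$. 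Dividing by $\varkappa$ and isolating the middle summand turns this into the reverse estimate $\mathcal{KL}^{1/2}(\tilde{\theta}_{i_1}^{(k)}, \tilde{\theta}_{i_2}^{(k)}) \geq \varkappa^{-1} \mathcal{KL}^{1/2}(\theta_{i_1}, \theta_{i_2}) - \mathcal{KL}^{1/2}(\theta_{i_1}, \tilde{\theta}_{i_1}^{(k)}) - \mathcal{KL}^{1/2}(\tilde{\theta}_{i_2}^{(k)}, \theta_{i_2})$. This chain is essentially forced: to lower-bound a \emph{specific} pairwise divergence through Lemma~\ref{lem:PS52} that pair must appear as a summand on the right and the far-apart true parameters as the endpoints on the left.

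It then remains to control the two residual terms by the assumed estimation accuracy and to invoke the separation hypothesis. The term $\mathcal{KL}^{1/2}(\tilde{\theta}_{i_2}^{(k)}, \theta_{i_2})$ is bounded directly by $\sqrt{z_2^{(k)}}$, while $\mathcal{KL}^{1/2}(\theta_{i_1}, \tilde{\theta}_{i_1}^{(k)})$ is --- modulo the reordering of its arguments addressed below --- of size $\sqrt{z_1^{(k)}}$. Rewriting \eqref{eq:varphi1} as $\varkappa^{-1} \mathcal{KL}^{1/2}(\theta_{i_1}, \theta_{i_2}) > \sqrt{\lambda / \tilde{N}_{i_1}^{(k)}} + \sqrt{z_1^{(k)}} + \sqrt{z_2^{(k)}}$ and substituting, the two accuracy contributions cancel and one is left with $\mathcal{KL}^{1/2}(\tilde{\theta}_{i_1}^{(k)}, \tilde{\theta}_{i_2}^{(k)}) > \sqrt{\lambda / \tilde{N}_{i_1}^{(k)}}$. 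Squaring and multiplying by $\tilde{N}_{i_1}^{(k)}$ gives $s_{i_1 i_2}^{(k+1)} > \lambda$, and the reduction of the first paragraph closes the argument.

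The step I expect to be the genuine obstacle is the asymmetry of the Kullback--Leibler divergence. The chain in Lemma~\ref{lem:PS52} inevitably produces the residual $\mathcal{KL}^{1/2}(\theta_{i_1}, \tilde{\theta}_{i_1}^{(k)})$ with the true parameter in the first argument, whereas the estimation-accuracy hypothesis is phrased as $\mathcal{KL}(\tilde{\theta}_{i_m}^{(k)}, \theta_{i_m}) \leq z_m^{(k)}$ with the estimate first, and no admissible chain can align both residuals with the hypothesis at once. This is exactly where the restriction $\theta_{i_m}, \tilde{\theta}_{i_m}^{(k)} \in \Theta_{\varkappa}$ enters: via the explicit representation in Lemma~\ref{lem:A1} and the Fisher-information ratio bound \eqref{eq:varkappa}, the two orders $\mathcal{KL}(\theta, \theta')$ and $\mathcal{KL}(\theta', \theta)$ are comparable up to a factor $\varkappa^2$ on $\Theta_{\varkappa}$. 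I would therefore track carefully how this order reversal, together with the $\varkappa$ already generated by Lemma~\ref{lem:PS52}, is absorbed into the single factor $\varkappa$ in front of \eqref{eq:varphi1}; verifying that the constants match --- rather than any conceptual gap --- is the real work, and I would cross-check it against the formulations in \citep{PoSp05} and \citep{BecMat2013}.
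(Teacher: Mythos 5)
Your overall architecture is the standard one (the paper does not prove this proposition itself but recalls it from \citep{PoSp05} and \citep{BecMat2013}, where the argument runs exactly as you outline): the support $[0,1)$ of $K_{\mathrm{ad}}$ reduces everything to showing $s_{i_1 i_2}^{(k+1)} = \tilde{N}_{i_1}^{(k)} \mathcal{KL}(\tilde{\theta}_{i_1}^{(k)}, \tilde{\theta}_{i_2}^{(k)}) \geq \lambda$, and this follows from a generalized triangle inequality along the chain $\theta_{i_1}, \tilde{\theta}_{i_1}^{(k)}, \tilde{\theta}_{i_2}^{(k)}, \theta_{i_2}$ together with the two accuracy bounds. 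You also correctly identify the asymmetry of the Kullback--Leibler divergence as the delicate point. However, at exactly that point your proof does not close, and the gap is quantitative, not merely bookkeeping. Applying Lemma~\ref{lem:PS52} as a black box produces the residual $\mathcal{KL}^{1/2}(\theta_{i_1}, \tilde{\theta}_{i_1}^{(k)})$ in the order opposite to the hypothesis, and repairing this by the order-reversal bound $\mathcal{KL}^{1/2}(\theta_{i_1}, \tilde{\theta}_{i_1}^{(k)}) \leq \varkappa\, \mathcal{KL}^{1/2}(\tilde{\theta}_{i_1}^{(k)}, \theta_{i_1}) \leq \varkappa \sqrt{z_1^{(k)}}$ leaves you with
\[
	\mathcal{KL}^{1/2} \bigl( \tilde{\theta}_{i_1}^{(k)}, \tilde{\theta}_{i_2}^{(k)} \bigr)
	\geq \varkappa^{-1} \mathcal{KL}^{1/2} \left( \theta_{i_1}, \theta_{i_2} \right) - \varkappa \sqrt{z_1^{(k)}} - \sqrt{z_2^{(k)}}
	> \sqrt{\lambda / \tilde{N}_{i_1}^{(k)}} - (\varkappa - 1) \sqrt{z_1^{(k)}},
\]
which is strictly weaker than what you need whenever $\varkappa > 1$. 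In other words, your route proves the proposition only with \eqref{eq:varphi1} strengthened to $\mathcal{KL}^{1/2}(\theta_{i_1},\theta_{i_2}) > \varkappa ( \sqrt{\lambda/\tilde{N}_{i_1}^{(k)}} + \varkappa \sqrt{z_1^{(k)}} + \sqrt{z_2^{(k)}} )$; the single factor $\varkappa$ in the stated hypothesis is not recovered, and your closing paragraph correctly suspects this but leaves it unresolved.

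The missing idea is that one must not use Lemma~\ref{lem:PS52} as a black box but go inside its proof. By the Taylor representation (Equation~\eqref{eq:KL-Taylor}), $\mathcal{KL}^{1/2}(\theta_1, \theta_2) = \vert C(\theta_1) - C(\theta_2) \vert / \sqrt{2 I(\theta^{\ast})}$ for some intermediate $\theta^{\ast}$, so on the $C$-scale one works with the \emph{symmetric} distance $d(\theta_1,\theta_2) := \vert C(\theta_1) - C(\theta_2) \vert$. Applying the ordinary triangle inequality to $d$ along the chain and then converting each summand back to a Kullback--Leibler divergence, one is free to choose the orientation of every summand independently, since $d$ does not see the order; each conversion costs only the ratio $\sqrt{I(\theta^{\ast}_{\mathrm{summand}}) / I(\theta^{\ast}_{\mathrm{endpoints}})} \leq \varkappa$ by Equation~\eqref{eq:varkappa}, and this single factor can be pulled out globally. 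Assuming $\tilde{w}_{i_1 i_2}^{(k+1)} > 0$, hence $\mathcal{KL}(\tilde{\theta}_{i_1}^{(k)}, \tilde{\theta}_{i_2}^{(k)}) < \lambda / \tilde{N}_{i_1}^{(k)}$, this yields
\[
	\mathcal{KL}^{1/2} \left( \theta_{i_1}, \theta_{i_2} \right)
	\leq \varkappa \left[ \mathcal{KL}^{1/2} \bigl( \tilde{\theta}_{i_1}^{(k)}, \theta_{i_1} \bigr) + \mathcal{KL}^{1/2} \bigl( \tilde{\theta}_{i_1}^{(k)}, \tilde{\theta}_{i_2}^{(k)} \bigr) + \mathcal{KL}^{1/2} \bigl( \tilde{\theta}_{i_2}^{(k)}, \theta_{i_2} \bigr) \right]
	< \varkappa \left( \sqrt{z_1^{(k)}} + \sqrt{\lambda / \tilde{N}_{i_1}^{(k)}} + \sqrt{z_2^{(k)}} \right),
\]
with every residual now oriented to match the hypothesis $\mathcal{KL}(\tilde{\theta}_{i_m}^{(k)}, \theta_{i_m}) \leq z_m^{(k)}$, directly contradicting \eqref{eq:varphi1}. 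This is where the standing assumption $\theta_{i_m}, \tilde{\theta}_{i_m}^{(k)} \in \Theta_{\varkappa}$ is genuinely consumed, and it recovers the stated constant exactly.
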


\section{Proofs}\label{app:proofs}

\begin{proof}[Proof of Proposition~\ref{prop:propagation1}]
The inhomogeneous propagation condition yields the monotonicity of the function~$\hat{\mathfrak{Z}}_{\lambda} \left(k, p; \lbrace \theta_j \rbrace_{j=1}^n, i \right)$ in $k \leq k_0$ for all~$p \in (\epsilon, 1)$ and every $i \in \lbrace 1,...,n \rbrace$. 
This implies Equation~\eqref{eq:propCondInhom2}. 
We turn to Equation~\eqref{eq:propCondInhom1}, and we consider the event~$\breve{\Omega}_{\varkappa}$ in Corollary~\ref{cor:pKappa}. The adaptive estimator is defined as a weighted mean of the transformed observations. Therefore, for all $k \in \lbrace 0, ..., k^* \rbrace$, we get
$\breve{\Omega}_{\varkappa} \subseteq \lbrace \tilde{\theta}_i^{(k)} \in \Theta_{\varkappa}, i \in \lbrace 1,..., n \rbrace \rbrace$.
Then, we use the convexity of the Kullback-Leibler divergence with respect to the first argument, see Lemma~\ref{lem:A1}. 
Denoting the complement of the set~$M$ by~$M^c$, it follows from the initialization of Algorithm~\ref{algorithmMS} (page~\pageref{algorithmMS}) with~$\tilde{\theta}_i^{(0)} = \overline{\theta}_i^{(0)}$ that
\begin{eqnarray*}
	&& \mathbb{P} \left( \overline{N}_i^{(k)} \mathcal{KL} \left( \tilde{\theta}_i^{(k)}, \theta_i \right) > z \right) \\
	&\overset{\text{Lem.~\ref{lem:PS52}}}{\leq} & \mathbb{P} \left( \left\{ \varkappa^2 \overline{N}_i^{(k)} \left[ \mathcal{KL}^{1/2} \left( \tilde{\theta}_i^{(k)}, \mathcal{E} \tilde{\theta}_i^{(k)} \right) + \mathcal{KL}^{1/2} \left( \mathcal{E} \tilde{\theta}_i^{(k)}, \theta_i \right) \right]^2 > z \right\} \cap \breve{\Omega}_{\varkappa} \right) + \mathbb{P} \left( \breve{\Omega}_{\varkappa}^c \right) \\
	&\overset{\text{Lem.~\ref{lem:A1}}}{\leq}& \mathbb{P} \left( \left\{ \overline{N}_i^{(k)} \mathcal{KL} \left( \tilde{\theta}_i^{(k)}, \mathcal{E} \tilde{\theta}_i^{(k)} \right) > \left[ \sqrt{z} / \varkappa - \varphi \right]^2 \right\} \cap \breve{\Omega}_{\varkappa} \right) + \breve{p}_{\varkappa,0} \\
	&\overset{\text{Eq.~\eqref{eq:propCondInhom2}}}{\leq}&
	\max \left\{ \mathbb{P} \left( \left\{ \overline{N}_i^{(0)} \mathcal{KL} \left( \overline{\theta}_i^{(0)}, \mathbb{E} \overline{\theta}_i^{(0)} \right) > \left[ \sqrt{z} / \varkappa - \varphi \right]^2 \right\} \cap \breve{\Omega}_{\varkappa} \right), \epsilon \right\}  + \breve{p}_{\varkappa,0}\\
	&\overset{\text{Thm.~\ref{thm:PS22}}}{\leq}& \max \left\{ 2 e^{-\left[ \sqrt{z} / \varkappa - \varphi \right]^2 / \varkappa^2}, \epsilon \right\} + \breve{p}_{\varkappa,0}
\end{eqnarray*}
since the event~$\breve{\Omega}_{\varkappa}$ is independent of the iteration step~$k$.
\end{proof}

\begin{proof}[Proof of Theorem~\ref{thm:propagation2}]
Recall the event~$\breve{\Omega}_{\varkappa}$
in Corollary~\ref{cor:pKappa}, and let~$M^c$ denote the complement of the set~$M$. 
We construct a disjoint union
\begin{eqnarray*}
	\left[ \mathcal{B}^{(k_0)}(z) \right]^c 
	& = & \bigcup_{k = 0}^{k_0} \left( \left[ \mathcal{B}^{(k)}(z) \right]^c \cap \left[ \bigcap_{k'=0}^{k - 1} \mathcal{B}^{(k')}(z) \right] \right),
\end{eqnarray*}
where we set $\underset{k'=0}{\overset{k - 1}{\bigcap}} \mathcal{B}^{(k')}(z) := \Omega$ if $k=0$.
Then, we get
\begin{eqnarray}
	&&\mathbb{P} \left( \mathcal{B}^{(k_0)}(z) \vert M^{(k_0)}(z) \right)  \label{eq:Bkmu} \\
	&\geq & 1 - \frac{\left[ \breve{p}_{\varkappa,0}  
	+ \sum_{k = 0}^{k_0} \mathbb{P} \left( M^{(k)}(z) \cap \breve{\Omega}_{\varkappa} \cap \left[ \mathcal{B}^{(k)}(z) \right]^c 
	\cap \left[ \bigcap_{k'=0}^{k - 1} \mathcal{B}^{(k')}(z) \right] \right) \right] }{\mathbb{P} \left( M^{(k_0)}(z) \right) }, \nonumber
\end{eqnarray}
where we used that $M^{(k_0)}(z) \subseteq M^{(k)}(z)$ for $k \leq k_0$.
The choice of~$h^{(0)}$ ensures, for every $i \in \lbrace 1,...,n \rbrace$, that $U_i^{(0)} \setminus \mathcal{V}_{i} = \emptyset$.
Moreover, it holds $\tilde{\theta}_{i}^{(0)} = \overline{\theta}_{i}^{(0)}$ by the initialization of Algorithm~\ref{algorithmMS} (page~\pageref{algorithmMS}), and it follows in the same manner as in the proof of Proposition~\ref{prop:propagation1} that
\begin{eqnarray}
	\mathbb{P} \left( M^{(0)}(z) \cap \breve{\Omega}_{\varkappa} \cap \left[ \mathcal{B}^{(0)} (z) \right]^c \right) \label{eq:B0mu} 
	&\overset{\overline{n}_{i}^{(0)} = \overline{N}_{i}^{(0)}}{\leq} & 
	n \cdot \mathbb{P} \left( \overline{N}_{i}^{(0)} \mathcal{KL} ( \overline{\theta}_{i}^{(0)}, \theta_{i} ) > z \right) \\
	& \leq & 2 \, n e^{-\left[ \sqrt{z} / \varkappa - \varphi \right]^2 / \varkappa^2}. \nonumber
\end{eqnarray}
By definition of the events~$\mathcal{B}^{(k)}(z)$, $M^{(k)}(z)$, and~$\breve{\Omega}_{\varkappa}$,
the conditions of Proposition~\ref{prop:separationMS} are satisfied on the intersection
\[
	M^{(k)}(z) \cap \breve{\Omega}_{\varkappa} \cap \left[ \bigcap_{k'=0}^{k - 1} \mathcal{B}^{(k')}(z) \right]
\]
for all $k \in \lbrace 1,..., k_0 \rbrace$.
There, it follows that~$\tilde{w}_{ij}^{(k)} = 0$ for all $ X_{j} \notin U_{i}^{(k)} \cap \mathcal{V}_{i}$. 
Hence, smoothing is restricted to the neighborhood~$\mathcal{V}_{i}$, and 
$\mathbb{E} \left[ T(Y_j) \right] = \theta_{i}$ for every~$X_j$ with $\tilde{w}_{ij}^{(k)} > 0$. 
Then, we get with Proposition~\ref{prop:propagation1} that
\begin{eqnarray}
	&&\mathbb{P} \left( \lbrace \overline{n}_{i}^{(k)} \mathcal{KL} \left( \tilde{\theta}_{i}^{(k)}, \theta_{i} \right) > z \rbrace \cap M^{(k)}(z) \cap \breve{\Omega}_{\varkappa} \cap \left[ \bigcap_{k'=0}^{k - 1} \mathcal{B}^{(k')}(z) \right] \right) \nonumber \\
	&& \qquad 
	\leq \max \left\{ 2 e^{-\left[ \sqrt{z} / \varkappa - \varphi \right]^2 / \varkappa^2}, \epsilon \right\} \label{eq:locPropNeuMS3}
\end{eqnarray}
 for all~$k \in \lbrace 1,..., k_0 \rbrace $. 
Finally, Equations~\eqref{eq:Bkmu}, \eqref{eq:B0mu}, and~\eqref{eq:locPropNeuMS3} lead to
\begin{eqnarray*}
	\mathbb{P} \left( \mathcal{B}^{(k_0)}(z) \vert M^{(k_0)}(z) \right) 
	\geq 1 - \frac{ \breve{p}_{\varkappa,0} + (k_0+1) \max \left\{ 2 n e^{-\left[ \sqrt{z} / \varkappa - \varphi \right]^2 / \varkappa^2}, n \epsilon \right\} }{ \mathbb{P} \left( M^{(k_0)}(z) \right) }.
\end{eqnarray*}
This terminates the proof.
\end{proof}

\begin{proof}[Proof of Lemma~\ref{lem:funcP1}]
It holds
\begin{compactitem}
\item $T(Y) = \ln(Y) \sim \mathcal{N} (\mu, \sigma^2)$ if $Y \sim \mathrm{log} \mathcal{N} (\mu, \sigma^2)$ with $\sigma^2 > 0$; 
\item $T(Y) = Y^2 \sim \mathrm{Exp} \left( \frac{1}{2 \theta^2} \right)$ if $Y \sim \mathrm{Rayleigh} (\theta)$;
\item $T(Y) = Y^k\sim \mathrm{Exp} \left( \frac{1}{\theta^k} \right)$ if $Y \sim \mathrm{Weibull} (\theta, k)$ with $k > 0$;
\item $T(Y) = \ln \left( y / x_m \right) \sim \mathrm{Exp} \left( \frac{1}{\theta} \right)$ if $Y \sim \mathrm{Pareto} (x_m, \frac{1}{\theta})$ with $x_m \geq 1$.
\end{compactitem}
Hence, in each of these cases, the non-adaptive estimator follows the same distribution as for Gaussian or exponentially distributed observations. 
Additionally, it holds $\mathrm{Exp} (1/\theta) = \Gamma (1, \theta)$, $\mathrm{Erlang}(k, \theta) = \Gamma(k, \theta)$, and $Y \sim \Gamma \left( k/2, 2 \theta / k \right)$ if $kY/ \theta \sim \chi^2(k) = \Gamma \left(k/2, 2 \right) $, where $k \in \mathbb{N}$. 
Since the associated Kullback-Leibler divergences coincide, it suffices to show the assertion for the Gaussian and the Gamma distribution, which satisfy Assumption~\ref{A1} with $T = Id$. 
By Lemma~\ref{lem:A1}, the function $g_{\theta}(y) = \mathcal{KL} \left( y, \theta \right)$ fulfills, for every $y \in \mathcal{Y}$, that
\begin{equation}\label{eq:KL-monotonicity}
	\frac{d g_{\theta}}{dy} (y) = \left[ C(y) - C(\theta) \right],
\end{equation}
where we used that $B'(y) = y C'(y)$.
Due to the strict monotonicity of the function~$C$, $\lbrace Y > \theta \rbrace$ and $\lbrace Y \leq \theta \rbrace$ restrict the random variable $\mathcal{KL} \left( y, \theta \right)$ to its regions of monotonicity. 
On each of these regions the assertion follows from \citep[Ex. 4.3]{BecMat2013}.
\end{proof}

\begin{proof}[Proof of Lemma~\ref{lem:funcP2}]
The parametric family of probability distributions~$\mathcal{P}$ is known, and it characterizes, for every $\vartheta \in \Theta$, the function $g_{\vartheta}: \mathcal{Y} \to [0, \infty)$ given by $g_{\vartheta}(y) = \mathcal{KL} \left( T(y), \vartheta \right)$.
Moreover, the probability distribution of the observations $\mathcal{Y}_i \overset{\text{iid}}{\sim} \mathbb{P}_{\vartheta} \in \mathcal{P}$ determines the probability distribution of the random quantities~$\tilde{N}_i^{(k)}$ and~$\tilde{\vartheta}_i^{(k)}$ and consequently of the random variables $\lbrace \tilde{N}_i^{(k)} \mathcal{KL} ( \tilde{\vartheta}_i^{(k)}, \vartheta ) \rbrace_i$, on which the function~$\mathfrak{Z}_{\lambda}$ is based. 
The function~$\mathfrak{Z}_{\lambda}$ is invariant with respect to the parameter $\vartheta \in \Theta$ if and only if the homogeneous propagation condition is invariant with respect to~$\vartheta$.
Therefore, it suffices to show that, for every $\vartheta \in \Theta$, the functions~$p^{(l)}_{\vartheta}$ with $l=1,2,3$ allow exact reconstruction of~$\mathbb{P}_{\vartheta}$ via the inverse of~$g_{\vartheta}$.

If the sufficient statistic in Assumption~\ref{A1} satisfies $T = Id$, then it follows from Equation~\eqref{eq:KL-monotonicity} that the inverse~$g_{\vartheta}^{-1}$ has exactly one solution on $\lbrace y < \vartheta \rbrace$ and $\lbrace y > \vartheta \rbrace$, respectively, and it holds $g_{\vartheta}^{-1}(0) = \vartheta$.
If $T \neq Id$, then we get
\begin{eqnarray*}
	\frac{d g_{\vartheta}}{dy} (y) 
	&=& T'(y) \left[ C(T(y)) - C(\vartheta) \right],
\end{eqnarray*}
where the assumed strict monotonicity of~$T$ leads again to the regions of monotonicity $\lbrace T(y) > \vartheta \rbrace$ and $\lbrace T(y) \leq \vartheta \rbrace$. 
Furthermore, knowledge of~$p^{(l)}_{\vartheta}$ with $l=1,2,3$ yields knowledge of $\mathbb{P}(\lbrace T(\mathcal{Y}_i) > \vartheta \rbrace \cap \lbrace \mathcal{KL}(T(\mathcal{Y}_i), \vartheta) \leq z \rbrace)$, $i \in \lbrace 1,...,n \rbrace$.
Therefore, we can reconstruct~$\mathbb{P}_{\vartheta}$ for every~$\vartheta$ from~$p^{(l)}_{\vartheta}$, $l=1,2,3$, which leads to the assertion.
\end{proof}

For the proof of Proposition~\ref{prop:inhomPC}, we recall the following basic result.

\begin{Cor}\label{cor:basicBounds}
Let $a_1, a_2, b_1, b_2 \in \mathbb{R}$ satisfy $(a_1 - b_1) \cdot (a_2 - b_2) \geq 0$.
Then it holds
\[
	\vert a_1 - a_2 \vert \leq \left| \vert a_1 - b_1 \vert - \vert a_2 - b_2 \vert \right| + \vert b_1 - b_2 \vert.
\]
\end{Cor}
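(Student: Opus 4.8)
The plan is to reduce the statement to the elementary identity $\vert x_1 - x_2 \vert = \bigl\vert \vert x_1 \vert - \vert x_2 \vert \bigr\vert$, which holds precisely when $x_1$ and $x_2$ carry the same sign. First I would introduce the shorthand $x_1 := a_1 - b_1$ and $x_2 := a_2 - b_2$, so that the hypothesis becomes simply $x_1 x_2 \geq 0$. Writing $a_1 - a_2 = (a_1 - b_1) - (a_2 - b_2) + (b_1 - b_2) = (x_1 - x_2) + (b_1 - b_2)$ and applying the ordinary triangle inequality immediately yields $\vert a_1 - a_2 \vert \leq \vert x_1 - x_2 \vert + \vert b_1 - b_2 \vert$.

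It then remains to replace the middle term $\vert x_1 - x_2 \vert$ by $\bigl\vert \vert x_1 \vert - \vert x_2 \vert \bigr\vert$, and this is the only place where the sign condition is used. I would distinguish the two cases $x_1, x_2 \geq 0$ and $x_1, x_2 \leq 0$, which together exhaust the assumption $x_1 x_2 \geq 0$. In the first case $\vert x_1 \vert = x_1$ and $\vert x_2 \vert = x_2$, so both quantities equal $\vert x_1 - x_2 \vert$; in the second case $\vert x_1 \vert = -x_1$ and $\vert x_2 \vert = -x_2$, whence $\bigl\vert \vert x_1 \vert - \vert x_2 \vert \bigr\vert = \vert x_2 - x_1 \vert = \vert x_1 - x_2 \vert$ as well. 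Substituting back $\vert x_1 \vert = \vert a_1 - b_1 \vert$ and $\vert x_2 \vert = \vert a_2 - b_2 \vert$ then gives exactly the claimed bound.

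The argument is entirely routine, so there is no genuine obstacle; the one point worth stating carefully is that the product hypothesis is precisely what is needed to collapse $\vert x_1 - x_2 \vert$ to the difference of absolute values. In general the reverse triangle inequality gives only $\bigl\vert \vert x_1 \vert - \vert x_2 \vert \bigr\vert \leq \vert x_1 - x_2 \vert$, and it is the equality in the same-sign case that the corollary exploits; without the assumption $(a_1 - b_1)(a_2 - b_2) \geq 0$ the asserted inequality would fail.
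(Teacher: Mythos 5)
Your proof is correct and complete: the decomposition $a_1 - a_2 = (a_1 - b_1) - (a_2 - b_2) + (b_1 - b_2)$, the triangle inequality, and the case analysis showing $\vert x_1 - x_2 \vert = \bigl\vert \vert x_1 \vert - \vert x_2 \vert \bigr\vert$ whenever $x_1 x_2 \geq 0$ are all sound. Note that the paper itself offers no proof of this corollary --- it is merely ``recalled'' as a basic result before the proof of Proposition~\ref{prop:inhomPC} --- so your argument fills that gap with what is surely the intended reasoning, and your closing observation that the sign hypothesis is indispensable is right (for instance $a_1 = 1$, $a_2 = -1$, $b_1 = b_2 = 0$ violates the conclusion once the hypothesis is dropped).
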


\begin{proof}[Proof of Proposition~\ref{prop:inhomPC}]
Lemma~\ref{lem:PS52} provides on~$M_0$, for all $i,j \in \lbrace 1,..., n \rbrace$, that
\begin{eqnarray}\label{eq:KL-M2}
	\mathcal{KL} (Y_i, Y_j)
	\leq \varkappa^2 \left[ \mathcal{KL}^{1/2} \left( Y_i, \theta_i \right) + \mathcal{KL}^{1/2} \left( \theta_i, \theta_j \right) + \mathcal{KL}^{1/2} \left( Y_j, \theta_j \right) \right]^2.
\end{eqnarray}
On a certain set of realizations this upper bound can be improved by Corollary~\ref{cor:basicBounds}.
For this purpose, we distinguish the following sets
\begin{eqnarray}\label{eq:M1-M2}
	\quad M_1 := \lbrace (Y_i - \theta_i) \cdot (Y_j - \theta_j) \geq 0 \rbrace 
	\, \text{ and } \, 
	M_2 := \lbrace (Y_i - \theta_i) \cdot (Y_j - \theta_j) < 0 \rbrace 
\end{eqnarray}
and analogously in the homogeneous setting
\begin{eqnarray}\label{eq:M3-M4}
	\quad M_3 := \lbrace (\mathcal{Y}_i - \vartheta) \cdot (\mathcal{Y}_j - \vartheta) \geq 0 \rbrace 
	\, \text{ and } \,
	M_4 := \lbrace (\mathcal{Y}_i - \vartheta) \cdot (\mathcal{Y}_j - \vartheta) < 0 \rbrace.
\end{eqnarray}

Now we separately reduce the Kullback-Leibler divergence $\mathcal{KL} (Y_i, Y_j)$ on~$M_1$ and on~$M_2$ to appropriate terms which only depend on the divergences $\mathcal{KL} (Y_i, \theta_i)$ and $\mathcal{KL} (Y_j, \theta_j)$.
Then, the invariance of the functions~$p^{(l)}_{\theta}$, $l=1,2,3$, with respect to the parameter~$\theta$ allows a comparison with the homogeneous analogs, namely $\mathcal{KL} (\mathcal{Y}_i, \vartheta)$ and $\mathcal{KL} (\mathcal{Y}_j, \vartheta)$.
Due to the separate handling of the realizations on~$M_1$ and on~$M_2$, the resulting formulas can be reduced to the divergence $\mathcal{KL} \left( \mathcal{Y}_i, \mathcal{Y}_j \right)$, which will lead to the assertion.

On the set~$M_2$, we just use the upper bound~\eqref{eq:KL-M2}.
For the set~$M_1$, we get from Taylor's Theorem for all $\theta_1, \theta_2 \in \Theta$ the existence of a parameter $\theta^{\ast} \in \Theta$ between~$\theta_1$ and~$\theta_2$ which satisfies
\begin{equation}\label{eq:KL-Taylor}
	\mathcal{KL} \left( \theta_1, \theta_2 \right) 
	= \tfrac{1}{2 I(\theta^{\ast})} \left[ C(\theta_1) - C(\theta_2) \right]^2.
\end{equation}
Therefore, on~$M_1 \cap M_0$ it follows from Corollary~\ref{cor:basicBounds} and the monotonicity of the function~$C$ that
\begin{eqnarray}
	\mathcal{KL} (Y_i, Y_j) 
	& \leq & \frac{1}{2 I(\theta^{\ast})} \left[ \left| \vert C(Y_i) - C(\theta_i) \vert - \vert C(Y_j) - C(\theta_j) \vert \right| + \vert C(\theta_i) - C(\theta_j) \vert \right]^2 \nonumber \\
	& \overset{\text{Eq.~\eqref{eq:varkappa}}}{\leq} & \varkappa^2 \left[ \left| \mathcal{KL}^{1/2} \left( Y_i, \theta_i \right) - \mathcal{KL}^{1/2} \left( Y_j, \theta_j \right) \right| + \mathcal{KL}^{1/2} \left(  \theta_i, \theta_j \right) \right]^2. \label{eq:KL-M1}
\end{eqnarray}
Then, using the invariance of~$p^{(l)}_{\theta}$, $l=1,2,3$, with respect to the parameter~$\theta$ and $\mathcal{KL} (\theta_i, \theta_j) \leq \varphi_0^2$ for all $i,j \in \lbrace 1,...,n \rbrace$, we can deduce that
\begin{eqnarray*}
	&&\mathbb{P} \left( \lbrace \mathcal{KL} (Y_i, Y_j) > z \rbrace \cap M_0 \right) \\
	& \leq & \mathbb{P}\left( M_1 \cap M_0 \cap \left\{ \left| \mathcal{KL}^{1/2} \left( Y_i, \theta_i \right) - \mathcal{KL}^{1/2} \left( Y_j, \theta_j \right) \right|^2 > \left[ \sqrt{z}/ \varkappa - \varphi_0 \right]^2 \right\} \right) \\
	&& + \, \mathbb{P}\left( M_2 \cap M_0 \cap \left\{ \left[ \mathcal{KL}^{1/2} \left( Y_i, \theta_i \right) + \mathcal{KL}^{1/2} \left( Y_j, \theta_j \right) \right]^2 > \left[ \sqrt{z} / \varkappa - \varphi_0 \right]^2 \right\} \right) \\
	& = & \mathbb{P}\left( M_3 \cap M_0 \cap \left\{ \left[ \mathcal{KL}^{1/2} \left( \mathcal{Y}_i, \vartheta \right) - \mathcal{KL}^{1/2} \left(  \mathcal{Y}_j, \vartheta \right) \right]^2
	> \left[ \sqrt{z}/ \varkappa - \varphi_0 \right]^2 \right\} \right) \\
	&& + \, \mathbb{P}\left( M_4 \cap M_0 \cap \left\{ \left[ \mathcal{KL}^{1/2} \left( \mathcal{Y}_i, \vartheta \right) + \mathcal{KL}^{1/2} \left( \mathcal{Y}_j, \vartheta \right) \right]^2 > \left[ \sqrt{z} / \varkappa - \varphi_0 \right]^2 \right\} \right).
\end{eqnarray*}
Equation~\eqref{eq:KL-Taylor} leads on~$M_3 \cap M_0$ with appropriate parameters $\vartheta_1^{\ast}, \vartheta_2^{\ast} \in \Theta_{\varkappa}$ to
\begin{eqnarray*}
	&& \left[ \mathcal{KL}^{1/2} \left( \mathcal{Y}_i, \vartheta \right) - \mathcal{KL}^{1/2} \left(  \mathcal{Y}_j, \vartheta \right) \right]^2 \\
	& \overset{M_3}{\leq} & \max \left\{ \frac{1}{2 I(\vartheta_1^{\ast})}, \frac{1}{2 I(\vartheta_2^{\ast})} \right\} \left[ \left( C(\mathcal{Y}_i) - C(\vartheta) \right) - \left( C(\mathcal{Y}_j) - C(\vartheta) \right) \right]^2 \\
	& \overset{\text{Eq.~\eqref{eq:varkappa}}}{\leq} & \varkappa^2 \mathcal{KL} \left( \mathcal{Y}_i, \mathcal{Y}_j \right).
\end{eqnarray*}
On~$M_4 \cap M_0$, we get in a uniform manner
\begin{eqnarray*}
	\left[ \mathcal{KL}^{1/2} \left( \mathcal{Y}_i, \vartheta \right) + \mathcal{KL}^{1/2} \left( \mathcal{Y}_j, \vartheta \right) \right]^2
	\overset{\text{Eq.~\eqref{eq:varkappa}}}{\leq} \varkappa^2 \mathcal{KL} \left( \mathcal{Y}_i, \mathcal{Y}_j \right).
\end{eqnarray*}
Hence, we conclude that
\begin{eqnarray*}
	\mathbb{P}\left( \lbrace \mathcal{KL} (Y_i, Y_j) > z \rbrace \cap M_0 \right)
	& \leq & \mathbb{P}\left( \left\{ \varkappa^{2} \mathcal{KL} \left( \mathcal{Y}_i, \mathcal{Y}_j \right) > \left[ \sqrt{z} / \varkappa - \varphi_0 \right]^2 \right\} \cap M_0 \right) \\
	&=& \mathbb{P}\left( \left\{ \varkappa^2 \left[ \varkappa \, \mathcal{KL}^{1/2} \left( \mathcal{Y}_i, \mathcal{Y}_j \right) + \varphi_0 \right]^2 > z \right\} \cap M_0 \right).
\end{eqnarray*}
This terminates the proof.
\end{proof}

\begin{proof}[Justification of Claim~\ref{claim}]
Recall Notation~\ref{not:inhomVsHom}.
For simplicity, we concentrate on  the case $T = \mathrm{Id}$ as presumed in Proposition~\ref{prop:inhomPC}.
For $T \neq Id$, the assertion follows just as for $T = Id$, replacing in the proof of Proposition~\ref{prop:inhomPC} and in the following formulas the observations~$Y_i$ and~$\mathcal{Y}_i$ by the transformed observations~$T(Y_i)$ and~$T(\mathcal{Y}_i)$ for all $i \in \lbrace 1,...,n \rbrace$.
Recall from the proof of Lemma~\ref{lem:funcP2} that the assumed strict monotonicity of~$T$ ensures that the regions of monotonicity in Equation~\eqref{eq:KL-monotonicity} remain valid.

We know from Theorems~\ref{thm:PS21} and~\ref{thm:PS22} that the Kullback-Leibler divergence between the non-adaptive estimator and its expectation converges, in probability, at least with rate~$\overline{N}_i^{(k)}$ under homogeneity and under inhomogeneity. 
The choice of~$\lambda$ is in accordance with the homogeneous propagation condition by assumption. 
Hence, it compensates with probability $1 - \epsilon$ the impact of the adaptivity under homogeneity, and it only depends on the functions~$p^{(l)}_{\vartheta}$, $l=1,2,3$, see Lemma~\ref{lem:funcP2}.
Due to the assumed invariance of~$p^{(l)}_{\vartheta}$ with respect to~$\vartheta$, it holds $p^{(l)}_{\theta_i} = p^{(l)}_{\vartheta}$ for every $l=1,2,3$ and all $i \in \lbrace 1,...,n \rbrace$. 
Therefore, it suffices to increase the homogeneous bandwidth~$\lambda$ pursuant to the maximal impact of the local variability of~$\theta(.)$, but independent of the precise definition of~$\theta(.)$.

The latter effects the interplay of the observations and hence the adaptive weights, where we consider the random variables~$\lbrace s_{ij}^{(k)} \rbrace_{i,j}$, see Algorithm~\ref{algorithmMS} (page~\pageref{algorithmMS}). 
Proposition~\ref{prop:inhomPC} provides, on the set~$M_0$, an upper bound for the augmentation of the random variable $\mathcal{KL}\left( Y_i, Y_j \right)$ compared to $\mathcal{KL}(\mathcal{Y}_i, \mathcal{Y}_j)$.
It justifies the given choice of~$\lambda_{\varphi}$ for the iteration step $k=1$ if~$h^{(0)}$ satisfies $\overline{w}_{ij}^{(0)} = 0$ for all $X_i \neq X_j$. 
We seek for a generalization to other choices of~$h^{(0)}$ and the subsequent iteration steps. 

The adaptive estimator is defined as a weighted mean of the observations. 
Therefore, for all $k \in \lbrace 0,...,k^* \rbrace$, it holds
$M_0 \subseteq \lbrace \tilde{\theta}_i^{(k)}, \tilde{\vartheta}_i^{(k)} \in \Theta_{\varkappa}, i \in \lbrace 1,..., n \rbrace \rbrace \rbrace$,
where~$M_0$ is as in Proposition~\ref{prop:inhomPC}.
This enables on~$M_0$ the application of Equation~\eqref{eq:varkappa} (page~\pageref{eq:varkappa}) and Lemma~\ref{lem:PS52} with respect to the adaptive estimates.
We distinguish the same cases as in the proof of Proposition~\ref{prop:inhomPC}, recall Equations~\eqref{eq:M1-M2} and~\eqref{eq:M3-M4} and the corresponding upper bounds in Equations~\eqref{eq:KL-M2} and \eqref{eq:KL-M1}.
For the sake of brevity, we summarize both cases in one equation, using the operation~$\pm$.
Then, we get on the set~$M_0$ in a uniform manner as in the proof of Proposition~\ref{prop:inhomPC} that
\begin{eqnarray}
	s_{ij}^{(k)} & \leq & \varkappa^2 \tilde{N}_i^{(k-1)} \left[ \left| \mathcal{KL}^{1/2} \left( \tilde{\theta}_i^{(k-1)}, \mathcal{E} \tilde{\theta}_i^{(k-1)} \right) 
	\pm \mathcal{KL}^{1/2} \left( \tilde{\theta}_j^{(k-1)}, \mathcal{E} \tilde{\theta}_j^{(k-1)} \right) \right|
	 \right. \nonumber \\ 
	 && \left.
	+ \, \mathcal{KL}^{1/2} \left( \mathcal{E} \tilde{\theta}_i^{(k-1)}, \mathcal{E} \tilde{\theta}_j^{(k-1)} \right) \right]^2, \label {eq:sijZerlegung}
\end{eqnarray}
where~$\mathcal{E} \tilde{\theta}_i^{(k)}$ is as in Notation~\ref{not:mathcalE}.
The variability of the parameter function~$\theta(.)$ effects the third summand, which satisfies by Equation~\eqref{eq:varkappa} (page~\pageref{eq:varkappa}) and the convexity of the Kullback-Leibler divergence with respect to the first argument that
\[
	\max_{i,j} \mathcal{KL}\left( \mathcal{E} \tilde{\theta}_i^{(k-1)}, \mathcal{E} \tilde{\theta}_j^{(k-1)} \right)
	\leq \varkappa^2 \max_{i,j} \mathcal{KL} \left( \theta_i, \theta_j \right)
	\leq \varkappa^2 \, \varphi_0^2
\]
with $\varphi_0 = \varphi / \max_i \sqrt{\overline{N}_i^{(k^*)}}$.
The remaining term 
\begin{equation}\label{eq:sijZerl1}
	\sqrt{\tilde{N}_i^{(k-1)}} \left| \mathcal{KL}^{1/2} \left( \tilde{\theta}_i^{(k-1)}, \mathcal{E} \tilde{\theta}_i^{(k-1)} \right) 
	\pm \mathcal{KL}^{1/2} \left( \tilde{\theta}_j^{(k-1)}, \mathcal{E} \tilde{\theta}_j^{(k-1)} \right) \right| 
\end{equation}
forms the inhomogeneous analog of 
\begin{equation}\label{eq:sijZerl2}
	\sqrt{\tilde{N}_i^{(k-1)}} \left| \mathcal{KL}^{1/2} \left( \tilde{\vartheta}_i^{(k-1)}, \vartheta \right) 
	\pm \mathcal{KL}^{1/2} \left( \tilde{\vartheta}_j^{(k-1)}, \vartheta \right) \right|.
\end{equation}
However, the corresponding probability distributions cannot be compared as for the single observations since the probability distributions of~$\tilde{\vartheta}_l^{(k-1)}$ and~$\tilde{\theta}_l^{(k-1)}$, $l=i,j$, may differ considerably.
Nevertheless, it follows in the same lines as at the end of the proof of Proposition~\ref{prop:inhomPC} that
\begin{eqnarray*}
	\left| \mathcal{KL}^{1/2} \left( \tilde{\vartheta}_i^{(k-1)}, \vartheta \right) 
	\pm \mathcal{KL}^{1/2} \left( \tilde{\vartheta}_j^{(k-1)}, \vartheta \right) \right|
	\leq \varkappa \, \mathcal{KL}^{1/2} \left( \tilde{\vartheta}_i^{(k-1)}, \tilde{\vartheta}_j^{(k-1)} \right).
\end{eqnarray*}
Hence, Equation~\eqref{eq:sijZerl2} is controlled by~$\sqrt{\lambda}$, up to the factor~$\varkappa$.
Similarly, Equation~\eqref{eq:sijZerl1} mainly depends on the randomness of the observations. 
Admittedly, this cannot be proven due to the impact of the adaptive weights which are influenced by the variability of the inhomogeneous parameter function. 

Instead, we follow an inductive argumentation, considering the relation to the non-adaptive estimator.
The initialization of the algorithm by the non-adaptive estimator serves as the base clause.
Assuming that the adaptive weights in iteration step~$k$ are, with high probability, similar to the non-adaptive ones, we get that the divergence $\mathcal{KL} ( \tilde{\theta}_i^{(k)}, \mathcal{E} \tilde{\theta}_i^{(k)} )$ behaves similar to $\mathcal{KL} ( \overline{\theta}_i^{(k)}, \mathbb{E} \overline{\theta}_i^{(k)} )$. 
Additionally, we know from Theorems~\ref{thm:PS21} and~\ref{thm:PS22} that $\mathcal{KL} ( \overline{\vartheta}_i^{(k)}, \vartheta )$ and $\mathcal{KL} ( \overline{\theta}_i^{(k)}, \mathbb{E} \overline{\theta}_i^{(k)} )$ satisfy, in probability, the same rate of convergence.
The divergence $\mathcal{KL} ( \overline{\vartheta}_i^{(k)}, \vartheta )$ relates via the homogeneous propagation condition to the divergence $\mathcal{KL} ( \tilde{\vartheta}_i^{(k)}, \vartheta )$ and, as a consequence, to Equation~\eqref{eq:sijZerl2}, which we controlled by the constant~$\varkappa \, \sqrt{\lambda}$. 
This motivates together with Proposition~\ref{prop:inhomPC} and 
the invariance of the functions~$p^{(l)}_{\vartheta}$, $l=1,2,3$, with respect to $\theta \in \Theta$, 
the supposition that the impact of the variability of the parameter function on Equation~\eqref{eq:sijZerl1} is sufficiently small such that~$\varkappa \, \sqrt{\lambda}$ can still control it.
Then, we may conclude that the choice
\[
	\lambda_{\varphi} \geq \varkappa^4 \left[ \sqrt{\lambda} + \varphi \right]^2
\]
ensures in the next iteration step~$k+1$ the similarity of the adaptive and the non-adaptive weights,
yielding on~$M_0$ the desired behavior of~$\hat{\mathfrak{Z}}_{\lambda_{\varphi}}$.
The restriction to the set~$M_0$ leads to an increased probability level of $\epsilon + 2 p_{\varkappa}$ since
$\mathbb{P}(M_0^c) \leq 2 p_{\varkappa}$.
\end{proof}

\phantomsection \label{proof:PS22}
\begin{proof}[Proof of Theorem~\ref{thm:PS22}]
Let $q(u|v) = \mathcal{KL} (v, v+u)$ be as in Theorem~\ref{thm:PS61}.
The reparametrization $v = C(\theta)$ and $D(v) = B(\theta)$ yields
$D'(v) = \theta$, $D''(v) = 1/I(\theta)$, and
\begin{equation}\label{eq:KL(v)}
	\mathcal{KL} \left( \mathbb{P}_{v_1}, \mathbb{P}_{v_2} \right) 
	= D'(v_1) \left[ v_1 - v_2 \right] - \left[ D(v_1) - D(v_2) \right].
\end{equation}
This provides with the Taylor expansion that
\[
	q(u | v)
	= D( v + u ) - D(v) - u D'(v)
	= u^2 D''( v+ c u ) / 2,
\]
where the remainder is in Lagrange form, and $c \in [0,1]$ is chosen appropriately.
We set $\alpha := \varkappa^2 - 1$, and recall that $\overline{w}_{ij}^2 \leq \overline{w}_{ij}$ since $\overline{w}_{ij} \in [0,1]$.
For $c_1, c_2 \in [0,1]$ appropriate and all $i, j \in \lbrace 1,..., n \rbrace$, this yields condition~\eqref{eq:expBound}  in Theorem~\ref{thm:PS61} via
\begin{eqnarray*}
	q( \mu u \overline{w}_{ij} | v_j )
	& = & (\mu u \overline{w}_{ij})^2 D''(v_j + c_1 \mu u \overline{w}_{ij}) / 2 \\
	& \overset{\text{Eq.~\eqref{eq:varkappa}}}{\leq} & \mu^2 \overline{w}_{ij}^2 \varkappa^2 u^2 D''(\breve{v}_i + c_2 u) / 2 \quad
	\leq (1 + \alpha) \mu^2 \overline{w}_{ij} q( u | \breve{v}_i )
\end{eqnarray*}
if $C^{-1}(v_j + c_1 \overline{w}_{ij} \mu u), C^{-1}(\breve{v}_i + c_2 u) \in \Theta_{\varkappa}$.
The function~$C$ is strictly monotonic increasing, and the expectation satisfies $\breve{v}_i \in [ \min_j v_j, \max_j v_j ]$.
It holds by assumption that $C^{-1} (v_i) \in \Theta_{\varkappa}$ for all $i \in \lbrace 1,...,n \rbrace$ and, as a consequence, $C^{-1} (\breve{v}_i) \in \Theta_{\varkappa}$.
Therefore, it suffices to ensure that $C^{-1} (v_j + u) \in \Theta_{\varkappa}$ for all $j \in \lbrace 1,...,n \rbrace$ and $u \in \mathcal{U} (\overline{W}_i, z)$.
The assertion of Theorem~\ref{thm:PS61} remains valid if condition~\eqref{eq:expBound} is only satisfied for $u \in \mathcal{U} (\overline{W}_i, z)$ with $u := \overline{v}_i - \breve{v}_i$.
Hence, we restrict our analysis to the favorable realizations, where $C^{-1} (v_j + \overline{v}_i - \breve{v}_i) \in \Theta_{\varkappa}$ for all $i,j \in \lbrace 1,...,n \rbrace$ and some most favorable subset $\Theta_{\varkappa} \subseteq \Theta$.
The probability of the complementary set of realizations is bounded by the probability~$\breve{p}_{\varkappa}$ in Notation~\ref{not:pKappa2}, and we get by Theorem~\ref{thm:PS61} that
\[
	\mathbb{P} \left( \overline{N}_i \mathcal{KL}(\overline{\theta}_i, \mathbb{E} \overline{\theta}_i) > z \right) 
	\leq 2 e^{-z/\varkappa^2} + \breve{p}_{\varkappa},
\]
which leads to the assertion.
\end{proof}

\section*{Acknowledgements}

This work was partially supported by the 
Stiftung der Deutschen Wirtschaft (SDW). The author would like to thank Peter Math\'{e}, J\"{o}rg Polzehl, and Karsten Tabelow 
(WIAS Berlin) for helpful discussions.

\bibliographystyle{plainnat}
\bibliography{PS06-modelmiss}

\end{document}